\journal{Journal of \LaTeX\ Templates}
\tikzset{myarrow/.style={arrows={-Latex[round,bend]}}}
\pgfplotsset{compat=1.16}
\definecolor{plotclblue}{RGB}{57,106,180}
\colorlet{plotclblue}{black!20!plotclblue}
\definecolor{plotclorange}{RGB}{218,124,48}
\colorlet{plotclorange}{black!10!plotclorange}
\definecolor{plotclgreen}{RGB}{62,150,81}
\colorlet{plotclgreen}{black!10!plotclgreen}
\definecolor{plotclred}{RGB}{204,37,41}
\colorlet{plotclred}{black!10!plotclred}
\definecolor{plotclblack}{RGB}{83,81,84}
\colorlet{plotclblack}{black!10!plotclblack}
\definecolor{plotclviolet}{RGB}{107,76,154}
\colorlet{plotclviolet}{black!10!plotclviolet}
\definecolor{plotclredbrown}{RGB}{146,36,40}
\colorlet{plotclredbrown}{black!10!plotclredbrown}
\definecolor{plotclocer}{RGB}{148,139,61}
\colorlet{plotclocer}{black!10!plotclocer}
\newtheorem{theorem}{Theorem}
\newtheorem*{theoremnonumber}{Theorem}
\newtheorem{corollary}{Corollary}
\newtheorem{lemma}{Lemma}
\theoremstyle{definition}
\newtheorem{definition}{Definition}
\theoremstyle{remark}
\newtheorem{remark}{Remark}
\renewcommand{\Re}[0]{\ensuremath{\mathop{\mathrm{Re}}}}
\renewcommand{\Im}[0]{\ensuremath{\mathop{\mathrm{Im}}}}
\def\e{\mathop{\mathrm{e}}\nolimits}
\newcommand{\abstrSet}{\mathcal{U}}
\newcommand{\bbset}[1]{\mathbb{#1}}
\newcommand{\N}[0]{\bbset{N}}      
\newcommand{\Z}[0]{\bbset{Z}}      
\newcommand{\R}[0]{\bbset{R}}      
\newcommand{\C}[0]{\bbset{C}}      
\newcommand{\Rc}[0]{\bbset{R}_c}   
\newcommand{\Cc}[0]{\bbset{C}_c}      
\newcommand{\Cell}[0]{\mathcal{C}\ell}  
\newcommand{\Cczero}[0]{\mathcal{C}\ell^{\infty}}  
\newcommand{\Bs}[0]{\mathcal{B}}      
\newcommand{\CBs}[0]{\mathcal{C}\mathcal{B}}      
\newcommand{\iu}[0]{i}                
\newcommand{\di}[1]{\;\mathrm{d}#1}
\newcommand{\spacedot}[0]{\,\cdot\,}      
\DeclareMathOperator{\sinc}{sinc}            
\DeclareMathOperator*{\esssup}{ess\,sup}
\newcommand{\Aset}[0]{\mathcal{A}}
\newcommand{\rR}[0]{|_{\R}}  
\newcommand{\TM}[0]{\text{TM}}
\newcommand{\sq}[1]{(#1)}
\newcommand{\stdR}{\mathscr{R}}
\newcommand{\stdC}{\mathscr{C}}
\newcommand{\img}{\mathrm{img}}
\newcommand{\SOp}[1]{S_{#1}}
\newcommand{\TOp}[1]{T_{#1}}
\newcommand{\TMSOp}[1]{\TM_{S}^{#1}}
\newcommand{\TMTOp}[1]{\TM_{T}^{#1}}
	\tikzset{sr-ff/.style={flipflop, flipflop def={
		t1=S, t2=CP, t3=R, t4={\ctikztextnot{Q}},
		t6=Q, nd=1}},
		}
\newcounter{x}
\newcounter{y}
\newcounter{z}
\newcommand\xaxis{210}
\newcommand\yaxis{-30}
\newcommand\zaxis{90}
\newcommand\topsideG[3]{
  \fill[fill=black!60!green, draw=black,shift={(\xaxis:#1)},shift={(\yaxis:#2)},
  shift={(\zaxis:#3)}] (0,0) -- (30:1) -- (0,1) --(150:1)--(0,0);
}
\newcommand\leftsideG[3]{
  \fill[fill=black!40!green, draw=black,shift={(\xaxis:#1)},shift={(\yaxis:#2)},
  shift={(\zaxis:#3)}] (0,0) -- (0,-1) -- (210:1) --(150:1)--(0,0);
}
\newcommand\rightsideG[3]{
  \fill[fill=black!20!green, draw=black,shift={(\xaxis:#1)},shift={(\yaxis:#2)},
  shift={(\zaxis:#3)}] (0,0) -- (30:1) -- (-30:1) --(0,-1)--(0,0);
}
\newcommand\cubeG[3]{
  \topsideG{#1}{#2}{#3} \leftsideG{#1}{#2}{#3} \rightsideG{#1}{#2}{#3}
}
\newcommand\planepartitionG[1]{
 \setcounter{x}{-1}
  \foreach \a in {#1} {
    \addtocounter{x}{1}
    \setcounter{y}{-1}
    \foreach \b in \a {
      \addtocounter{y}{1}
      \setcounter{z}{-1}
      \foreach \c in {1,...,\b} {
        \addtocounter{z}{1}
        \cubeG{\value{x}}{\value{y}}{\value{z}}
      }
    }
  }
}
\newcommand\topsideR[3]{
  \fill[fill=black!60!red, draw=black,shift={(\xaxis:#1)},shift={(\yaxis:#2)},
  shift={(\zaxis:#3)}] (0,0) -- (30:1) -- (0,1) --(150:1)--(0,0);
}
\newcommand\leftsideR[3]{
  \fill[fill=black!40!red, draw=black,shift={(\xaxis:#1)},shift={(\yaxis:#2)},
  shift={(\zaxis:#3)}] (0,0) -- (0,-1) -- (210:1) --(150:1)--(0,0);
}
\newcommand\rightsideR[3]{
  \fill[fill=black!20!red, draw=black,shift={(\xaxis:#1)},shift={(\yaxis:#2)},
  shift={(\zaxis:#3)}] (0,0) -- (30:1) -- (-30:1) --(0,-1)--(0,0);
}
\newcommand\cubeR[3]{
  \topsideR{#1}{#2}{#3} \leftsideR{#1}{#2}{#3} \rightsideR{#1}{#2}{#3}
}
\newcommand\planepartitionR[1]{
 \setcounter{x}{-1}
  \foreach \a in {#1} {
    \addtocounter{x}{1}
    \setcounter{y}{-1}
    \foreach \b in \a {
      \addtocounter{y}{1}
      \setcounter{z}{-1}
      \foreach \c in {1,...,\b} {
        \addtocounter{z}{1}
        \cubeR{\value{x}}{\value{y}}{\value{z}}
      }
    }
  }
}
\newcommand\topsideB[3]{
  \fill[fill=black!60!blue, draw=black,shift={(\xaxis:#1)},shift={(\yaxis:#2)},
  shift={(\zaxis:#3)}] (0,0) -- (30:1) -- (0,1) --(150:1)--(0,0);
}
\newcommand\leftsideB[3]{
  \fill[fill=black!40!blue, draw=black,shift={(\xaxis:#1)},shift={(\yaxis:#2)},
  shift={(\zaxis:#3)}] (0,0) -- (0,-1) -- (210:1) --(150:1)--(0,0);
}
\newcommand\rightsideB[3]{
  \fill[fill=black!20!blue, draw=black,shift={(\xaxis:#1)},shift={(\yaxis:#2)},
  shift={(\zaxis:#3)}] (0,0) -- (30:1) -- (-30:1) --(0,-1)--(0,0);
}
\newcommand\cubeB[3]{
  \topsideB{#1}{#2}{#3} \leftsideB{#1}{#2}{#3} \rightsideB{#1}{#2}{#3}
}
\newcommand\planepartitionB[1]{
 \setcounter{x}{-1}
  \foreach \a in {#1} {
    \addtocounter{x}{1}
    \setcounter{y}{-1}
    \foreach \b in \a {
      \addtocounter{y}{1}
      \setcounter{z}{-1}
      \foreach \c in {1,...,\b} {
        \addtocounter{z}{1}
        \cubeB{\value{x}}{\value{y}}{\value{z}}
      }
    }
  }
}
\newcommand\topsideGO[3]{
  \fill[fill=black!60!green!35!white, draw=black,shift={(\xaxis:#1)},shift={(\yaxis:#2)},
  shift={(\zaxis:#3)}] (0,0) -- (30:1) -- (0,1) --(150:1)--(0,0);
}
\newcommand\leftsideGO[3]{
  \fill[fill=black!40!green!35!white, draw=black,shift={(\xaxis:#1)},shift={(\yaxis:#2)},
  shift={(\zaxis:#3)}] (0,0) -- (0,-1) -- (210:1) --(150:1)--(0,0);
}
\newcommand\rightsideGO[3]{
  \fill[fill=black!20!green!35!white, draw=black,shift={(\xaxis:#1)},shift={(\yaxis:#2)},
  shift={(\zaxis:#3)}] (0,0) -- (30:1) -- (-30:1) --(0,-1)--(0,0);
}
\newcommand\cubeGO[3]{
  \topsideGO{#1}{#2}{#3} \leftsideGO{#1}{#2}{#3} \rightsideGO{#1}{#2}{#3}
}
\newcommand\planepartitionGO[1]{
 \setcounter{x}{-1}
  \foreach \a in {#1} {
    \addtocounter{x}{1}
    \setcounter{y}{-1}
    \foreach \b in \a {
      \addtocounter{y}{1}
      \setcounter{z}{-1}
      \foreach \c in {1,...,\b} {
        \addtocounter{z}{1}
        \cubeGO{\value{x}}{\value{y}}{\value{z}}
      }
    }
  }
}
\newcommand\topsideRO[3]{
  \fill[fill=black!60!red!35!white, draw=black,shift={(\xaxis:#1)},shift={(\yaxis:#2)},
  shift={(\zaxis:#3)}] (0,0) -- (30:1) -- (0,1) --(150:1)--(0,0);
}
\newcommand\leftsideRO[3]{
  \fill[fill=black!40!red!35!white, draw=black,shift={(\xaxis:#1)},shift={(\yaxis:#2)},
  shift={(\zaxis:#3)}] (0,0) -- (0,-1) -- (210:1) --(150:1)--(0,0);
}
\newcommand\rightsideRO[3]{
  \fill[fill=black!20!red!35!white, draw=black,shift={(\xaxis:#1)},shift={(\yaxis:#2)},
  shift={(\zaxis:#3)}] (0,0) -- (30:1) -- (-30:1) --(0,-1)--(0,0);
}
\newcommand\cubeRO[3]{
  \topsideRO{#1}{#2}{#3} \leftsideRO{#1}{#2}{#3} \rightsideRO{#1}{#2}{#3}
}
\newcommand\planepartitionRO[1]{
 \setcounter{x}{-1}
  \foreach \a in {#1} {
    \addtocounter{x}{1}
    \setcounter{y}{-1}
    \foreach \b in \a {
      \addtocounter{y}{1}
      \setcounter{z}{-1}
      \foreach \c in {1,...,\b} {
        \addtocounter{z}{1}
        \cubeRO{\value{x}}{\value{y}}{\value{z}}
      }
    }
  }
}
\newcommand\topsideBO[3]{
  \fill[fill=black!60!blue!35!white, draw=black,shift={(\xaxis:#1)},shift={(\yaxis:#2)},
  shift={(\zaxis:#3)}] (0,0) -- (30:1) -- (0,1) --(150:1)--(0,0);
}
\newcommand\leftsideBO[3]{
  \fill[fill=black!40!blue!35!white, draw=black,shift={(\xaxis:#1)},shift={(\yaxis:#2)},
  shift={(\zaxis:#3)}] (0,0) -- (0,-1) -- (210:1) --(150:1)--(0,0);
}
\newcommand\rightsideBO[3]{
  \fill[fill=black!20!blue!35!white, draw=black,shift={(\xaxis:#1)},shift={(\yaxis:#2)},
  shift={(\zaxis:#3)}] (0,0) -- (30:1) -- (-30:1) --(0,-1)--(0,0);
}
\newcommand\cubeBO[3]{
  \topsideBO{#1}{#2}{#3} \leftsideBO{#1}{#2}{#3} \rightsideBO{#1}{#2}{#3}
}
\newcommand\planepartitionBO[1]{
 \setcounter{x}{-1}
  \foreach \a in {#1} {
    \addtocounter{x}{1}
    \setcounter{y}{-1}
    \foreach \b in \a {
      \addtocounter{y}{1}
      \setcounter{z}{-1}
      \foreach \c in {1,...,\b} {
        \addtocounter{z}{1}
        \cubeBO{\value{x}}{\value{y}}{\value{z}}
      }
    }
  }
}
\newcommand\topsideBWO[3]{
  \fill[fill=black!60!white!35!white, draw=black,shift={(\xaxis:#1)},shift={(\yaxis:#2)},
  shift={(\zaxis:#3)}] (0,0) -- (30:1) -- (0,1) --(150:1)--(0,0);
}
\newcommand\leftsideBWO[3]{
  \fill[fill=black!40!white!35!white, draw=black,shift={(\xaxis:#1)},shift={(\yaxis:#2)},
  shift={(\zaxis:#3)}] (0,0) -- (0,-1) -- (210:1) --(150:1)--(0,0);
}
\newcommand\rightsideBWO[3]{
  \fill[fill=black!20!white!35!white, draw=black,shift={(\xaxis:#1)},shift={(\yaxis:#2)},
  shift={(\zaxis:#3)}] (0,0) -- (30:1) -- (-30:1) --(0,-1)--(0,0);
}
\newcommand\cubeBWO[3]{
  \topsideBWO{#1}{#2}{#3} \leftsideBWO{#1}{#2}{#3} \rightsideBWO{#1}{#2}{#3}
}
\newcommand\planepartitionBWO[1]{
 \setcounter{x}{-1}
  \foreach \a in {#1} {
    \addtocounter{x}{1}
    \setcounter{y}{-1}
    \foreach \b in \a {
      \addtocounter{y}{1}
      \setcounter{z}{-1}
      \foreach \c in {1,...,\b} {
        \addtocounter{z}{1}
        \cubeBWO{\value{x}}{\value{y}}{\value{z}}
      }
    }
  }
}
	\newcommand{\PlotContinuous}{	\begin{tikzpicture}	\begin{axis}	[	declare function=	{	sinchalf(\x) = sin(deg(pi * \x / 3)) / (pi * \x / 3);
																																											sinc(\x,\o) = sin(deg(pi * (\x - \o))) / (pi * (\x - \o));
																																										},
																																	height=0.7\linewidth,
																																	width=1.05\linewidth,
																																	xmin=-9,
																																	xmax=9,
																																	ymin=-0.25,
																																	ymax=1.1,
																																	axis line style={shorten >=-10pt, shorten <=-10pt},
																																	axis y line=center,
																																	axis x line=middle,
																																	ticks = none
																																]
																																\addplot	[name path=plot1,very thick,plotclred,domain=-9:9,samples=204]
																																					{{	sinchalf(x)}};
																																\addplot	[semithick, densely dotted,plotclblue,domain=-9:9,samples=204]
																																					{{	sinc(x,0)}};
																																\addplot	[semithick, densely dotted,plotclblue,domain=-9:9,samples=204]
																																					{{	sinchalf(1) * sinc(x,1)}};
																																\addplot	[semithick, densely dotted,plotclblue,domain=-9:9,samples=204]
																																					{{	sinchalf(2) * sinc(x,2)}};
																																\addplot	[semithick, densely dotted,plotclblue,domain=-9:9,samples=204]
																																					{{	sinchalf(-1) * sinc(x,-1)}};
																																\addplot	[semithick, densely dotted,plotclblue,domain=-9:9,samples=204]
																																					{{	sinchalf(-2) * sinc(x,-2)}};
																																\addplot	[name path=plot2,very thick, densely dashed, plotclgreen,domain=-9:9,samples=204]
																																					{{%
																																							sinchalf(-3) * sinc(x,-3) +
																																							sinchalf(-2) * sinc(x,-2) +
																																							sinchalf(-1) * sinc(x,-1) +
																																							sinc(x,0) +
																																							sinchalf(1) * sinc(x,1) +
																																							sinchalf(2) * sinc(x,2) +
																																							sinchalf(3) * sinc(x,3)
																																					}};
																																\addplot	[black!10!white] 
																																					fill between[of=plot1 and plot2];
																									\end{axis}
															\end{tikzpicture}
}
\newcommand{\PlotDiscrete}{		\begin{tikzpicture}								\begin{axis}	[	declare function=	{	sinchalf(\x) = sin(deg(pi * \x / 3)) / (pi * \x / 3);
																																											sinc(\x,\o) = sin(deg(pi * (\x - \o))) / (pi * (\x - \o));
																																										},
																																	height=0.7\linewidth,
																																	width=1.05\linewidth,
																																	xmin=-9,
																																	xmax=9,
																																	ymin=-0.25,
																																	ymax=1.1,
																																	axis line style={shorten >=-10pt, shorten <=-10pt},
																																	axis y line=center,
																																	axis x line=middle,
																																	ticks = none
																																]
																																\fill[color = black!15!white] (-9,1.1) rectangle (9,-0.2);
																																\draw[color = white, very thick] (-9,1.1) -- (-9,-0.2);
																																\draw[color = white, very thick] (-8,1.1) -- (-8,-0.2);
																																\draw[color = white, very thick] (-7,1.1) -- (-7,-0.2);
																																\draw[color = white, very thick] (-6,1.1) -- (-6,-0.2);
																																\draw[color = white, very thick] (-5,1.1) -- (-5,-0.2);
																																\draw[color = white, very thick] (-4,1.1) -- (-4,-0.2);
																																\draw[color = white, very thick] (-3,1.1) -- (-3,-0.2);
																																\draw[color = white, very thick] (-2,1.1) -- (-2,-0.2);
																																\draw[color = white, very thick] (-1,1.1) -- (-1,-0.2);
																																\draw[color = white, very thick] (0,1.1) -- (0,-0.2);
																																\draw[color = white, very thick] (1,1.1) -- (1,-0.2);
																																\draw[color = white, very thick] (2,1.1) -- (2,-0.2);
																																\draw[color = white, very thick] (3,1.1) -- (3,-0.2);
																																\draw[color = white, very thick] (4,1.1) -- (4,-0.2);
																																\draw[color = white, very thick] (5,1.1) -- (5,-0.2);
																																\draw[color = white, very thick] (6,1.1) -- (6,-0.2);
																																\draw[color = white, very thick] (7,1.1) -- (7,-0.2);
																																\draw[color = white, very thick] (8,1.1) -- (8,-0.2);
																																\draw[color = white, very thick] (9,1.1) -- (9,-0.2);
																																\draw[color = black!35!white] (-8.5, 0.2) node {?};
																																\draw[color = black!35!white] (-7.5, 0.2) node {?};
																																\draw[color = black!35!white] (-6.5, 0.2) node {?};
																																\draw[color = black!35!white] (-5.5, 0.2) node {?};
																																\draw[color = black!35!white] (-4.5, 0.2) node {?};
																																\draw[color = black!35!white] (-3.5, 0.2) node {?};
																																\draw[color = black!35!white] (-2.5, 0.2) node {?};
																																\draw[color = black!35!white] (-1.5, 0.2) node {?};
																																\draw[color = black!35!white] (-0.5, 0.2) node {?};
																																\draw[color = black!35!white] (0.5, 0.2) node {?};
																																\draw[color = black!35!white] (1.5, 0.2) node {?};
																																\draw[color = black!35!white] (2.5, 0.2) node {?};
																																\draw[color = black!35!white] (3.5, 0.2) node {?};
																																\draw[color = black!35!white] (4.5, 0.2) node {?};
																																\draw[color = black!35!white] (5.5, 0.2) node {?};
																																\draw[color = black!35!white] (6.5, 0.2) node {?};
																																\draw[color = black!35!white] (7.5, 0.2) node {?};
																																\draw[color = black!35!white] (8.5, 0.2) node {?};
																																\draw (-9,0) -- (9,0);
																																\draw (0,-0.2) -- (0,1.1);																
																																\addplot	[ycomb, plotclgreen, very thick ,domain=-9:9]
																																					coordinates
																																					{		(-9,0)(-8,0)(-7,0)(-6,0)(-5,0)(-4,0)
																																							(-3, {sinchalf(-3)})%
																																							(-2, {sinchalf(-2)})%
																																							(-1, {sinchalf(-1)})%
																																							(0, 1)%
																																							(1, {sinchalf(1)})%
																																							(2, {sinchalf(2)})%
																																							(3, {sinchalf(3)})%
																																							(4,0)(5,0)(6,0)(7,0)(8,0)(9,0)
																																					};
																																\addplot	[mark = x, only marks, plotclgreen, very thick, mark size = 3pt, domain=-9:9]
																																					coordinates
																																					{		(-9,0)(-8,0)(-7,0)(-6,0)(-5,0)(-4,0)
																																							(-3, {sinchalf(-3)})%
																																							(-2, {sinchalf(-2)})%
																																							(-1, {sinchalf(-1)})%
																																							(0, 1)%
																																							(1, {sinchalf(1)})%
																																							(2, {sinchalf(2)})%
																																							(3, {sinchalf(3)})%
																																							(4,0)(5,0)(6,0)(7,0)(8,0)(9,0)
																																					};
																									\end{axis}
															\end{tikzpicture}
}
\begin{document}

\begin{frontmatter}
\title{On the Need of Analog Signals and Systems for Digital-Twin Representations}

\author{Holger Boche\fnref{fna}}
\fntext[fna]{Holger Boche is with the Technische Universit\"at M\"unchen,
							Lehrstuhl f\"ur Theoretische Informationstechnik, 80290 Munich, Germany, and the Munich
							Center for Quantum Science and Technology (MCQST), Schellingstr. 4, 80799 
							Munich, Germany. e-mail: boche@tum.de.}
\author{Yannik N. Böck\fnref{fnb}}
\fntext[fnb]{Yannik~N.~B{\"o}ck is with the Technische Universit\"at M\"unchen,
							Lehrstuhl f\"ur Theoretische Informationstechnik, 80290 Munich, Germany. e-mail: yannik.boeck@tum.de.}
\author{Ullrich J. Mönich\fnref{fnc}}
\fntext[fnc]{Ullrich~J.~M{\"o}nich is with the Technische Universit\"at M\"unchen,
							Lehrstuhl f\"ur Theoretische Informationstechnik, 80290 Munich, Germany. e-mail: moenich@tum.de.} 
\author{Frank~H.~P.~Fitzek\fnref{fnd}}
\fntext[fnd]{F.~H.~P.~Fitzek is with the Deutsche Telekom Chair of Communication
										Networks, Technical University of Dresden, 01187 Dresden, Germany, and the
										Cluster of Excellence “Centre for Tactile Internet with Human-in-the-Loop”
										(CeTI. email: frank.fitzek@tu-dresden.de}     
\begin{abstract} 
	\normalsize
	We consider the task of converting different digital descriptions of analog bandlimited signals and systems into 
	each other, with a rigorous application of mathematical computability theory. Albeit very fundamental,
	the problem appears in the scope of \emph{digital twinning}, an emerging
	concept in the field of digital processing of analog information that is regularly mentioned 
	as one of the key enablers for next-generation cyber-physical systems and their areas of application. In this context,
	we prove that essential quantities such as the peak-to-average power ratio and the bounded-input/bounded-output norm, which 
	determine the behavior of the real-world analog system, cannot generally be determined 
	from the system's digital twin, depending on which of the above-mentioned descriptions is chosen.
	As a main result, we characterize the algorithmic strength of Shannon's sampling type
	representation as digital twin implementation and also introduce a new digital twin implementation of analog signals and systems.
	We show there exist two digital descriptions, both of which uniquely characterize a certain analog system, such that one description
	can be algorithmically converted into the other, but not vice versa.
\end{abstract}
\begin{keyword}
Bandlimited signal, digital twin, PAPR problem, BIBO stability, compiler.
\end{keyword}
\end{frontmatter}\pagebreak

\section{Introduction}\label{sec:introduction}

Bandlimited signals are essential to state-of-the-art information processing, especially at the border between
analog and digital systems. In the physical world, be it in signal processing, control, communication or measurement technology, 
information is usually carried by analog, continuous-time signals. In contrast, in the digital world,
where most of the data processing takes place, information is processed in discrete-time computational cycles. 
According to Shannon's sampling theorem, a \emph{sampling series} can be used to 
uniquely recover a bandlimited continuous-time signal from a discrete-time sequence of samples,
provided that the signal's energy is finite and the samples are taken at least at Nyquist rate~\cite{Sh49}. 

Since the publication of Shannon's seminal article, the development and
investigation of sampling-type representations of analog signals and systems has been an active field of research. 
In order to meet the progressive requirements of applied engineering, 
the relevant theory has been advanced into various directions. Among others, this includes extensions to generalized function spaces \cite{Pf71,CaHa82,Gr85}, 
modified sampling sequences \cite{Wa92}, sampling-type representations of operators \cite{Ha01, PfWa16}, and non-deterministic frameworks \cite{Ga72, Ha01}.  
The collected results form the foundation of modern digital processing of analog information.

One of the most recent concepts in the field of digital processing of analog information, which is regularly referred to as one of the key enablers for next-generation cyber-physical systems \cite{BaCaFo19}, is known as 
\emph{digital twinning}. Originally associated primarily with Industry~4.0~\cite{TaZhLiNe19}, the concept is now also attracting great interest in many other areas of modern technology. For recent examples from networking or medicine 
technologies, see~\cite{HoEA21, LaEA22}. With the introduction of the metaverse, real worlds will be transferred to virtual space. 
Here, information processing will be even more dependent on human multi-modalities (human senses) and its interaction with the digital domain, c.f.~\cite{FiEA21}.
In order to make human senses experienceable, the information will have to be processed in real time. This raises the question of whether 
this processing can be done in the digital world at all, or whether analog approaches might be the solution. 

The expectations towards digital twin technologies are ambitious. In medical research, for example, the idea of implementing digital twins of humans is considered,
that can be employed for medical purposes such as virtual surgery. For this kind of technologies, requirements 
regarding trustworthiness are clearly of special relevance. In general, the number of every-day technologies that potentially affect sensitive human goods, like financial resources, 
private information or health, can be expected to rise significantly with the increasing establishment of digital twinning. The need to follow strict
specifications on privacy, integrity, reliability, safety and alike with regards to these technologies, is manifest. 
Considerations of this kind are essential in view of future robotic systems, medicine applications 
and 6G communication technologies, see e.g.~\cite{FiEA20}. We will discuss this topic in detail in Section~\ref{sec:concl}.

Although an unambiguous and widely accepted definition of the term \emph{digital twin} has not yet been established,
the approach usually exhibits the following abstract characteristics:
\begin{itemize}
	\item[1)] 	The starting point is an abstract set of arbitrary entities from 
				the physical world, usually in the context of some engineering problem. The abstract set is often defined
				implicitly by the problem statement. For example, it may consist of all configurations and properties of a network 
				of interacting autonomous vehicles, or of all possible configurations and properties of the individual parts of a 
				combustion engine. Depending on the specific application, there is a number of object related properties that 
				we want to predict. For example, in the case of combustion engines, this may, be the expected fuel 
				consumption in a certain operating state.
	\item[2)]	The objects in the abstract set are assigned a \emph{description} in some \emph{language} that is
				readable by digital machines, see below. For example, the individual parts of the above-mentioned 
				combustion engine may be characterized by means of 
				the finite element method within some computer-aided-design (CAD) software. The machine-readable description is the object's digital 
				\emph{representation}, i.e., a \emph{digital twin}.
				In real-time systems, the object's digital twin is sequentially updated to match its real-world counterpart.
				This is usually implemented by means of (physical) measurements and analog to digital conversion.
	\item[3)]	The digital twin of the object is used as input for an algorithm, which in turn is supposed to predict 
				one of the 	object-related properties. In real-time systems, the output of the algorithm can be used 
				to control the real system. 
\end{itemize}
Schematically, the concept of a digital twin described above is shown in Figure~\ref{fig:exmpI}.
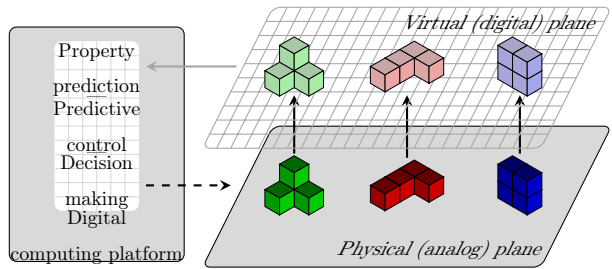
\begin{figure}[!htbp]
	\centering
	\begin{tikzpicture}	[scale=0.75, every node/.style={scale=0.75}]
											\begin{scope}	[	xslant = 0.5,%
																		]
																		\draw[rounded corners = 1mm, fill = black!15!white] (-3, 1.25) rectangle (3, -1.25);
																		\draw (-2.1,0.2) node[scale = 0.3] 	(ItemG)
																																				{	\begin{tikzpicture}	\planepartitionG{{2,1},{1}};
																																					\end{tikzpicture}
																																				};
																		\draw (-0.1,0.2) node[scale = 0.3] 	(ItemR)
																																				{	\begin{tikzpicture}	\planepartitionR{{1,1},{1},{1}};
																																					\end{tikzpicture}
																																				};
																		\draw (1.9,0.2) node[scale = 0.3] 	(ItemB)
																																				{	\begin{tikzpicture}	\planepartitionB{{2,2}};
																																					\end{tikzpicture}
																																				};
																		\draw (2.999, -1.249) node [xslant = 0.5, anchor = south east] {Physical (analog) plane};
																		\draw (-4.85,0.2) node (TMBoxLower) {};
																		\draw (-3.1, 0.2) node (ArrowPointLower) {};
											\end{scope}
											\draw[-stealth, thick] (ItemG) -- +(0,1.6);
											\draw[-stealth, thick] (ItemR) -- +(0,1.6);
											\draw[-stealth, thick] (ItemB) -- +(0,1.6);
											\begin{scope}	[	yshift = 60,%
																			xslant = 0.5
																		]
																		\draw [step = 0.25, color = black!35!white] (-2.999, 1.249) grid (2.999, -1.249);
																		\draw [rounded corners = 1mm, color = black!35!white] (-3, 1.25) rectangle (3, -1.25);
																		\draw (-2.1,0.2) node[scale = 0.3] 	{	\begin{tikzpicture}	\planepartitionGO{{2,1},{1}};
																																					\end{tikzpicture}
																																				};
																		\draw (-0.1,0.2) node[scale = 0.3] 	{	\begin{tikzpicture}	\planepartitionRO{{1,1},{1},{1}};
																																					\end{tikzpicture}
																																				};
																		\draw (1.9,0.2) node[scale = 0.3] 	{	\begin{tikzpicture}	\planepartitionBO{{2,2}};
																																					\end{tikzpicture}
																																				};
																		\draw (2.999, 1.249) node [xslant = 0.5, anchor = north east] {Virtual (digital) plane};
																		\draw (-4.85,0.2) node (TMBoxUpper) {};
																		\draw (-3, 0.2) node (ArrowPointUpper) {};						
											\end{scope}
											\draw [fill = black!15!white, rounded corners = 1mm] 
														([xshift = -2.3cm, yshift = 0.7cm] TMBoxUpper) rectangle ([xshift = 0.8cm, yshift = -1.35cm] TMBoxLower)
														node[pos = .5] (TextTMBoxPlatform) {};
											\draw [-stealth, thick, color = black!35!white] (ArrowPointUpper) -- (TMBoxUpper);
											\draw [-stealth, thick, dashed] 								(TMBoxLower) -- (ArrowPointLower);
											\fill [color = white, rounded corners = 1mm] 
														([xshift = -1.5cm, yshift = 0.45cm] TMBoxUpper) rectangle ([yshift = -0.45cm] TMBoxLower)
														node[pos = .5] (TextTMBoxAlgorithm) {};
											\draw [step = 0.25, color = black!15!white, rounded corners = 1mm] 
														([xshift = -1.49cm, yshift = 0.44cm] TMBoxUpper) grid ([xshift = -0.001cm, yshift = -0.449cm] TMBoxLower);
											\draw ([yshift =  1.0cm] TextTMBoxAlgorithm) node[anchor = center, align = center] {Property\\ prediction};
											\draw ([yshift =  0.5cm] TextTMBoxAlgorithm) node[anchor = center, align = center] {---};
											\draw ([yshift =  0.0cm] TextTMBoxAlgorithm) node[anchor = center, align = center] {Predictive\\ control};
											\draw ([yshift = -0.5cm] TextTMBoxAlgorithm) node[anchor = center, align = center] {---};
											\draw ([yshift = -1.0cm] TextTMBoxAlgorithm) node[anchor = center, align = center] {Decision\\ making};
											\draw ([yshift = -1.65cm] TextTMBoxPlatform) node[anchor = center, align = center] {Digital\\ computing platform};							
	\end{tikzpicture}
	\caption{Schematic representation of the digital twin approach according to the formalization given in Section~\ref{sec:introduction}.}
	\label{fig:exmpI}
\end{figure}

In the scope of this article, the term ``language'' as used above refers to a fixed method for characterizing abstract objects that is accessible to Turing machines,
e.g., the concept of discrete- and continuous-time descriptions of computable bandlimited signals introduced in Section~\ref{sec:Conv}.
It is not to be confused with a formal language according to the strict mathematical definition. However, it is noteworthy that, since the theory of Turing machines
can be equivalently formalized by the theory of formal languages, it is (in principle) possible to formalize our framework in 
a manner such that it does coincide with formal languages in the mathematical sense. In this context, the problem of
converting different digital descriptions of analog objects into each other may be regarded as a \emph{compiler} problem.

Depending on the individual application, the implementation of a digital twin can be arbitrarily complex. 
For example, the source code of a CAD application can be thousands of lines long, and the data describing a particular object can be several gigabytes in size. 
Digital twins of humans in healthcare and robotics can be expected to be even more complex than that.
Thus, the question of the "proper" way to describe a real world object arises: if a certain property about the real system should be predicted, which characteristics does the 
language describing the system has to satisfy?
The problem is visualized in Figure~\ref{fig:exmpII}.
\begin{figure}[!htbp]
	\centering
	\begin{tikzpicture}	
											\newlength{\yshiftUpper};
											\setlength{\yshiftUpper}{60pt};
											\begin{scope}	[	xslant = 0.5,%
																		]
																		\draw[rounded corners = 1mm, fill = black!15!white] (-3, 1.25) rectangle (3, -1.25);
																		\draw (-0.2,0.2) node[] 	(ItemG) {};
																		\draw (2.999, -1.249) node [xslant = 0.5, anchor = south east] {Physical (analog) plane};
											\end{scope}
											\draw (ItemG) node[scale = 0.3] 	{		\begin{tikzpicture}	\planepartitionG{{2,1},{1}};
																														\end{tikzpicture}
																												};
											\begin{scope}	[	yshift = \yshiftUpper,%
																			xslant = 0.5
																		]
																		\draw (-2.,0.2) node (ItemGO) {};
																		\draw (1.8,0.2) node (ItemBWO) {};
											\end{scope}
											\draw[-stealth, thick]	([xshift = -2.5, yshift = 17.5] ItemG.center) -- %
																							([xshift = -2.5, yshift = 0.4\yshiftUpper] ItemG.center) -- %
																							([yshift = -0.6\yshiftUpper] ItemGO.center) -- %
																							([yshift = -0.25\yshiftUpper] ItemGO.center);
											\draw[-stealth, thick]	([xshift =  2.5, yshift = 17.5] ItemG.center) -- %
																							([xshift =  2.5, yshift = 0.4\yshiftUpper] ItemG.center) -- %
																							([yshift = -0.6\yshiftUpper] ItemBWO.center) -- %
																							([yshift = -0.25\yshiftUpper] ItemBWO.center);
											\begin{scope}	[	yshift = \yshiftUpper,%
																			xslant = 0.5
																		]
																		\draw [step = 0.25, color = black!35!white] (-2.999, 1.249) grid (-0.801, -1.249);
																		\draw [step = 0.25, color = black!35!white] (0.799, 1.249) grid (2.999, -1.249);
																		\draw [rounded corners = 1mm, color = black!35!white] (-3, 1.25) rectangle (-0.8, -1.25);
																		\draw [rounded corners = 1mm, color = black!35!white] (0.8, 1.25) rectangle (3, -1.25);
																		\draw (-0.801, 1.249) node [align = right, xslant = 0.5, anchor = north east] {Language A};	
																		\draw (2.999, 1.249)  node [align = right, xslant = 0.5, anchor = north east] {Language B};
																		\draw (0,0.2)  				node [ellipse, fill = black!15!white, align = center, anchor = center, scale = 0.8] {A or B?};
											\end{scope}
											\draw (ItemGO) 	node[scale = 0.3] 	{		\begin{tikzpicture}	\planepartitionGO{{2,1},{1}};
																															\end{tikzpicture}
																													};
											\draw (ItemBWO) node[scale = 0.3] 	{		\begin{tikzpicture}	\planepartitionBWO{{2,1},{1}};
																															\end{tikzpicture}
																													};									
	\end{tikzpicture}
	\caption{	Digital twins of the same abstract object in different machine-readable languages. Even if both twins uniquely characterize the real object, 
						not all information about it may be algorithmically accessible in both languages. The choice of which language to use is thus a creative engineering
						task and depends strongly on the specific application.}
	\label{fig:exmpII}
\end{figure}
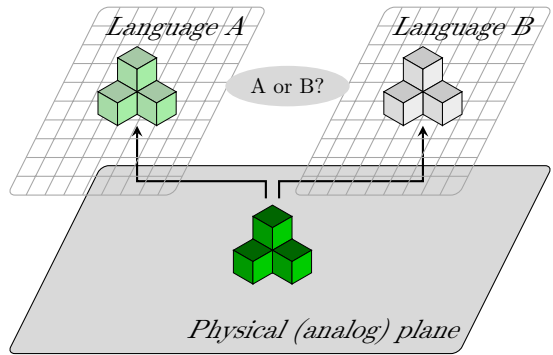
The answer to this question highly depends on the specific application and the properties to be predicted. 
Often, the choice of language is also a matter of feasibility and convenience.

In this article, we consider the standard digital signal processing description of bandlimited signals and systems in connection 
with digital twin technology. Albeit digital twins are mostly associated to complex systems like networks of autonomous vehicles or combustion engines,
the digital processing of bandlimited signals satisfies all above criteria, as will be discussed in Section~\ref{sec:MotivDigitTwin}.
Furthermore, bandlimited signals and systems regularly occur as subsystems of more complex digital twins. For example, MATLAB Simulink relies heavily
on descriptions of LTI systems by means of their discrete impulse response. Since the structure of a digital twin of a bandlimited signal is conceptually simple,
it is an excellent research object for which a mathematically rigorous and well defined 
model of the term digital twin that meets the standards of theoretical informatics can be established.
In order to do so, we employ the theory of Turing machines. Turing machines are among the most refined 
models of digital computers in literature. In fact, the \emph{Church Turing Thesis},
which implicitly states that Turing machines indeed yield a complete characterization of the theoretical capabilities
of digital hardware, is widely accepted in the community of information technology. This allows for the characterization
of fundamental limits of real-world computers: if a certain problem can be proven to be unsolvable on a Turing machine,
it can definitely not be solved on any real-world computer. 

In the context of bandlimited signals and systems, the application of Turing's theory leads to a computable variant of the Bernstein spaces 
$\Bs_{\pi}^{p}$. Generally, these consist of bandlimited signals
with finite $L^p$-norm as characteristic time-domain behavior \cite{higgins99_book}.
The computable Bernstein spaces then consist of those signals whose sequence of sampling values can be generated
algorithmically, along with a bound on the approximation error with respect to the corresponding norm.
More specifically, according to this definition a signal $f \in \Bs_{\pi}^p$ is called
computable if
\begin{enumerate}
\item there exists an algorithm that computes a sequence $\sq{f_n}_{n \in \N}$ of finite
  Shannon sampling series, and
\item the approximation error can be effectively controlled, i.e., we have 
  $\lVert f - f_n \rVert_{\Bs_{\pi}^p} \leq 2^{-M}$ for all $n, M\in \N$ with \(n \geq \xi(M)\)
	and some computable function \(\xi : \N \rightarrow \N\).
\end{enumerate}
Hence, the signal \(f\) is stored in terms of an algorithm that, when executed, produces an approximation of \(f\)
up to arbitrary precision, plus an estimate of the distance to \(f\).
The advantages of the above definition are apparent: the definition is intuitively clear,
very general, and, since it uses the finite Shannon sampling series, it is easy to perform
analytical calculations, such as taking the derivative.

It is important to note that, from an engineering point of view, our approach yields
an ``ideal'' digital twin, in the sense that it characterizes the analog information uniquely.
To illustrate this, consider the simplified case of sampling an analog signal for a finite duration
with a quantized sampling depth. Given a list of sampling values, there exists an uncountably large
set of signals that coincide at the sampling points with the values in the list up to the quantization error.
Hence, the true analog signal cannot be recovered exactly from the list of sampling values.
In our approach, every sampling value can be computed up to arbitrary precision. Hence, 
the digital twin of the analog signal, which, as mentioned above, consists of an algorithm that
produces the sampling values, uniquely characterizes it's analog counterpart.

The remainder of the article is structured as follows. In Section~\ref{sec:MotivDigitTwin},
we will give an in-depth motivation of the general problem in the context of digital twinning.
Sections~\ref{sec:NandBsignals} and~\ref{sec:CBsignals} are dedicated to preliminaries.
In Section~\ref{sec:Conv}, we give a formal characterization of the problem statement 
based on the theory of Turing machines. Then, in Sections~\ref{sec:UncompInfty},~\ref{sec:UncompOne}
and~\ref{sec:CompOfT}, we will establish the main results of our work. In Section~\ref{sec:InterpDT},
we will return to the context of digital twinning and interpret the main results accordingly. The article
closes in Section~\ref{sec:concl} with a brief subsumption of our work.
	
\section{Problem Motivation in the Context of Digital Twins}\label{sec:MotivDigitTwin}
In the most general formalization of the concept of digital twins, the behavior of the physical world
(more precisely: the part of the physical world that is subject to the some engineering problem)
is captured by some general mathematical model, which in turn characterizes an abstract set \(\abstrSet\). 
The set \(\abstrSet\) consists of a selection of objects from the physical world that are to be represented on a digital 
computer. Usually, this set is determined by the engineering problem under consideration.
For example, \(\abstrSet\) may consist of the possible arrangements of the individual parts of some
mechanical device, the points in the phase space of some dynamical system, or a class of analog electronic 
signals. In the digital domain, the counterpart to \(\abstrSet\) is another set \(\mathscr{D}\) that consist
of \emph{machine-readable} descriptions of the objects in \(\abstrSet\), i.e., a machine-readable language. We will discuss a
precise formalization of the term ``machine-readable'' in Section~\ref{sec:CBsignals}. 
Generally, the purpose of a digital twin is to answer questions about the physical 
world by means of a digital computer. The computer is presented with a description \(\mathfrak{D}\in\mathscr{D}\) of some physical
entity \(u\in\abstrSet\) and is supposed to return an answer about some property of the same entity.
Hence, a the digital twin \(\mathfrak{D}\) of \(u\) serves as \emph{input} for an information processing routine.

The language \(\mathscr{D}\) is not unique, nor a priori determined by the set \(\abstrSet\). 
In fact, the choice of the proper language \(\mathscr{D}\) is a creative task that depends on the specific problems that are to be solved by 
means of its elements \(\mathfrak{D}\). This is best illustrated directly by the example of bandlimited signals: 
\begin{itemize}
	\item[1)]	The abstract set \(\CBs_{\pi}^p\) whose elements are to be described in a machine-readable language 
						consists of all (computable) bandlimited signals. In communication technology, for example, the latter may act as an information carrier,
						whereas in control theory, they may characterize an LTI system by means of its impulse response.
						For wireless transmissions, the peak value of the information carrier has to be controlled in order to avoid
						non-linear distortions and inter-band interference. The latter is known as the \emph{PAPR-problem},
						a prominent engineering task that has been investigated in different contexts, 
						see e.g.~\cite{LiYu05,PaTa00,BoTa18}. For a comprehensive overview, we refer to~\cite{WuEA13}.
						In control theory, we may be interested in the \(L_1\) norm of the signal, 
						since it is the crucial quantity in the context of \emph{BIBO stability}, a fundamental 
						topic that is discussed in most of the relevant introductory textbooks.
	\item[2)]	Shannon's sampling theorem allows for the exact characterization of a bandlimited signal in terms of a sequence
						of sampling values. Hence, in digital signal processing, bandlimited signals are usually described
						by storing their sampling values in one way or another. As indicated in Section~\ref{sec:introduction}, we will consider the more general
						case of generating the sampling values algorithmically. In particular, we will introduce the languages
						\(\mathscr{F}^p\) and \(\mathscr{X}^p\), which consist of \emph{continuous-time} and \emph{discrete-time}
						descriptions of bandlimited signals. Their elements \(\mathfrak{F}\in\mathscr{F}^p\), \(\mathfrak{X}\in \mathscr{X}^p\),
						then constitute to algorithms that reproduce the sampling values \(\sq{f(k)}_{k\in\Z}\) of some bandlimited signal \(f\in\CBs_{\pi}^p\).
	\item[3)]	The stored descriptions of \(f\), i.e., \(\mathfrak{F}\) and \(\mathfrak{X}\), respectively, are passed as input to an algorithm that calculates the peak value or 
						BIBO norm of the signal \(f\). If, for example, the BIBO norm exceeds a certain threshold, the system may be throttled in
						order to avoid overshooting at the output. In particular, both quantities should be determined digitally
						by means of \(\mathfrak{F}\) or \(\mathfrak{X}\), before generating the actual signal \(f\) by digital to analog conversion.
\end{itemize}
The two languages used to describe the elements of \(\CBs_{\pi}^p\) are illustrated by Figure~\ref{fig:exmpIV}. 
\begin{figure}[!htbp]
	\centering
	\begin{tikzpicture}	
											\setlength{\yshiftUpper}{65pt};
											\begin{scope}	[	xslant = 0.5,%
																		]
																		\draw[rounded corners = 1mm, fill = black!15!white] (-3, 1.25) rectangle (3, -1.25);
																		\draw (-0,0) node[] 	(ItemG) {};
																		\draw (2.999, -1.249) node [xslant = 0.5, anchor = south east] {\(\CBs_{\pi}^{p}\)};
											\end{scope}
											\draw (ItemG) node[draw = black, fill = white, anchor=center,inner sep=0pt, xslant = 0.5] 	{		\includegraphics[width=1in]{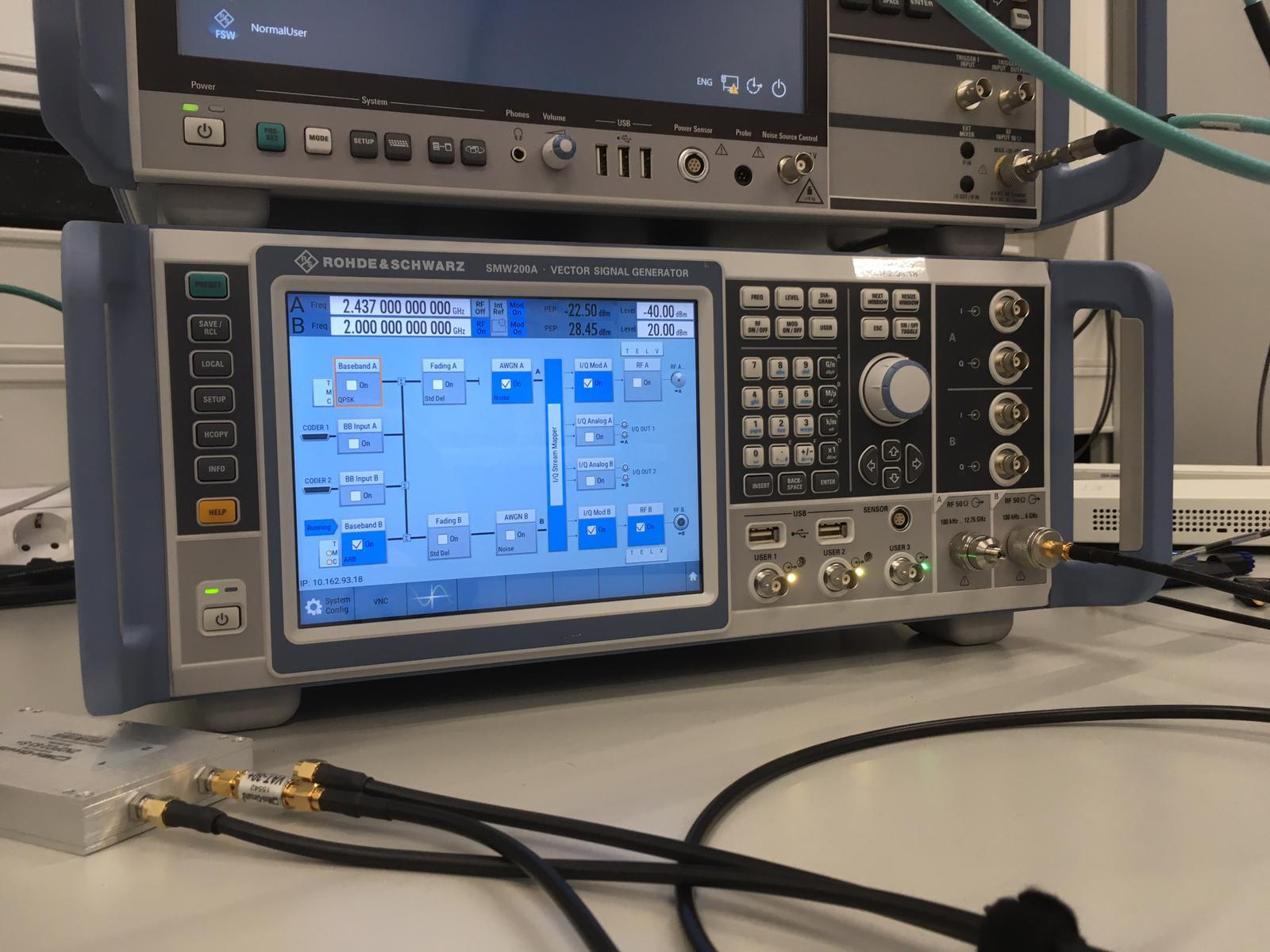}
																																										};																	
											\begin{scope}	[	yshift = \yshiftUpper,%
																			xslant = 0.5
																		]
																		\draw (-1.5,0) node (ItemGO) {};
																		\draw (1.5,0) node (ItemBWO) {};
											\end{scope}
											\draw[-stealth, thick]	([xshift = -2.5, yshift = 0] ItemG.center) -- %
																							([xshift = -2.5, yshift = 0.3\yshiftUpper] ItemG.center) -- %
																							([yshift = -0.5\yshiftUpper] ItemGO.center) -- %
																							([yshift = -0.25\yshiftUpper] ItemGO.center);
											\draw[-stealth, thick]	([xshift =  2.5, yshift = 0] ItemG.center) -- %
																							([xshift =  2.5, yshift = 0.3\yshiftUpper] ItemG.center) -- %
																							([yshift = -0.5\yshiftUpper] ItemBWO.center) -- %
																							([yshift = -0.25\yshiftUpper] ItemBWO.center);
											\begin{scope}	[	yshift = \yshiftUpper,%
																			xslant = 0.5
																		]
																		\draw [step = 0.25, color = black!35!white] (-2.999, 1.249) grid (-0.101, -1.249);
																		\draw [step = 0.25, color = black!35!white] (0.099, 1.249) grid (2.999, -1.249);
																		\draw [rounded corners = 1mm, color = black!35!white] (-3, 1.25) rectangle (-0.1, -1.25);
																		\draw [rounded corners = 1mm, color = black!35!white] (0.1, 1.25) rectangle (3, -1.25);
																		\draw (-0.1, 1.249) node [align = right, xslant = 0.5, anchor = north east] {Language \(\mathscr{F}^p\)};	
																		\draw (3, 1.249)  node [align = right, xslant = 0.5, anchor = north east] {Language \(\mathscr{X}^p\)};
											\end{scope}
											\draw (ItemGO) 	node[draw = black, inner sep = 15pt, anchor = center, fill = white, scale = 0.28, xslant = 0.5] 	{		\PlotContinuous
																																																																			};
											\draw (ItemBWO) node[draw = black, inner sep = 15pt, anchor = center, fill = white, scale = 0.28, xslant = 0.5] 	{		\PlotDiscrete
																																																																			};									
	\end{tikzpicture}
	\caption{Visualization of the "languages" \(\mathscr{F}^p\) and \(\mathscr{X}^p\), both of which describe the set \(\CBs_{\pi}^p\).}
	\label{fig:exmpIV}
\end{figure}
In the scope of this article, we will illustrate that the choice of the proper language, i.e., the exact specification
of how the information describing \(f\) is stored, is crucial in calculating both the peak value and the BIBO norm of \(f\).

We will formally introduce the
sets \(\CBs_{\pi}^p\), \(\mathscr{F}^p\) and \(\mathscr{X}^p\) in Sections~\ref{sec:CBsignals}~and~\ref{sec:Conv}.
Given a signal \(f\in\CBs_{\pi}^{p}\) and corresponding descriptions \(\mathfrak{F}\in \mathscr{F}^p\) and \(\mathfrak{X} \in \mathscr{X}^p\),
one may ask the following questions:
\begin{itemize}
	\item[1)] Can we compute the \(p\)-norm \(\lVert f \rVert_{\Bs_{\pi}^p}\) of \(f\) from the description \(\mathfrak{F}\)?
	\item[2)] Can we compute the \(p\)-norm \(\lVert f \rVert_{\Bs_{\pi}^p}\) of \(f\) from the description \(\mathfrak{X}\)?
\end{itemize}
We will see that both \(\mathfrak{F}\) and \(\mathfrak{X}\) contain all information about the signal \(f\), in the sense that
they characterize the signal \(f\) uniquely. However, as we will prove, for \(p = \infty\), only Question 1 can be answered in the 
positive, while Question 2 has to be answered in the negative. 

As the previous example shows, the question of whether a specific information about \(u\) can be extracted from a digital twin \(\mathfrak{D}\) of
\(u\) may depend crucially on the specific structure of \(\mathscr{D}\). More generally, one may consider the principle of \emph{algorithmic equivalence} of different languages describing the same abstract set, which is depicted in Figure~\ref{fig:exmpIII}. 
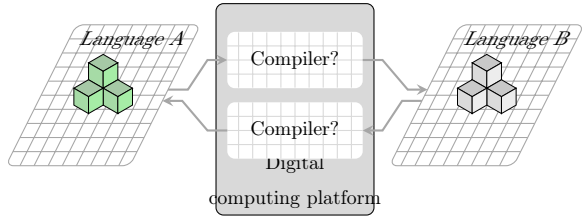
\begin{figure}[!htbp]
	\centering
	\begin{tikzpicture}	[scale=0.75, every node/.style={scale=0.75}]
											\setlength{\yshiftUpper}{-1.25pt};
											\begin{scope}	[	shift = {(0,-0.125)},%
																			xslant = 0.5
																		]
																		\draw (-3.5,0.2) node (ItemGO) {};
																		\draw (3.3,0.2) node (ItemBWO) {};
											\end{scope}
											\begin{scope}	[	shift = {(0,-0.125)},%
																			xslant = 0.5
																		]
																		\draw [step = 0.25, color = black!35!white] (-4.499, 1.249) grid (-2.301, -1.249);
																		\draw [step = 0.25, color = black!35!white] (2.299, 1.249) grid (4.499, -1.249);
																		\draw [rounded corners = 1mm, color = black!35!white] (-4.5, 1.25) rectangle (-2.3, -1.25);
																		\draw [rounded corners = 1mm, color = black!35!white] (2.3, 1.25) rectangle (4.5, -1.25);
																		\draw (-2.301, 1.249) node [align = right, xslant = 0.5, anchor = north east] {Language A};	
																		\draw (4.499, 1.249)  node [align = right, xslant = 0.5, anchor = north east] {Language B};
																		\draw [] (-2.3,0.1) 	node (LangAUpper) {};
																		\draw [] (-2.3,-0.1) 	node (LangALower) {};
																		\draw [] (2.3,0.1) 		node (LangBUpper) {};
																		\draw [] (2.3,-0.1) 	node (LangBLower) {};
											\end{scope}
											\draw (ItemGO) 	node[scale = 0.3] 	{		\begin{tikzpicture}	\planepartitionGO{{2,1},{1}};
																															\end{tikzpicture}
																													};
											\draw (ItemBWO) node[scale = 0.3] 	{		\begin{tikzpicture}	\planepartitionBWO{{2,1},{1}};
																															\end{tikzpicture}
																													};
											\draw [fill = black!15!white, rounded corners = 1mm] (-1.4,1.5) rectangle (1.4,-2.25);
											\draw [align = center, anchor = south] (0,-2.25) node {Digital\\ computing platform};
											\fill [color = white, rounded corners = 1mm] 
														(-1.2,1.0) rectangle (1.2,0.0);
											\draw [] (-1.2, 0.5) 	node (CompilerAUpper) {};
											\draw [] (1.2, 0.5) 	node (CompilerBUpper) {};
											\fill [color = white, rounded corners = 1mm] 
														(-1.2,-0.25) rectangle (1.2,-1.25);
											\draw [] (-1.2, -0.75) 	node (CompilerALower) {};
											\draw [] (1.2, -0.75) 	node (CompilerBLower) {};
											\draw [step = 0.25, color = black!15!white, rounded corners = 1mm] 
														(-1.2,0.99) grid (1.2,0.01);
											\draw [step = 0.25, color = black!15!white, rounded corners = 1mm] 
														(-1.2,-0.251) grid (1.2,-1.249);
											\draw [anchor = center] (0,0.5) node {Compiler?};
											\draw [anchor = center] (0,-0.75) node {Compiler?}; 
											\draw [color = black!35!white, thick, -stealth] 
														(LangAUpper.center) -- 
														([xshift = 10] LangAUpper.center) -- 
														([xshift = -10] CompilerAUpper.center) -- 
														(CompilerAUpper.center);
											\draw [color = black!35!white, thick, -stealth] 
														(CompilerBUpper.center) -- 
														([xshift = 10] CompilerBUpper.center) -- 
														([xshift = -10] LangBUpper.center) -- 
														(LangBUpper.center);
											\draw [color = black!35!white, thick, stealth-] 
														(LangALower.center) -- 
														([xshift = 10] LangALower.center) -- 
														([xshift = -10] CompilerALower.center) -- 
														(CompilerALower.center);
											\draw [color = black!35!white, thick, stealth-] 
														(CompilerBLower.center) -- 
														([xshift = 10] CompilerBLower.center) -- 
														([xshift = -10] LangBLower.center) -- 
														(LangBLower.center);
	\end{tikzpicture}
	\caption{Schematic representation of the principle of \emph{algorithmic equivalence} as formalized in Section~\ref{sec:MotivDigitTwin}.}
	\label{fig:exmpIII}
\end{figure}
Again, let \(\CBs_{\pi}^p\), \(p\in\Rc\), \(1 \leq p \leq \infty\) be the space of computable
bandlimited signals with describing languages \(\mathscr{F}^p\) and \(\mathscr{X}^p\).
Consider the following questions:
\begin{itemize}
	\item[3)] Given a description \(\mathfrak{F} \in \mathscr{F}^p\) of some signal \(f \in \CBs_{\pi}^p\), is it always
		possible to compute a description \(\mathfrak{X} \in \mathscr{X}^p\) of the same signal? 
	\item[4)] Given a description \(\mathfrak{X} \in \mathscr{X}^p\) of some signal \(f \in \CBs_{\pi}^p\), is it always
		possible to compute a description \(\mathfrak{F} \in \mathscr{F}^p\) of the same signal? 
\end{itemize}
If both questions can be answered in the positive, we can consider the languages \(\mathscr{X}^p\) and \(\mathscr{F}^p\) as algorithmically equivalent,
in the sense that all information about \(f \in \CBs_{\pi}^p\) that can be algorithmically extracted from a description
\(\mathfrak{F} \in \mathscr{F}^p\) can also be extracted from a description \(\mathfrak{X} \in \mathscr{X}^p\) and vice versa. However, the negative answer
to question 2 for the case of \(p = \infty\) already shows that \(\mathscr{F}^p\) and \(\mathscr{X}^p\) are not algorithmically equivalent in general.
 
The remaining part of the article deals with a precise analysis of the two languages \(\mathscr{F}^p\) and \(\mathscr{X}^p\) 
describing the abstract set \(\CBs_{\pi}^p\), \(p\in\Rc\), \(1 \leq p \leq \infty\). Special attention is given to the characterization of the 
algorithmic equivalence of \(\mathscr{F}^p\) and \(\mathscr{X}^p\).

\section{Notation and Bandlimited Signals} \label{sec:NandBsignals}

By $\ell^{\infty}$ we denote the set of all complex-valued
sequences indexed by \(\Z\) that vanish at infinity. That is, we have
\begin{align}\label{eq:decay}
	\lim_{k\to\infty} x(k) = \lim_{k\to-\infty} x(k) = 0 
\end{align}
for all \(x = \sq{x(k)}_{k\in\Z}\in\ell^{\infty}\).
Equipped with the \emph{uniform norm}
\begin{align*}
	\lVert x \rVert_{\ell^{\infty}} = \sup_{k \in \Z} \lvert x(k) \rvert,
\end{align*}
the set \(\ell^{\infty}\) becomes a Banach space.
Further, by $\ell^p(\Z)$, $1 \leq p<\infty$, we denote the usual space of $p$th-power
summable sequences with the usual \(p\)-norm
\begin{align*}
	\lVert x \rVert_{\ell^p} = \left(\sum_{k=-\infty}^{\infty} \lvert x(k) \rvert^p\right)^{1/p}.
\end{align*}

For $\Omega \subseteq \R$, let $L^\infty(\Omega)$ be the space of all measurable, complex-valued functions 
on \(\Omega\) for which the essential supremum norm
\begin{align*}
	\lVert f \rVert_{\infty} = \esssup_{t \in \Omega} \lvert f(t) \rvert
\end{align*}
is finite. Then, the space $\Bs_{\sigma}^{\infty}$ consists of all 
entire functions of exponential type at most $\sigma$, whose restriction to the real line
is in $L^{\infty}(\R)$ and vanishes at infinity. Equipped with the essential supremum norm,
the space $\Bs_{\sigma}^{\infty}$ becomes a Banach space. 

Furthermore, let $L^p(\Omega)$, $1\leq p < \infty$, be the space of all
measurable, complex-valued, $p$th-power Lebesgue integrable functions on $\Omega$, with the usual \(p\)-norm
\begin{align*}
	\lVert f \rVert_p = \left( \int_{\Omega} \lvert f(t) \rvert^p \di{t} \right)^{1/p}.
\end{align*}
The \emph{Bernstein space} $\Bs_{\sigma}^p$, $\sigma>0$, $1 \leq p \leq \infty$, consists
of those functions in $\Bs_{\sigma}^{\infty}$ whose restriction to the real line is an element of
\(L^p(\R)\), i.e.,
\begin{align*}
	\Bs_{\sigma}^p := \big\{ f \in \Bs_{\sigma}^{\infty} : f\rR \in L^p(\R) \big\}.
\end{align*}
The norm for $\Bs_{\sigma}^p$ is given by the $L^p$-norm on the real line, i.e.,
$\lVert \spacedot \rVert_{\Bs_{\sigma}^p} = \lVert \spacedot \rVert_{p}$.
For details, we refer to \cite[\emph{Definition}~6.5, p.~49]{higgins96_book}.

\begin{remark}\label{rem:BInfty}
	In the relevant literature, it is common to denote by \(\ell^\infty\) the space of all complex-valued 
	sequences \(\sq{x(k)}_{k\in\Z}\) with bounded uniform norm, without additionally requiring \eqref{eq:decay}
	to hold. The same applies in an analogous way for \(\Bs_{\sigma}^{\pi}\). These spaces are not considered 
	within the scope of this article. Keeping the same notation
	for the restricted spaces of sequences (functions, respectively) that
	additionally vanish asymptotically improves 
	the readability of expressions of the form ``\(x\in\ell^p, 1 \leq p \leq \infty\)'' and
	``\(f\in\Bs_{\sigma}^p, 1 \leq p \leq \infty\)'', see below.
\end{remark}

A signal in $\Bs_{\sigma}^p$ is called bandlimited to $\sigma$.
The set $\Bs_{\sigma}^2$ is the frequently used space of bandlimited signals with bandwidth
$\sigma$ and finite energy, and $\Bs_{\sigma}^{\infty}$ the space of all bandlimited
signals with bandwidth $\sigma$ that are bounded on the real axis and asymptotically vanish at (real) infinity.
We have 
\begin{align}
	\begin{array}	{c c c c c}
								\Bs_{\sigma}^r 	& \subsetneq &\Bs_{\sigma}^s 	& \subsetneq &\Bs_{\sigma}^{\infty} \\
								\ell^r 					& \subsetneq &\ell^s 					& \subsetneq &\ell^{\infty}
	\end{array}
\end{align}
for all $1 \leq r < s < \infty$.

For many practical applications, 
	the limit cases \(p\in\{1,\infty\}\) play a central role. 
	For \(p = \infty\), the number \(\lVert f\rVert_{\Bs_{\pi}^{p}}\) equals the
	\emph{peak value} of the signal \(f\in\Bs_{\pi}^{p}\), a quantity frequently encountered
	in signal and system theory and communications engineering. As indicated in Section~\ref{sec:MotivDigitTwin}, the peak value 
	is one of the two essential parameters in the study of the \emph{PAPR-problem}.
	In the case of \(p = 1\), the \(\Bs_{\pi}^{p}\)-norm yields a characterization
	of \emph{BIBO-stability} in system and filter theory. In particular, the number
	\(\lVert h\rVert_{\Bs_{\pi}^{1}}\) equals the maximum possible peak value of the output of an LTI-system
	with impulse response \(h\in\Bs_{\pi}^{1}\) that is presented with a normalized input signal.
	For \(h \in \Bs_{\pi}^1\), define the operator \(H : \Bs_{\pi}^{\infty} \rightarrow \Bs_{\pi}^{\infty}\)
	according to 
	\begin{align*}
		(H f)(t)= \int_{-\infty}^{\infty} h(t-\tau)f(\tau)\di{\tau}, \quad t\in\R 
	\end{align*}
	The operator \(H\) characterizes the behavior of an LTI system with impulse response \(h\).
	The BIBO-norm 
	\begin{align*}
		\lVert H \rVert_{\mathrm{BIBO}} := \sup_{\|f\|_{\Bs_{\pi}^{\infty}} = 1} \lVert H f\rVert_{\Bs_{\pi}^{\infty}},
	\end{align*}
	which equals the maximum possible output peak value of \(H\) for normalized input signals, then satisfies 
	\begin{align*}	
		\lVert H \rVert_{\mathrm{BIBO}} = \int_{-\infty}^{\infty} |h(t)| \di{t}.
	\end{align*}
	Furthermore, we have
	\begin{align*}
		(H f)(k) = \sum_{l= -\infty}^{\infty} h(k-l)f(l), \quad k\in\Z.
	\end{align*}
	Since the sequences of sampling values \(\sq{f(k)}_{k\in\Z}\) and \(\sq{h(k)}_{k\in\Z}\) are the standard 
	way of describing \(h\) and \(f\) in digital signal processing, it would be convenient to have an algorithm 
	that computes \(\lVert H \rVert_{\mathrm{BIBO}}\) based on \(\sq{h(k)}_{k\in\Z}\).

A fundamental result in the theory of bandlimited signals is the \emph{Plancherel--P\'olya
theorem} \cite[\emph{Theorem}~3, p.~152]{levin96_book}, which relates the elements of \(\Bs_\pi^p\), \(1 < p < \infty\)
to the elements of \(\ell^p\) by means of an interpolation series. This interpolation series is based
on the \(\sinc\)-function,
\begin{align*}
	\sinc(z) := \begin{cases}	\frac{\sin(\pi z)}{\pi z}	&\text{if}~ z\neq 0,\\
								1							&\text{if}~ z = 0,,
				\end{cases},	
				\quad z\in\C.
\end{align*} 
\begin{theoremnonumber}[Plancherel--P\'olya]
	Let \(1 < p < \infty\). For all sequences $\sq{x(k)}_{k \in \Z} \in \ell^p$, $1 < p < \infty$, there exists
	a unique signal \(f\in\Bs_\pi^p\), such that
	\begin{align*}
		\lim_{N \to\infty} \int_{\R}\left| f(t) - \sum_{k=-N}^{N} x(k) \sinc(t-k) \right|^{p}\di{t}
		= 0
	\end{align*}
	is satisfied. In particular, \(f\) is the unique solution to the interpolation problem
	\(f(k) = x(k)\), \(k\in\Z\). Conversely, for all signals \(f\in\Bs_\pi^p\), $1 < p < \infty$,
	the sequence \(\sq{f(k)}_{k\in\Z}\) belongs to \(\ell^p\) and there exist constants $C_L = C_L(p)>0$ and
	$C_R = C_R(p)>0$, independent of \(f\), such that
	 \begin{equation*}
    C_L \sum_{k=-\infty}^{\infty} \lvert f(k) \rvert^p 
    \leq
    \lVert f \rVert_{\Bs_\pi^p}^p
    \leq
    C_R  \sum_{k=-\infty}^{\infty} \lvert f(k) \rvert^p 
  \end{equation*}
  holds true.
\end{theoremnonumber}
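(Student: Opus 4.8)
The plan is to prove the ``conversely'' part first --- the two-sided sampling inequality, valid for every $f\in\Bs_\pi^p$ --- and then to deduce the reconstruction statement (convergence of the truncated series, the interpolation identity $f(k)=x(k)$, and uniqueness) from it; the restriction $1<p<\infty$ will enter in an essential way. The left inequality is elementary: with $I_k:=[k-\tfrac12,k+\tfrac12]$, the fundamental theorem of calculus gives $\lvert f(k)\rvert\le\int_{I_k}\lvert f(t)\rvert\di{t}+\int_{I_k}\lvert f'(t)\rvert\di{t}$, and since $\lvert I_k\rvert=1$, Jensen's inequality yields $\lvert f(k)\rvert^p\le 2^{p-1}\bigl(\int_{I_k}\lvert f\rvert^p+\int_{I_k}\lvert f'\rvert^p\bigr)$; summing over $k\in\Z$ (the $I_k$ tile $\R$) and using Bernstein's inequality $\lVert f'\rVert_{p}\le\pi\lVert f\rVert_{p}$ (valid on $\Bs_\pi^p$) gives $\sum_{k\in\Z}\lvert f(k)\rvert^p\le 2^{p-1}(1+\pi^p)\lVert f\rVert_{\Bs_\pi^p}^p$, so $\sq{f(k)}_{k\in\Z}\in\ell^p$ and one may take $C_L=\bigl(2^{p-1}(1+\pi^p)\bigr)^{-1}$. (Alternatively, this is the classical Plancherel--P\'olya inequality, obtainable from the sub-mean-value property of the subharmonic function $\lvert f\rvert^p$.)

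The right inequality $\lVert f\rVert_{\Bs_\pi^p}^p\le C_R\sum_{k\in\Z}\lvert f(k)\rvert^p$ is the substantive half, and I would realize the ``sample-then-interpolate'' map as a composition of bounded operators. Fix $\varphi\in\Ss$ with $\Ft{\varphi}\equiv 1$ on $[-\pi,\pi]$ and $\operatorname{supp}\Ft{\varphi}\subseteq[-2\pi,2\pi]$. Since the $2\pi$-periodic distribution with Fourier coefficients $\sq{f(k)}_{k\in\Z}$ agrees with $\Ft{f}$ on $(-\pi,\pi)$ while $\Ft{\varphi}$ equals $1$ on the spectrum of $f$, a short Fourier-side computation gives $f=P_\pi\bigl(\sum_{k\in\Z}f(k)\,\varphi(\spacedot-k)\bigr)$, where $P_\pi$ is the Paley--Wiener projection with multiplier $\indicatorf_{[-\pi,\pi]}$. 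Now: (i) by the left inequality, sampling $g\mapsto\sq{g(k)}_{k\in\Z}$ maps $\Bs_\pi^p$ boundedly into $\ell^p$; (ii) reconstruction $\sq{c(k)}_{k\in\Z}\mapsto\sum_k c(k)\varphi(\spacedot-k)$ maps $\ell^p$ boundedly into $L^p(\R)$, which follows from the convexity estimate $\bigl(\sum_k\lvert c(k)\rvert\lvert\varphi(t-k)\rvert\bigr)^p\le\bigl(\sum_k\lvert c(k)\rvert^p\lvert\varphi(t-k)\rvert\bigr)\bigl(\sum_k\lvert\varphi(t-k)\rvert\bigr)^{p-1}$ followed by integration in $t$, using $\varphi\in L^1(\R)$ and $\sup_t\sum_k\lvert\varphi(t-k)\rvert<\infty$; (iii) $P_\pi$ is bounded on $L^p(\R)$ for $1<p<\infty$, since $\indicatorf_{[-\pi,\pi]}$ is an $L^p$ Fourier multiplier --- expanding it through $\sgn$, this is the $L^p$-boundedness of the Hilbert transform (M.~Riesz). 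Composing (i)--(iii) yields $\lVert f\rVert_{p}\le C_p\lVert\sq{f(k)}_{k\in\Z}\rVert_{\ell^p}$. \emph{This is the main obstacle, and (iii) is exactly where $1<p<\infty$ is indispensable --- both halves of the equivalence fail at $p\in\{1,\infty\}$, already because $\sinc\notin L^1(\R)$.} (If preferred, the right inequality can be quoted directly from \cite{levin96_book}.)

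Granting the two-sided inequality, the remainder is bookkeeping. Given $x\in\ell^p$, put $S_N:=\sum_{\lvert k\rvert\le N}x(k)\sinc(\spacedot-k)\in\Bs_\pi^p$; since $\sinc(j-k)=\delta_{jk}$ for $j,k\in\Z$, the sample sequence of $S_N-S_M$ ($N>M$) is $x$ restricted to $\{M<\lvert k\rvert\le N\}$, so the right inequality gives $\lVert S_N-S_M\rVert_{\Bs_\pi^p}^p\le C_R\sum_{M<\lvert k\rvert\le N}\lvert x(k)\rvert^p\to 0$. Hence $\sq{S_N}_{N\in\N}$ is Cauchy in $L^p(\R)$ and converges to some $f\in L^p(\R)$; as the Fourier transform is continuous on $\Ssd$ and each $\Ft{S}_N$ is supported in $[-\pi,\pi]$, so is $\Ft{f}$, whence (Paley--Wiener) $f\in\Bs_\pi^p$ and $\lVert f-S_N\rVert_{p}\to 0$, the asserted convergence. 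Point evaluation $g\mapsto g(k)$ is bounded on $\Bs_\pi^p$ --- by Nikolskii's inequality $\lvert g(k)\rvert\le\lVert g\rR\rVert_{\infty}\le C_p\lVert g\rVert_{\Bs_\pi^p}$ --- so $f(k)=\lim_N S_N(k)=x(k)$ for every $k\in\Z$, i.e.\ $f$ solves the interpolation problem. Finally, uniqueness: if $f_1,f_2\in\Bs_\pi^p$ both interpolate $x$, then $g:=f_1-f_2\in\Bs_\pi^p$ vanishes at every integer, so the right inequality forces $\lVert g\rVert_{\Bs_\pi^p}^p\le C_R\cdot 0=0$ and, $g$ being entire, $g\equiv 0$; the same observation shows the $f$ constructed above is the unique element of $\Bs_\pi^p$ with samples $x$, regardless of the mode of summation.
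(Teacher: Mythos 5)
Your proof is correct in outline, but note that the paper itself does not prove this statement at all: it quotes the Plancherel--P\'olya theorem as a classical result, citing \cite[\emph{Theorem}~3, p.~152]{levin96_book} and \cite[\emph{Chapter}~6]{higgins96_book}. So what you supply is a self-contained argument where the paper relies on the literature, and your route is the standard modern one: the left inequality via the fundamental theorem of calculus, Jensen on unit intervals and Bernstein's inequality; the right inequality by factoring the map $\sq{f(k)}_{k\in\Z}\mapsto f$ through a Schwartz quasi-interpolant followed by the bandlimiting projection, whose $L^p$-boundedness is M.~Riesz's theorem on the Hilbert transform; then the direct part (Cauchy partial sums, Paley--Wiener support argument, Nikolskii for pointwise evaluation, uniqueness from the right inequality) is indeed bookkeeping. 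This has the virtue of isolating exactly where $1<p<\infty$ enters, which resonates with the paper's later endpoint counterexamples (Lemmas~\ref{lem:bo1} and~\ref{lem:bo4}). Two caveats. First, the identity $f=P_\pi\bigl(\sum_{k\in\Z}f(k)\varphi(\spacedot-k)\bigr)$ deserves a line of justification: it rests on the Poisson-summation/periodization identity for $\Ft{f}$ (distributional for $2<p<\infty$) and on ruling out contributions supported at the endpoints $\pm\pi$, which is standard but glossed over in your sketch; alternatively, as you say, the right inequality can simply be quoted from \cite{levin96_book}. Second, your emphasized aside that \emph{both} halves of the equivalence fail at $p\in\{1,\infty\}$ is inaccurate: the left (sampling) inequality $\lVert \SOp{p}f\rVert_{\ell^p}\leq C(p)\lVert f\rVert_{p}$ holds for all $1\leq p\leq\infty$ --- the paper uses precisely this fact --- and it is only the right inequality, i.e.\ the boundedness of the interpolation operator $\TOp{p}$, that breaks down at the endpoints. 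This does not affect the validity of your proof of the stated theorem, since it concerns only $1<p<\infty$.
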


For \(1< p < \infty\), the Plancherel--P\'olya theorem yields a convenient relation between 
the spaces \(\Bs_{\pi}^{p}\) and \(\ell^p\). 
According to the Plancherel--P\'olya theorem, the integers form a set of uniqueness
for signals $f \in \Bs_{\pi}^{p}$, $1< p< \infty$, i.e., \(f\) is uniquely determined by 
the sequence \(\sq{f(k)}_{k\in\Z} \in \ell^p\). This holds true for
\(p\in \{1, \infty\} \) as well. In particular, we have
\begin{align}
	f \equiv 0	\quad\Leftrightarrow\quad \sq{f(k)}_{k\in\Z} \equiv 0	
\end{align}
for all signals $f \in \Bs_{\pi}^{p}$, \(p\in\{1,\infty\}\). For \(p = 1\), this can easily be deduced from
the inclusion \(\Bs_{\pi}^{1} \subsetneq \Bs_{\pi}^{s}\) for \(s > 1\). The equivalence then 
follows by application of the Plancherel--P\'olya theorem. However, it should be noted that there exist sequences
\(\sq{c_k}_{k\in\Z} \in \ell^1\) such that \emph{no} function \(f\in\Bs_{\pi}^{1}\) satisfies the interpolation condition \(\sq{c_k}_{k\in\Z} = \sq{f(k)}_{k\in\Z}\). In other words, the inclusion
\begin{align*}
	\big\{\sq{f(k)}_{k\in\Z} : f \in \Bs_{\pi}^1 \big\} \subset \ell^1
\end{align*} 
is proper. For details, we refer to \cite[\emph{Lecture}~21, p.~155-162]{levin96_book} and \cite[\emph{Chapter}~6, p.~48-66]{higgins96_book}.

For convenience, we introduce the \emph{sampling operator} \setlength{\arraycolsep}{2pt}
\begin{equation*}
  \SOp{p} :\Bs_{\pi}^{p} \rightarrow  \ell^p,~f \mapsto \sq{f(k)}_{k \in \Z} 
\end{equation*}
and its inverse \(\SOp{p}^{-1} =: \TOp{p}\), which we refer to as \emph{interpolation operator}. 

For \(f \in \Bs_{\pi}^{p}\), \(1\leq p\leq \infty\), we have 
\(	\lVert \SOp{p}f \rVert_{\ell^p}
	\leq
	\Col{const:sampling-bounded}(p) \lVert f \rVert_{p}
\), 
where \(\Cor{const:sampling-bounded}(p) \) is independent of \(f\). 
Hence \(\SOp{p}\) is a bounded operator on all considered signal spaces. Since \(\SOp{p}\) is injective,
its inverse \(\TOp{p}\) is well-defined on the linear subspace
\begin{align*}
	\img\big(\SOp{p}\big) =: \mathrm{dom}\big(\TOp{p}\big) 
\end{align*}
of \(\Bs_{\pi}^{p}\). However, unlike \(\SOp{p}\), the operator \(\TOp{p}\) is generally unbounded, 
and the subspace \(\mathrm{dom}\big(\TOp{p}\big)\) is generally not closed.
In Sections~\ref{sec:UncompInfty} and~\ref{sec:UncompOne}, the unboundedness of \(\TOp{\infty}\) and \(\TOp{1}\), respectively, will play a crucial role with regards to computability.

\begin{remark}
	For \(\Bs_{\pi}^{\infty}\) and \(\ell^\infty\) in the sense of \emph{our} definition (c.f. Remark~\ref{rem:BInfty}), 
	the mapping \(\SOp{\infty} : \Bs_{\pi}^{\infty} \rightarrow \ell^\infty\) is one-to-one. This is \emph{not} the case for the relation between the (more general) spaces 
	of bounded entire functions of exponential type and bounded complex-valued sequences, i.e., the requirement of asymptotic decay to zero is necessary
	for the sampling operator to be injective.
\end{remark}

\section{Computable Bandlimited Signals} \label{sec:CBsignals}

In order to study the question of computability, we employ the theory of Turing machines. 
A Turing machine is an abstract device that manipulates symbols on an infinite working tape
according to a finite set of rules \cite{turing36,turing37}. The working tape represents the memory of
the machine, while the set of rules correspond to its program.
Although the concept is very simple, a Turing machine is theoretically capable of simulating any 
algorithm that can be implemented on a real world digital hardware. Hence, if a certain algorithmic problem
cannot be solved on a Turing machine, it can definitely not be solved on an actual computer.
Computability is a mature topic in computer science
\cite{weihrauch00_book,pour-el89_book,boolos02_book,avigad14}.
In signal processing, however, this aspect has not received much attention.

A recursive function is a function, mapping natural numbers into natural numbers, that is
built of simple computable functions and recursions. Among others, they were considered in~\cite{Kl36}.
We will not go into details here, for us it is important that the theory of recursive functions is 
equivalent to the theory of Turing machines, i.e., a function mapping natural numbers to natural numbers
is recursive if and only if it can be computed by a Turing machine~\cite{Tu37}.
For further information on recursive functions see for example \cite{soare87_book}.

A set $\Aset \subset \N$ is called recursively enumerable if it is either empty or the
range of a recursive function. 
A set $\Aset \subset \N$ is called recursive if both $\Aset$ and $\N \setminus \Aset$ are recursively
enumerable, which in turn holds true if and only if the \emph{indicator function} 
\begin{align}	\mathds{1}_\Aset :\N \rightarrow \{0,1\},~
				n \mapsto 	\begin{cases}	1	&\text{if}~ n\in\Aset,\\
											0	&\text{otherwise},
							\end{cases}
\end{align}
of \(\Aset\) is a recursive function. Furthermore, if \(\Aset\) is recursively enumerable, there exists a (total) recursive function
\(g_\Aset : \N^2 \rightarrow \{0,1\}\) that satisfies the following for all \(n\in\N\):
\begin{itemize}
	\item There exists a number \(m\in\N\) such that \(g_\Aset(n,m) = 1\) is satisfied if and only if \(n\in\Aset\) holds true.
	\item If \(g_\Aset(n,m) = 1\) holds true for a number \(m\in\N\), then \(g_\Aset(n,k) = 1\) holds true for all \(k\in\N\) that
		satisfy \(k > m\).
\end{itemize}
We call such a function a \emph{runtime function} for \(\Aset\).

Alan Turing introduced the concept of a computable real number in
\cite{turing36,turing37}. Our definition of a computable real number is based on
computable sequences of rational numbers \cite[p.~14]{pour-el89_book}.
\begin{definition}\label{de:computable-sequence-rationals} 
A sequence of rational numbers $\sq{r_n}_{n \in \N}$ is called computable sequence if there
exist (total) recursive functions\linebreak \(g_{\mathrm{si}},g_{\mathrm{nu}},g_{\mathrm{de}} : \N\to\N\) such that 
\begin{equation*}
  r_n
  =
  \frac{(-1)^{g_{\mathrm{si}}(n)} \cdot g_{\mathrm{nu}}(n)}{g_{\mathrm{de}}(n)} 
\end{equation*}
holds true for all \(n\in\N\).
\end{definition}

\begin{definition}\label{de:computable-real}
  A real number $x$ is said to be computable if there exist a computable sequence of
  rational numbers $\sq{r_n}_{n \in \N}$ and a recursive function $\xi \colon \N \to \N$
  such that $\lvert x - r_{n} \rvert \leq 2^{-M}$ holds true for all
  $n,M \in \N$ that satisfy $n \geq \xi(M)$.  
  By $\Rc$, we denote the set of computable real numbers,
  and by $\Cc = \Rc + \iu \Rc$ the set of computable complex numbers.
\end{definition}
The recursive, i.e., computable function $\xi$ allows us to control the
approximation error algorithmically.
This form of convergence, where we have a computable control of the approximation error is
called \emph{effective convergence}. Furthermore, the function \(\xi\) is referred to as
\emph{modulus of convergence} for the sequence $\sq{r_n}_{n \in \N}$.

Note that commonly used numbers like $\e$ and $\pi$ are computable.
A non-computable real number was for example constructed in \cite{specker49}.

A pair \(\mathfrak{x} := (\sq{r_n}_{n\in\N},\xi)\), where \(\sq{r_n}_{n\in\N}\) is a 
computable sequence of rational numbers that, with respect to the recursive modulus of convergence \(\xi\), 
converges effectively towards some real number \(x\in\R\), is referred to as a \emph{standard description}
of the number \(x\). We denote the set of standard descriptions of real numbers by \(\stdR\).
Naturally, the computable real numbers induce an equivalence relation on \(\stdR\), where two standard descriptions
\(\mathfrak{x}\) and \(\mathfrak{x}'\) are equivalent if they are both standard descriptions of the same computable number \(x\in\Rc\).
In this case, we write
\begin{align*}
	[\mathfrak{x}]_\stdR = [\mathfrak{x}']_\stdR \equiv x.
\end{align*} 
Likewise, a pair \(\mathfrak{c} := (\mathfrak{x}_{\Re},\mathfrak{x}_\mathfrak{\Im})\) consisting of two standard descriptions of real numbers
is referred to as a \emph{standard description} of the complex number \(c\), whenever
\(
	\Re(c) \equiv \left[\mathfrak{x}_{\Re}\right]_\stdR
	\) and \(
	\Im(c) \equiv \left[\mathfrak{x}_{\Im}\right]_\stdR
\)
are satisfied. In this case, we write
\begin{align*}
	c \equiv [\mathfrak{c}]_{\stdC}.
\end{align*}
Further, we denote the set of standard descriptions of complex numbers by \(\stdC\).

\begin{definition}
	A complex-valued sequence \(\sq{x(k)}_{k\in\Z}\) is called \emph{elementary computable} 
	if there exists an interval \(\mathcal{I} := \{-L, \ldots, L\}\subset \Z\), \(L\in\N,\) as well as a \((2L + 1)\)-tuple $\sq{c_k}_{k \in \mathcal{I}}$ of computable complex numbers
	such that
	\begin{equation*}
	  x(k)	=	\begin{cases}	c_k		&\text{if}~ k \in \mathcal{I}, \\
								0		&\text{otherwise},
				\end{cases}
	\end{equation*}
	holds true for all \(k\in\N\).
\end{definition}

\begin{definition}
A sequence $\sq{x(k)}_{k \in \Z}$ in $\ell^p$, \(p\in\Rc\), $1 \leq p \leq \infty$, is called	
computable in $\ell^p$ if there exist a computable sequence $\sq{x_n}_{n \in \N}$,
of elementary computable sequences \(x_n\) and a recursive function $\xi' \colon \N \to \N$, such
that  
\(
	\lVert x - x_{n} \rVert_{\ell^p} \leq 2^{-M}
\)
holds true for all\linebreak $n,M \in \N$ that satisfy $n \geq \xi'(M)$. 
\end{definition}

We denote the set of all sequences that are computable in $\ell^{\infty}$ by $\Cczero$.
Similarly, we denote by $\Cell^p$ the set of all sequences that are computable in $\ell^p$.
Furthermore, we denote by \(\mathscr{X}^p\) the set of all pairs \(\mathfrak{X} = (\sq{x_n}_{n \in \N}, \xi')\)
such that \(\sq{x_n}_{n \in \N}\) is a computable sequence of elementary computable sequences in \(\ell^p\) that
converges effectively towards some \(x\in\ell^p\) with respect to the \(\ell^p\)-norm and the recursive modulus of convergence 
\(\xi'\). In this case, we write 
\(
	\left[\mathfrak{X}\right]_{\mathscr{X}}^p \equiv x.
\)

Next, we define computable bandlimited signals using the same definition as in
\cite{boche20c,boche20d,boche20e,boche20b} that is based on the finite Shannon sampling
series as an interpolation function.
\begin{definition}\label{de:elementary-computable}
A complex-valued function $f$ on \(\C\) is called \emph{elementary computable} 
if there exists an interval \(\mathcal{I} := \{-L, \ldots, L\}\subset \Z\), \(L\in\N,\) as well as a \((2L + 1)\)-tuple $\sq{c_k}_{k\in\mathcal{I}}$ of computable complex numbers
such that
\begin{equation*}
  f(z)
  =
  \sum_{k\in\mathcal{I}} c_k \sinc(z-k)
\end{equation*}
holds true for all \(z\in\C\).
\end{definition}

The building blocks of an elementary computable function are $\sinc$ functions. 
Hence, elementary computable functions are exactly those functions that can be represented
by a finite Shannon sampling series with computable coefficients $\sq{c_k}_{k\in\mathcal{I}}$.
Note that every elementary computable function $f$ is a finite sum of computable
continuous functions and hence a computable continuous function.
As a consequence, for every $t \in \Rc$ the number $f(t)$ is computable.
Further, the sum of finitely many elementary computable functions is elementary
computable, as well as the product of an elementary computable function with a computable
number.

\begin{definition}\label{def:ComputableSignal}
A signal $f \in \Bs_{\pi}^p$, \(p\in\Rc\), $1 \leq p \leq \infty$, is called computable in
$\Bs_{\pi}^p$ if there exists a computable sequence $\sq{f_n}_{n \in \N}$ of elementary computable functions
\(f_n\) and a recursive function $\xi \colon \N \to \N$, such that 
\(
	\lVert f- f_{n} \rVert_{\Bs_\pi^p} \leq 2^{-M}
\)
holds true for all\linebreak $n,M \in \N$ that satisfy $n \geq \xi(M)$. 
\end{definition}

We denote the set of all signals that are computable in $\Bs_{\pi}^{\infty}$ by $\CBs_{\pi}^{\infty}$.
Similarly, we denote by $\CBs_{\pi}^p$ the set of all signals that are computable in $\CBs_{\pi}^p$.
Furthermore, we denote by \(\mathscr{F}^p\) the set of all pairs \(\mathfrak{F} = (\sq{f_n}_{n \in \N}, \xi)\)
such that \(\sq{f_n}_{n \in \N}\) is a computable sequence of elementary computable signals in \(\Bs_{\pi}^p\) that 
converges effectively towards some \(f\in\Bs_{\pi}^p\) with respect to the \(\Bs_{\pi}^{p}\)-norm
and the recursive modulus of convergence \(\xi\). In this case, we write 
\(
	\left[\mathfrak{F}\right]_{\mathscr{F}}^p \equiv f.
\)

According to Definition \ref{def:ComputableSignal}, we can approximate any signal $f \in \CBs_{\pi}^p$,
\(p\in\Rc\), \(1 \leq p \leq \infty\), by an elementary computable signal, where we have an
``effective'', i.e. 
computable control of the approximation error.
For every prescribed approximation error $1/2^M$, $M \in \N$, we can compute
an index $M \in \N$ such that the approximation error
$\lVert f - f_{n} \rVert_{\Bs_{\pi}^p}$ is less than or equal to $1/2^M$ for all
$n \geq \xi(M)$.

\begin{remark}\label{rem:CompNorm}
	As indicated before, the spaces \(\Bs_{\pi}^{p}\) and \(\ell^p\) are Banach spaces for \(1 \leq p \leq \infty\) when
	equipped with the corresponding \(p\)-norm. In particular, the \(p\)-norm is a Turing computable function
	on \(\CBs_{\pi}^p\) and \(\Cell^p\). That is, there exist Turing machines
	\begin{align*}
		\TM_{\mathscr{F}}^p : \mathscr{F}^p \rightarrow \stdR \quad \text{and} \quad
		\TM_{\mathscr{X}}^p : \mathscr{X}^p \rightarrow \stdR
	\end{align*}
	such that
	\begin{align*}
		\big[\TM_{\mathscr{F}}^p(\mathfrak{F}) \big]_{\stdR} \equiv \big\lVert [\mathfrak{F}]_{\mathscr{F}}^p \big\rVert_{\Bs_{\pi}^{p}} \quad \text{and} \quad
		\big[\TM_{\mathscr{X}}^p(\mathfrak{X}) \big]_{\stdR} \equiv \big\lVert [\mathfrak{X}]_{\mathscr{X}}^p \big\rVert_{\ell^{p}}
	\end{align*}
	holds true for all \(\mathfrak{F}\in\mathscr{F}^p\) and all \(\mathfrak{X}\in\mathscr{X}^p\). For details, we refer to~\cite{BoMo21}.
\end{remark}

\section{Conversion of Computable Continuous-Time and Discrete-Time Signals}\label{sec:Conv}

For $f \in \CBs_{\pi}^p$, \(p\in\Rc\), $1\leq p \leq \infty$, let $\sq{f_n}_{n \in \N}$ be a computable sequence of elementary computable functions in \(\CBs_{\pi}^p\)
that converges effectively towards \(f\), with respect to some recursive modulus of convergence \(\xi : \N \rightarrow\N\). 
Analogously, for \(x\in\Cell^p\), \(p\in\Rc\), \(1\leq p \leq \infty\), let \(\sq{x_n}_{n\in\N}\) be a computable sequence of elementary computable sequences that converges effectively towards \(x\), 
with respect to some recursive modulus of convergence \(\xi' : \N \rightarrow\N\).
Observe the following:
\begin{itemize}
	\item The pair \((\sq{f_n}_{n\in\N},\xi) =: \mathfrak{F}\) is a machine-readable description of the signal  \(f\in\CBs_{\pi}^p\) in the language \(\mathscr{F}^p\).
	\item If \(\SOp{p}f = x\) is satisfied, the sequence \(x\in\Cell^p\) determines the signal \(f\in\CBs_{\pi}^p\) uniquely. Hence, in this case, the pair 
		\((\sq{x_n}_{n\in\N},\xi') =: \mathfrak{X}\) is a machine-readable description of the signal \(f\in\CBs_{\pi}^p\) in the language \(\mathscr{X}^p\).
\end{itemize}

From a practical point of view, \(\mathfrak{F}\) can be understood as a computable
\emph{continuous-time} description of \(f\). Furthermore, assuming \(\SOp{p}f = x\) is satisfied, \(\mathfrak{X}\) 
constitutes a computable \emph{discrete-time} description of \(f\). 
Analytically, both descriptions are equivalent, in the sense that they uniquely characterize the same signal. 
However, the analytic equivalence does \emph{not} a priori
imply the computability of the sampling operator \(\SOp{p}\) and its inverse \(\TOp{p}\) in the following sense:
\begin{itemize}
	\item For \(p\in\Rc\), $1\leq p \leq \infty$, we call \(\SOp{p} : \Bs_{\pi}^p \rightarrow \ell^p\) computable
		if there exists a Turing machine \(\TMSOp{p} : \mathscr{F}^p\rightarrow\mathscr{X}^p\) such that
		\begin{align*}
			\SOp{p} \big[ \mathfrak{F} \big]_{\mathscr{F}}^p \equiv \big[ \TMSOp{p} (\mathfrak{F}) \big]_{\mathscr{X}}^p
		\end{align*}
		is satisfied for all \(\mathfrak{F} \in \mathscr{F}^p\).
		That is, \(\TMSOp{p}\) returns a computable discrete-time description \(\mathfrak{X}\)
		of \(f\in\CBs_{\pi}^p\) whenever it is presented with a computable continuous-time description \(\mathfrak{F}\) of \(f\) as input.
	\item For \(p\in\Rc\), $1\leq p \leq \infty$, we call \(\TOp{p} : \img\big(\SOp{p}\big) \rightarrow \Bs_{\pi}^p\) computable
		if there exists a Turing machine \(\TMTOp{p} : \mathscr{X}^p\rightarrow\mathscr{F}^p\) such that
		\begin{align*}
			\TOp{p} \big[ \mathfrak{X} \big]_{\mathscr{X}}^p \equiv \big[ \TMTOp{p} (\mathfrak{X}) \big]_{\mathscr{F}}^p
		\end{align*}
		is holds true for all \(\mathfrak{X} \in \mathscr{X}^p\) that satisfy \(\SOp{p}f \equiv \big[ \mathfrak{X} \big]_{\mathscr{X}}^p\)
		for some \(f\in \CBs_{\pi}^p\).
		That is, \(\TMTOp{p}\), returns a computable continuous-time description \(\mathfrak{F}\)
		of \(f\in\CBs_{\pi}^p\) whenever it is presented with a computable discrete-time description \(\mathfrak{X}\) of \(f\) as input.
\end{itemize}

Let \(f\) be a signal in \(\CBs_{\pi}^p\), \(p\in\Rc\), \(1 \leq p \leq \infty\),
and assume that \(\mathfrak{F} = (\sq{f_n}_{n\in\N},\xi)\) is a computable continuous-time description thereof. 
By definition, \(\sq{f_n}_{n\in\N}\) is characterized by a computable sequence \(\sq{\sq{c_{n,k}}_{k \in \mathcal{I}(n)}}_{n\in\N}\) of tuples of 
computable complex numbers, where \(\mathcal{I}(n) = \{-L(n), \ldots, L(n)\}\) is a computable interval
of natural numbers, that satisfy
\begin{equation*}
  f_n(z)
  =
  \sum_{k \in \mathcal{I}(n)} c_{n,k} \sinc(z-k)
\end{equation*}
for all \(n\in\N, z\in\C\). Next, define \(x := \SOp{p}f\) as well as \(\sq{x_n}_{n\in\N} := \sq{\SOp{p}f_n}_{n\in\N}\). Then, for all \(n\in\N\), we have 
\begin{align*}
	x_n(k) = 	\begin{cases}	c_{n,k}	&\text{if}~ k\in\mathcal{I}(n),\\
													0				&\text{otherwise},
						\end{cases}
\end{align*}
for all \(k\in\Z\). Hence, \(\sq{x_n(k)}_{k\in\Z}\) is an elementary computable sequence for all \(n\in\N\). 
With \(\Col{const:log-sampling-bounded} := \lceil \log_2 \Cor{const:sampling-bounded} \rceil\), we have 
\begin{align*}
  2^{\Cor{const:log-sampling-bounded}-M} 
		&\geq	\Cor{const:sampling-bounded} \lVert f-f_n \rVert_{\Bs_\pi^p} \\
		&\geq	\lVert \SOp{p}(f - f_n) \rVert_{\ell^p} \\
		&= \lVert (\SOp{p}f) - (\SOp{p}f_n) \rVert_{\ell^p} \\
		&= \lVert x - x_n \rVert_{\ell^p}
\end{align*}
for all \(n,M \in \N\) that satisfy \(n \geq \xi(M)\). 
Define the function \(\xi' :~ \N \rightarrow \N,~ M \mapsto \xi(\Cor{const:log-sampling-bounded} + M)\)
and observe that \(\xi'\) is recursive. We have
\begin{align*}
	\lVert x - x_n \rVert_{\ell^p} \geq 2^{-M}
\end{align*}
for all \(n,M\in\N\) that satisfy \(n\geq \xi'(M)\). Hence, the sequence \(\sq{x_n}_{n\in\N}\) converges effectively
towards \(x\in\ell^p\), with respect to the recursive modulus of convergence \(\xi'\). In other words, \(\mathfrak{X} = (\sq{x_n}_{n\in\N},\xi')\)
is a computable discrete-time description of \(f\).

The previous paragraph explicitly characterizes an algorithm that, given a computable continuous-time description of a signal \(f\in\CBs_{\pi}^p\)
as input, returns a computable discrete-time description of the same signal. Hence, the sampling operator \(\SOp{p}\) is computable for all \(p\in[1,\infty]\cap\Rc\).

In digital signal processing, a signal \(f\in\CBs_{\pi}^{p}\), \(p\in\Rc\), \(1\leq p\leq \infty\) is usually characterized by a discrete-time description 
\(\mathfrak{X}\). 
In contrast, there are many reasons why a continuous-time description \(\mathfrak{F}\) can be beneficial. We will discuss some of them in the following.
\begin{itemize}
	\item\textit{time concentration}. For a signal \(f\in\CBs_{\pi}^{p}\), \(p\in\Rc\), \(1\leq p \leq \infty\) and a number \(L\in\Rc\), the mapping
		\begin{align}	\label{eq:timeconcentration}
			L \mapsto 	\begin{cases}	\int_{-L}^{L} |f(t)|^p \di{t}	&\text{if}~1\leq p < \infty,\\
										\max_{-L\leq t\leq L} |f(t)|		&\text{if}~p=\infty,
						\end{cases}
		\end{align}
		serves as an indicator of the size of that ``portion'' of \(f\) which is located in the interval \([-L,L]\), and is
		referred to as \emph{time concentration} of \(f\). Given a continuous-time description \(\mathfrak{F}\), 
		\eqref{eq:timeconcentration} can be evaluated algorithmically. For details, we again refer to~\cite{BoMo21}. 
	\item\textit{\(p\)-norm}. As indicated in Remark~\ref{rem:CompNorm}, the \(p\)-Norm of \(f\) can directly be calculated 
		from any continuous-time description \(\mathfrak{F}\) of \(f\).
	\item\textit{time-derivative}. Among other things, the time-derivative \(\dot{f}\) of a signal \(f\in\Bs_{\pi}^{p},~ 1\leq p\leq \infty\), 
		is essential for estimating its dynamics by means of the mean value theorem. We have
		\begin{align*}
			\big\lVert \dot{f}\big\rVert_{\Bs_{\pi}^{p}} \leq \pi \big\lVert  f \big\rVert_{\Bs_{\pi}^{p}}.
		\end{align*}
		Hence, if \((\sq{f_n}_{n\in\N},\xi)\) is a computable continuous-time description of \(f\), we further have
		\begin{align*}
			\big\lVert \dot{f} - \dot{f_n} \big\rVert_{\Bs_{\pi}^{p}} \leq \pi \big\lVert  f -f_n \big\rVert_{\Bs_{\pi}^{p}}
			\leq \frac{\pi}{2^{-M}} < \frac{1}{2^{2-M}}.
		\end{align*}	
		Consequently, from any computable continuous-time description of \(f\), we can directly determine a 
		computable continuous-time description of the time-derivative \(\dot{f}\).
\end{itemize}
Accordingly, the question arises whether the interpolation operator is computable. The remainder of the article will address this problem.

\section{Uncomputability of the Interpolation Operator for \MakeLowercase{\(p = \infty\)}}\label{sec:UncompInfty}
In this section, we consider the limit case of \(p = \infty\). We will prove that the 
associated interpolation operator \(\TOp{\infty} : \img\big(\SOp{\infty}\big) \rightarrow \Bs_{\pi}^\infty\) is 
not computable (in the sense of Section \ref{sec:Conv}):

\begin{theorem}\label{thm:UncompInfty}
  The interpolation operator \(\TOp{\infty} : \img\big(\SOp{\infty}\big) \rightarrow \Bs_{\pi}^\infty\) is uncomputable.
\end{theorem}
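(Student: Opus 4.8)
I would argue by contradiction using a \emph{commitment--forcing} argument: assume a Turing machine $\TMTOp{\infty}:\mathscr{X}^{\infty}\to\mathscr{F}^{\infty}$ computes $\TOp{\infty}$; feed it a trivial valid input; observe that it must commit, after finitely many steps, to a finite piece of its output while having read only a finite prefix of the input; then switch to a different valid input that agrees with the first one on that prefix but whose interpolated signal has a huge peak value, so that the already-committed output cannot be correct for both. The engine making the switch possible is the \emph{unboundedness} of $\TOp{\infty}$ (noted right before the theorem): the $\ell^{\infty}$-norm of a sample sequence gives no control at all over the $\Bs_{\pi}^{\infty}$-norm of the interpolated signal, in sharp contrast to the case $p=2$.

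\textbf{Step 1: a quantitative unboundedness lemma.} Since $\sinc(\tfrac12-k)=(-1)^k/\big(\pi(\tfrac12-k)\big)$, the partial sums $\sum_{k=-N}^{N}\lvert\sinc(\tfrac12-k)\rvert$ diverge (they grow like $\tfrac{2}{\pi}\log N$). Hence for every $\varepsilon>0$ and every $K>0$ there is an $N\in\N$ such that the elementary computable signal
\[
  g_{\varepsilon,K} := \varepsilon\sum_{k=-N}^{N}\sgn\!\big(\sinc(\tfrac12-k)\big)\,\sinc(\spacedot-k)
\]
has $\lVert\SOp{\infty}g_{\varepsilon,K}\rVert_{\ell^{\infty}}=\varepsilon$ while $\lVert g_{\varepsilon,K}\rVert_{\infty}\ge g_{\varepsilon,K}(\tfrac12)=\varepsilon\sum_{k=-N}^{N}\lvert\sinc(\tfrac12-k)\rvert>K$. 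Note $g_{\varepsilon,K}\in\CBs_{\pi}^{\infty}$ whenever $\varepsilon\in\Rc$ (an elementary computable function is computable via the constant sequence).

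\textbf{Step 2: forcing and falsifying a commitment.} Let $\mathfrak{X}_0\in\mathscr{X}^{\infty}$ be the pair consisting of the constant-zero sequence of elementary computable sequences together with the constant-zero modulus; this is legitimate and $[\mathfrak{X}_0]_{\mathscr{X}}^{\infty}=\SOp{\infty}0$ with $0\in\CBs_{\pi}^{\infty}$. If $\TMTOp{\infty}$ does not produce a valid $\mathscr{F}^{\infty}$-output on $\mathfrak{X}_0$, it already violates the specification on a valid input and we are done; otherwise $\TMTOp{\infty}(\mathfrak{X}_0)$ is a valid description $(\sq{f_n}_{n\in\N},\xi)$ of $0$, so the machine must, after finitely many steps, output $n_1:=\xi(1)$ and the code of $f_{n_1}$, which forces $\lVert f_{n_1}\rVert_{\infty}=\lVert 0-f_{n_1}\rVert_{\Bs_{\pi}^{\infty}}\le 2^{-1}$. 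By simulation, read off the finite prefix used up to this moment: say the machine has inspected only the sequence entries with index $\le L$ (all zero) and the modulus values with index $\le M_0$ (all zero). Now take $g:=g_{2^{-M_0},\,2}$ from Step 1 and let $\mathfrak{X}_{\mathfrak{T}}\in\mathscr{X}^{\infty}$ be the pair whose sequence equals the zero sequence for indices $\le L$ and equals $\SOp{\infty}g$ for indices $>L$, with modulus $M\mapsto 0$ for $M\le M_0$ and $M\mapsto L+1$ for $M>M_0$. One checks this modulus is recursive and valid because $\lVert\SOp{\infty}g\rVert_{\ell^{\infty}}=2^{-M_0}$; moreover $[\mathfrak{X}_{\mathfrak{T}}]_{\mathscr{X}}^{\infty}=\SOp{\infty}g$ with $g\in\CBs_{\pi}^{\infty}$, and $\mathfrak{X}_{\mathfrak{T}}$ agrees with $\mathfrak{X}_0$ on every input position inspected before the commitment. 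Since the machine is deterministic, it outputs the same $n_1$ and the same $f_{n_1}$ when run on $\mathfrak{X}_{\mathfrak{T}}$; correctness there forces $\lVert g-f_{n_1}\rVert_{\infty}\le 2^{-1}$, hence $\lVert f_{n_1}\rVert_{\infty}\ge\lVert g\rVert_{\infty}-\tfrac12>\tfrac32$, contradicting $\lVert f_{n_1}\rVert_{\infty}\le\tfrac12$. Therefore no such $\TMTOp{\infty}$ exists.

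\textbf{Main obstacle.} There is no deep analytic difficulty beyond the logarithmic divergence in Step 1; the delicate point is the bookkeeping in Step 2 — making sure the switched input $\mathfrak{X}_{\mathfrak{T}}$ is a genuine element of $\mathscr{X}^{\infty}$ whose prescribed modulus is simultaneously (i) recursive, (ii) a valid modulus for the switched sequence, and (iii) in agreement with the finitely many modulus values the machine already read from $\mathfrak{X}_0$. These three constraints are what pin down the ``$0$ up to index $M_0$, then $L+1$'' shape of the modulus and force the scaling $\varepsilon=2^{-M_0}$ of the blow-up signal; it is precisely the failure of any uniform bound $\lVert f\rVert_{\infty}\lesssim\lVert\SOp{\infty}f\rVert_{\ell^{\infty}}$ that makes such a scaled signal with arbitrarily large peak available.
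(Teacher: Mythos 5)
Your Step 1 is sound and is essentially the paper's Lemma~\ref{lem:bo1} in disguise (the same logarithmic divergence of $\sum_k \lvert\sinc(\sfrac{1}{2}-k)\rvert$ produces sample sequences of norm $\varepsilon$ whose interpolants blow up at $t=\sfrac{1}{2}$), but the way you convert discontinuity into uncomputability has a genuine gap. Step 2 rests on the finite-use principle: ``the machine inspected only sequence entries with index $\le L$ and modulus values with index $\le M_0$, so by determinism it reproduces the committed output on any input agreeing there.'' That inference is valid only if the machine accesses its input as an oracle/stream of queried values (a Type-2 model). In the paper's framework an element $\mathfrak{X}=(\sq{x_n}_{n\in\N},\xi')\in\mathscr{X}^{\infty}$ is an algorithmic description, i.e.\ a finite program (indices of the recursive functions generating the data), and $\TMTOp{\infty}$ receives that finite code as input. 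Your switched input $\mathfrak{X}_{\mathfrak{T}}$ is a \emph{different} program, so nothing forces the machine to repeat the committed prefix of its run on $\mathfrak{X}_0$: it may inspect the program text itself and diverge from the very first step, even though the described sequences agree on every position you tracked. Continuity theorems that rescue discontinuity arguments in the program-input (Markov) setting, such as Kreisel--Lacombe--Shoenfield/Ceitin, carry completeness-type hypotheses that are not available here: as the paper itself notes, $\mathrm{dom}\big(\TOp{\infty}\big)=\img\big(\SOp{\infty}\big)$ is not closed in $\ell^{\infty}$. (Even in a strict stream model, ``output $n_1$ and the code of $f_{n_1}$ after finitely many steps'' needs care: the coefficients of $f_{n_1}$ are computable numbers, hence infinite objects of which you only ever see approximations; this part is fixable, the input-model issue is not.)

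The paper closes exactly this hole by a reduction from a recursively enumerable nonrecursive set $\Aset\subset\N$: from the blow-up functions of Lemma~\ref{lem:bo1} it builds (Lemma~\ref{lem:bo2}) a computable double sequence $\sq{x_{m,k}}_{m,k\in\N}$ of elementary computable sequences converging effectively, uniformly in $m$, to $\SOp{\infty}f_m^*$ with $f_m^*(\sfrac{1}{2})=\mathds{1}_{\Aset}(m)$; feeding this single uniformly computable family to a hypothetical $\TMTOp{\infty}$ yields computable approximations of $f_m^*(\sfrac{1}{2})$ with error below $\sfrac{1}{2}$, hence a decision procedure for $\Aset$ --- a contradiction that does not depend on how descriptions are encoded or on what the machine ``reads.'' To repair your proof you would either have to fix an oracle-type representation of $\mathscr{X}^{\infty}$ and show it matches the paper's definition of computability of $\TOp{\infty}$, or replace the prefix-switching step by a uniform reduction of this kind; your Step 1 can be reused verbatim as the analytic ingredient.
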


In essence, the uncomputability of \(\TOp{\infty}\) is a consequence of its discontinuity.
In the following, we will establish two preliminary lemmas. Afterward, we provide a proof for Theorem~\ref{thm:UncompInfty}.

	\begin{lemma}\label{lem:bo1}
		There exists a computable sequence $\sq{f_n}_{n \in \N}$ of elementary computable functions such that
		\begin{align*}	f_n \left( \sfrac{1}{2} \right) = 1,\quad
						\lVert \SOp{\infty}f_n \rVert_{\ell^{\infty}} < \sfrac{1}{n},\quad\text{and}\quad
						\lVert f_n \rVert_{\Bs_{\pi}^{\infty}} \leq \Col{const:le:bo1}
		\end{align*}
		are satisfied for all \(n\in\N\).
	\end{lemma}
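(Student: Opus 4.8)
\emph{Proof plan.} The plan is to build each $f_n$ as a normalized truncation of the (in general divergent) Shannon sampling series attached to the sign pattern of the point $\sfrac12$. First I would fix, for every integer $k$,
\[
	\epsilon_k := \sgn\bigl(\sinc(\tfrac12-k)\bigr)\in\{-1,+1\},\qquad
	A_M := \sum_{|k|\le M}\bigl|\sinc(\tfrac12-k)\bigr|=\frac1\pi\sum_{|k|\le M}\frac{1}{|k-\sfrac12|}.
\]
Since $\sinc(\sfrac12-k)=(-1)^k/(\pi(\sfrac12-k))$, the sign $\epsilon_k=(-1)^k\sgn(\sfrac12-k)$ depends on $k$ alone and is trivially computable, and $A_M=\tfrac2\pi R_M$ with $R_M:=\sum_{|k|\le M}|2k-1|^{-1}\in\mathbb{Q}$. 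Next I would let $M_n$ be the least integer $M\ge3$ with $R_M>2n$; this $M_n$ is a recursive function of $n$ (the test compares rationals and $R_M\to\infty$) and guarantees $A_{M_n}=\tfrac2\pi R_{M_n}>\tfrac4\pi n>n$. Finally set
\[
	f_n(t):=\frac{1}{A_{M_n}}\sum_{|k|\le M_n}\epsilon_k\,\sinc(t-k).
\]
Because $f_n$ is a finite sampling series with coefficients $\epsilon_k/A_{M_n}$, and $A_{M_n}$ is $\tfrac2\pi$ times a rational computable from $n$, the sequence $\sq{f_n}_{n\in\N}$ is a computable sequence of elementary computable functions; and each $f_n\in\Bs_\pi^\infty$, since a finite $\sinc$-sum is entire of exponential type $\pi$ and is $O(1/|t|)$ on $\R$.

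Two of the three required properties are then immediate. From $\epsilon_k\sinc(\sfrac12-k)=|\sinc(\sfrac12-k)|$ one gets $f_n(\sfrac12)=A_{M_n}^{-1}\sum_{|k|\le M_n}|\sinc(\sfrac12-k)|=1$. From $\sinc(j-k)=\delta_{jk}$ for $j,k\in\Z$ one gets $f_n(j)=\epsilon_j/A_{M_n}$ for $|j|\le M_n$ and $f_n(j)=0$ otherwise, hence $\lVert\SOp{\infty}f_n\rVert_{\ell^\infty}=1/A_{M_n}<1/n$.

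The substantive step is the uniform sup-norm bound. Using $\sin(\pi(t-k))=(-1)^k\sin(\pi t)$ one rewrites, for $t\notin\Z$,
\[
	f_n(t)=\frac{\sin(\pi t)}{\pi A_{M_n}}\,S_{M_n}(t),\qquad
	S_M(t):=\sum_{k=-M}^{0}\frac1{t-k}-\sum_{k=1}^{M}\frac1{t-k}.
\]
For $|t|>2M$ every $|t-k|$ with $|k|\le M$ exceeds $M$, so $|S_M(t)|\le(2M+1)/M\le3$ and $|f_n(t)|\le3/(\pi A_{M_n})\le3/\pi$. For $|t|\le 2M$ I would isolate the at most two integers $k\in[-M,M]$ closest to $t$: each such term contributes at most $\tfrac{|\sin\pi t|}{\pi A_{M_n}}\cdot\tfrac1{|t-k|}\le\tfrac1{A_{M_n}}$ to $|f_n(t)|$, because $|\sin\pi t|\le\pi\,\mathrm{dist}(t,\Z)$ cancels the singularity; the remaining terms contribute at most $\tfrac1{\pi A_{M_n}}\sum_k\tfrac1{|t-k|}\le\tfrac{2}{\pi A_{M_n}}(1+\log(3M_n))$ by comparison with a harmonic sum. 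Hence $|f_n(t)|\le C_0(1+\log M_n)/A_{M_n}$ for an absolute constant $C_0$ (at integer $t$ one has $|f_n(t)|\le 1/A_{M_n}$ from above). Combining this with the elementary bound $A_M\ge\tfrac2\pi\sum_{k=1}^M\tfrac1{2k-1}\ge\tfrac1\pi\log M$ and with $M_n\ge3$ yields $\lVert f_n\rVert_{\Bs_\pi^\infty}\le C$ for an absolute constant $C$, which is the constant claimed in the statement.

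I expect the harmonic estimate $\lVert S_M\rVert_\infty=O(\log M)$ to be the only real obstacle: the whole construction works precisely because the truncated conjugate sampling kernel $S_M$ grows at the \emph{same} logarithmic rate as the normalizing constant $A_M$, so that dividing by $A_{M_n}$ produces a family bounded in $\Bs_\pi^\infty$ whose integer samples (all equal to $\pm1/A_{M_n}$) vanish. The computability of the data, the interpolation value at $\sfrac12$, and the values at the integers are routine bookkeeping.
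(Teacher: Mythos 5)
Your plan is correct and is essentially the paper's own construction: a truncated sampling kernel whose coefficient signs are matched to $\sinc(\sfrac12-k)$, normalized by the logarithmically growing constant $A_M$ (the paper's $C(N)$), so that the value at $\sfrac12$ is $1$, the integer samples are $\pm 1/A_{M_n}\to 0$, and the sup-norm stays bounded because the harmonic-type estimate $O(\log M)$ for the kernel is cancelled by the normalizer of the same order. The only differences are cosmetic: you use a symmetric two-sided sum and a recursive search for the least admissible $M_n$, whereas the paper takes the one-sided alternating sum with the explicit choice $N=2^{8n}$.
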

	Observe that Lemma \ref{lem:bo1} already implies the unboundedness (and hence discontinuity) of the interpolation
	operator \(\TOp{\infty}\), since
	\begin{align}	\label{eq:DiscT}
		\lim_{n\to\infty} \frac{\lVert \TOp{\infty}\SOp{\infty}f_n \rVert_{\Bs_{\pi}^{\infty}}}{\lVert \SOp{\infty}f_n \rVert_{\ell^{\infty}}}
			= \lim_{n\to\infty} \frac{\lVert f_n \rVert_{\Bs_{\pi}^{\infty}}}{\lVert \SOp{\infty}f_n \rVert_{\ell^{\infty}}}
			> \lim_{n\to\infty} \frac{1}{\sfrac{1}{n}} = \infty
	\end{align}
	holds true. The discontinuity of \(\TOp{\infty}\) is one of the core ingredients in proving its uncomputability.
	
	\begin{proof}[Proof of Lemma \ref{lem:bo1}]
		To begin with, consider the function \(g : \R\times\N \rightarrow \R\) defined according to
		\begin{align*}
		  g(t,N) :	&= \frac{1}{C(N)} \sum_{k=1}^{N} (-1)^k \sinc(t-k), \\
		  C(N) :	&= -\frac{1}{\pi} \sum_{k=1}^{N} \frac{1}{k-\frac{1}{2}}.  
		\end{align*}
		and set \(f_n(t) := g\big(t, 2^{8n}\big)\) for all \(n\in\N,t\in\R\). Then, $\sq{f_n}_{n \in \N}$ is a 
		computable sequence of elementary computable functions, and hence a computable sequence of 
		functions in $\CBs_{\pi}^{\infty}$.
		
		In the following, we want to prove that $\sq{f_n}_{n \in \N}$ satisfies the properties
		required by the lemma. First, observe that 
		\begin{align}
			\sinc\left(\frac{1}{2} - k\right) = \frac{(-1)^k}{\pi(\frac{1}{2}-k)}
		\end{align}
		holds true for all \(k\in\N\). Consequently, for all \(n\in\N\), we have
	\begin{align*}	f_n(\sfrac{1}{2}) 
					&=	\left(-\frac{1}{\pi} \sum_{k=1}^{2^{8n}} \frac{1}{k-\frac{1}{2}}\right)^{-1}
						\sum_{k=1}^{2^{8n}} (-1)^k \frac{(-1)^k}{\pi(\frac{1}{2}-k)} \\
					&=	\left( \sum_{k=1}^{2^{8n}} \frac{1}{\frac{1}{2}-k}\right)^{-1}
						\sum_{k=1}^{2^{8n}} \frac{1}{(\frac{1}{2}-k)} = 1.
	\end{align*}
	Next, observe that for all \(N\in\N\), the inequality
	\begin{align*}
	  \lvert C(N) \rvert
	  &=
	  \frac{1}{\pi} \sum_{k=1}^{N} \frac{1}{k-\frac{1}{2}}
	  >
		\frac{1}{\pi} \sum_{k=1}^{N} \int_{k}^{k+1} \frac{1}{\tau-\frac{1}{2}} \di{\tau} \\
	  &=
		\frac{1}{\pi} \int_{1}^{N+1} \frac{1}{\tau-\frac{1}{2}} \di{\tau} \\
	  &=
		\frac{1}{\pi} \left(\ln\left(N+\frac{1}{2}\right) - \ln\left(\frac{1}{2}\right)\right) \\
	  &>
		\frac{\log_2(N)}{4} 
	\end{align*}
	is satisfied. Furthermore, we have
	\begin{align*}
	  \lVert \SOp{\infty}f_n \rVert_{\ell^{\infty}}
	  &=
	   \sup_{m\in\N} \left| \frac{1}{ C(2^{8n}) } \sum_{k=1}^{2^{8n}} (-1)^k \sinc(m-k)\right| \\
	  &\leq \frac{1}{ |C(2^{8n})| }  \sup_{m\in\N}  \sum_{k=1}^{2^{8n}} \left|(-1)^k \sinc(m-k)\right| \\
	  &= \frac{1}{ |C(2^{8n})| }  \sup_{m\in\N}  \sum_{k=1}^{2^{8n}} \mathds{1}_{m}(k) \\
	  &= \frac{1}{ |C(2^{8n})| } 
	  \end{align*}
	  for all \(n\in\N\), where \(\mathds{1}_{m} : \N \rightarrow \{0,1\}\) is the indicator function of the singleton set \(\{m\}\subset\N\).
	  Consequently,
	  \begin{align*}
	  \lVert \SOp{\infty}f_n \rVert_{\ell^{\infty}}<
	  \frac{4}{\log_2(2^{8n})}
	  =
	  \frac{1}{2 n \log_2(2)}
	  <
	  \frac{1}{n} 
	\end{align*}
	is satisfied for all \(n\in\N\).
	It remains to show that the sequence \(\sq{f_n}_{n\in\N}\) satisfies \(\lVert f_n \rVert_{\Bs_{\pi}^{\infty}} \leq \Cor{const:le:bo1}\)
	for all \(n\in\N\). For $t \in \Z$ and $N \in \N$, $N \geq 1$, we have
	\begin{align*}
	  \lvert g(t,N) \rvert
	  \leq
	  \frac{1}{\lvert C(N) \rvert}
	  <
	  \frac{4}{\log_2(N)} .
	\end{align*}
	  Furthermore, for $t \in \R \setminus \Z$, we have
	  \begin{align*}
		&\left| \sum_{k=1}^{N} (-1)^k \frac{\sin(\pi(t-k))}{\pi(t-k)} \right|
		\leq
		  \sum_{k=1}^{N} \left| \frac{\sin(\pi(t-k))}{\pi(t-k)} \right| \notag \\
		&\quad<
		  2 + \frac{1}{\pi} \sum_{k=1}^{k_1(t)} \frac{1}{t-k} + \frac{1}{\pi}
		  \sum_{k=k_2(t)}^{N} \frac{1}{k-t} \notag \\
		&\quad<
		  2 + \frac{1}{\pi} \sum_{k=1}^{k_1(t)} \frac{1}{k_1(t)+1-k}
		 + \frac{1}{\pi} \sum_{k=k_2(t)}^{N} \frac{1}{k-k_2(t)+1} \notag \\
		&\quad=
		  2 + \frac{1}{\pi} \sum_{k=1}^{k_1(t)} \frac{1}{k} + \frac{1}{\pi}
		  \sum_{k=1}^{N-k_2(t)+1} \frac{1}{k} \notag \\
		&\quad\leq
		  2 + \frac{2}{\pi} \sum_{k=1}^{N} \frac{1}{k} \notag \\
		&\quad\overset{\text{(a)}}{<}
		  2 + \frac{2}{\pi} + \frac{2}{\pi} \log_2(N) ,
	  \end{align*}
	  where $k_1(t)$ is the largest natural number that is smaller than or equal to $N$ and
	  satisfies $k_1(t) + 1 < t$. 
	  If no such number exists, then the sums above involving $k_1(t)$ are the empty sums.
	  Furthermore, $k_2(t)$ is the smallest natural number such that $k_2(t)-1 > t$ holds true.
	  If $k_2(t) > N$ is satisfied, then the sums above involving $k_2(t)$ are the empty sums.
	  Moreover, (a) follows from the inequality
	  \begin{align*}
		\sum_{k=1}^{N} \frac{1}{k}
		<
		1 + \sum_{k=2}^{N} \int_{k-1}^{k} \frac{1}{\tau} \di{\tau}
		=
		  1 + \int_{1}^{N} \frac{1}{\tau} \di{\tau}
		=
		  1 + \ln(N). 
	  \end{align*}
	It follows that there exists a constant $~\Cor{const:le:bo1}$, such that
	\begin{align*}
	  \lvert g(t,N) \rvert
	  &=
	  \frac{1}{\lvert C(N) \rvert} \left| \sum_{k=1}^{N} (-1)^k
		  \frac{\sin(\pi(t-k))}{\pi(t-k)} \right| \\
	  &\leq
		\frac{8 + \frac{8}{\pi} + \frac{8}{\pi} \log_2(N)}{\log_2(N)} \leq
	  \Cor{const:le:bo1}
	\end{align*}
	is satisfied for all $t \in \R$ and all $N \in \N$, $N \geq 1$.
	Hence, the sequence \(\sq{f_n}_{n\in\N}\) satisfies \(\lVert f_n \rVert_{\Bs_{\pi}^{\infty}} \leq \Cor{const:le:bo1}\)
	for all \(n\in\N\), which concludes the proof.
	\end{proof}

\begin{lemma}\label{lem:bo2}
	Let $\Aset \subset \N$ be a recursively enumerable set. There exists a sequence
	$\sq{f_m^*}_{m \in \N}$ of elementary computable functions that satisfies the following:
	\begin{enumerate}
		\item The sequence \(\sq{x_m^*}_{m\in\N} = \sq{\SOp{\infty}f_m^*}_{m \in \N}\) is a computable sequence of sequences in $\Cczero$.
		\item We have \(f_m^*(\sfrac{1}{2}) = \mathds{1}_{\Aset}(m)\) for all \(m\in\N\).
	\end{enumerate}
\end{lemma}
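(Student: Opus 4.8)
The plan is to interleave the sequence $\sq{f_n}_{n\in\N}$ from Lemma~\ref{lem:bo1} with an enumeration of $\Aset$, using the smallness clause $\lVert \SOp{\infty}f_n \rVert_{\ell^{\infty}} < \sfrac{1}{n}$ to obtain a \emph{uniform} effective control on the approximation error. Fix a runtime function $g_{\Aset} : \N^2 \to \{0,1\}$ for $\Aset$ and, for $m \in \Aset$, put $t_m := \min\{\,t\in\N : g_{\Aset}(m,t) = 1\,\}$, which is well defined by the defining properties of a runtime function. I would then set
\begin{equation*}
	f_m^* := \begin{cases} f_{t_m}, & m \in \Aset,\\[2pt] 0, & m \notin \Aset, \end{cases}
	\qquad\text{and}\qquad x_m^* := \SOp{\infty} f_m^* .
\end{equation*}
Each $f_m^*$ is elementary computable (trivially for the zero function, and by Lemma~\ref{lem:bo1} for $f_{t_m}$). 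Property~2 is then immediate, since $f_{t_m}(\sfrac12) = 1$ by Lemma~\ref{lem:bo1} while the zero function vanishes at $\sfrac12$; hence $f_m^*(\sfrac12) = \mathds{1}_{\Aset}(m)$ in every case. Also each $x_m^*$ is finitely supported with computable entries, so $x_m^* \in \Cczero$; the only substantive work is the \emph{uniform} effective computability of the family $\sq{x_m^*}_{m\in\N}$ required by Property~1.

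For that I would exhibit an explicit approximating double sequence. Define, for $m,n\in\N$,
\begin{equation*}
	x_{m,n} := \begin{cases} \SOp{\infty} f_{t}, & g_{\Aset}(m,n) = 1,\ t := \min\{\,s\le n : g_{\Aset}(m,s) = 1\,\},\\[2pt] 0, & g_{\Aset}(m,n) = 0. \end{cases}
\end{equation*}
By monotonicity of $g_{\Aset}$ in its second argument, $g_{\Aset}(m,n) = 1$ forces $t = t_m \le n$, so $x_{m,n}$ is produced from $m,n$ by a bounded search through $g_{\Aset}(m,0),\dots,g_{\Aset}(m,n)$ followed by writing down the finite Shannon data of $\SOp{\infty}f_{t_m}$ (the coefficients $(-1)^k/C(2^{8t_m})$ being uniformly computable since $C(N)$ is). Thus $\sq{x_{m,n}}_{m,n\in\N}$ is a computable double sequence of elementary computable sequences.

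The error estimate is where the smallness clause of Lemma~\ref{lem:bo1} does the work. Fix $m$ and $n\ge 1$. If $m\notin\Aset$, or if $m\in\Aset$ and $n\ge t_m$, then $x_{m,n} = x_m^*$ and the error vanishes. If $m\in\Aset$ and $n < t_m$, then $x_{m,n} = 0$ and
\begin{equation*}
	\lVert x_m^* - x_{m,n} \rVert_{\ell^{\infty}} = \lVert \SOp{\infty} f_{t_m} \rVert_{\ell^{\infty}} < \frac{1}{t_m} \le \frac{1}{n+1} < \frac{1}{n}.
\end{equation*}
Hence $\lVert x_m^* - x_{m,n}\rVert_{\ell^{\infty}} < \sfrac1n$ for every $m$ and every $n\ge 1$, so $\xi'(M) := 2^M$ is a recursive modulus of convergence that works uniformly in $m$. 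Together, the double sequence $\sq{x_{m,n}}$ and $\xi'$ witness that $\sq{x_m^*}_{m\in\N}$ is a computable sequence of sequences in $\Cczero$, which is Property~1, completing the proof.

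I expect the main obstacle to be precisely this last point. Since $\Aset$ need not be recursive, the ``switch-on time'' $t_m$ cannot be computed uniformly in $m$, so the naive two-valued approximation comes with no obvious modulus; the resolution is that the bound $\lVert\SOp{\infty}f_n\rVert_{\ell^{\infty}} < \sfrac1n$ makes the not-yet-triggered value $x_{m,n} = 0$ automatically within $\sfrac1n$ of the true limit $x_m^*$, regardless of whether or when $m$ enters $\Aset$ — which is presumably why Lemma~\ref{lem:bo1} was stated with that clause, and why the same construction will later furnish the discontinuity-plus-computability contradiction needed for Theorem~\ref{thm:UncompInfty}.
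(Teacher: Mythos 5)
Your proof is correct and follows essentially the same route as the paper: both use the runtime function of $\Aset$ to delay switching on the Lemma~\ref{lem:bo1} functions, with the clause $\lVert \SOp{\infty}f_n\rVert_{\ell^{\infty}} < \sfrac{1}{n}$ supplying the uniform recursive modulus despite $\Aset$ being possibly nonrecursive. The only (cosmetic) difference is your approximating double sequence, which stays exactly $0$ before the trigger time and then jumps to the limit, whereas the paper's approximants $\SOp{\infty}f_{h(m,k)}$ are small nonzero sequences before stabilization; the error analysis is the same in spirit and both yield the required computable description.
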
\begin{proof}
	Let $\Aset \subset \N$ be a recursively enumerable set with runtime function
	\(g_\Aset : \N^2 \rightarrow \{0,1\}\). Consider the function \(h : \N^2 \rightarrow \N\)
	defined according to 
	\begin{align*}
		h(m,k) := \sum_{l=0}^{2^{k+2}} (1 - g_\Aset(m,l)).
	\end{align*}
	Further, let \(\sq{f_n}_{n\in\N}\)
	be a computable sequence of elementary computable functions as specified by Lemma \ref{lem:bo1}
	and define
	\begin{align*}
			f_{m,k} := f_{h(m,k)}
	\end{align*}
	for all \(m,k\in\N\). Then, \(\sq{f_{m,k}}_{m,k\in\N}\) is a computable double sequence of elementary computable functions.
	Moreover, the sequence \(\sq{x_{m,k}}_{m,k\in\N} := \sq{\SOp{\infty}f_{m,k}}_{m,k\in\N}\) is a computable double sequence of 
	elementary computable sequences.
	
	For $m \in \Aset$, there exists \(k\in\N\) such that for all \(l\in\N\) that satisfy \(l\geq k\), we have
	\(
	f_{m,l} = f_{m,k},
	\)
	i.e., the limit value \(\lim_{l\to\infty} f_{m,l}\) exists and is an elementary computable function.
	Furthermore, there exists an \(n\in\N\) such that \(f_n = \lim_{l\to\infty} f_{m,l}\) is satisfied.
	We define
	\begin{align*}
		f_m^* = 			\begin{cases}		\lim_{l\to\infty} 	f_{m,l} 	&\quad \text{if}~ m \in \Aset, \\
																	0			&\quad \text{otherwise},
							\end{cases}
	\end{align*}
	for all \(m\in\N\). Hence, \(\sq{f_m^*}_{m\in\N}\) is a sequence of elementary computable functions.
	Furthermore, the sequence \(\sq{x_m^*}_{m\in\N} = \sq{\SOp{\infty}f_m^*}_{m\in\N}\) is a sequence of elementary computable sequences.
	
	In the following, we will prove by case distinction that for all \(m\in\N\), the sequence \(\sq{x_{m,k}}_{k\in\N}\) converges effectively towards \(x_m^*\) in \(\ell^\infty\),
	with respect to the recursive modulus of convergence \(\xi' : \N^2 \rightarrow \N,~ (m,K) \mapsto K\). First, assume that \(m\in\Aset\) is satisfied. 
	Then, there exists \(k\in\N\) such that for all 
	\(K\in\N\) that satisfy \(K\geq k\), we have \(x_{m,K} = x_{m,k}\). Consider the smallest such \(k\in\N\) and observe the following:
	\begin{itemize}
		\item Assume that \(K\in\N\) satisfies \(K\geq k\). Then, we have
			\begin{align*} \lVert x_m^* - x_{m,K} \rVert_{\ell^\infty} = \lVert x_{m,k} - x_{m,k} \rVert_{\ell^\infty} = 0 < \frac{1}{2^K}.
			\end{align*}
		\item Assume that \(K\in\N\) satisfies \(K < k\). Observe that by the properties of the runtime function \(g_{\mathcal{A}}\) and the construction of
			\(h\) as above, \(K < k\) implies \(h(m,k) \geq h(m,K) = 2^{K+2} + 1\). Then, we have
			\begin{align*} \lVert x_m^* - x_{m,K} \rVert_{\ell^\infty} 	&= \lVert x_{m,k} - x_{m,K} \rVert_{\ell^\infty} \\
																		&\leq\lVert x_{m,k} \rVert_{\ell^\infty} + \lVert x_{m,K} \rVert_{\ell^\infty} \\
																		&=\lVert \SOp{\infty} f_{h(m,k)} \rVert_{\ell^\infty} + \lVert \SOp{\infty} f_{h(m,K)} \rVert_{\ell^\infty}\\
																		&\leq \frac{1}{h(m,k)} + \frac{1}{2^{K+2} + 1} \\
																		&\leq \frac{1}{2^{K+2}} + \frac{1}{2^{K+2}} \\
																		&\leq \frac{2}{2^{K+2}} \\ 
																		&= \frac{1}{2^{K+1}}.
			\end{align*}
	\end{itemize}
	Hence, for \(m\in\Aset\), we have \(\lVert x_m^* - x_{m,K} \rVert_{\ell^\infty} < 2^{-K}\) for all \(K\in\N\). Assume now that \(m\in\Aset^{\complement}\) is satisfied
	and observe the following:
	\begin{itemize}
		\item If \(m\in\Aset^{\complement}\) holds true, then \(x_m^* = 0\) is satisfied. Thus, for all \(K\in\N\), we have
			\begin{align*} \lVert x_m^* - x_{m,K} \rVert_{\ell^\infty} = \lVert x_{m,K} \rVert_{\ell^\infty} \leq \frac{1}{2^{K+2}}.
			\end{align*}
	\end{itemize}
	We conclude that \(\lVert x_m^* - x_{m,K} \rVert_{\ell^\infty} < 2^{-K}\) is satisfied for all \(m,K\in\N\). In other words, \(\sq{x_{m,k}}_{k\in\N}\) converges effectively towards 
	\(x_m^*\) in \(\ell^\infty\), with respect to the recursive modulus of convergence \(\xi' : \N^2 \rightarrow \N,~ (m,K) \mapsto K\). 
	Consequently, \(\sq{x_m^*}_{m\in\N}\) is a computable sequence of functions in \(\Cell^\infty\).
	
	It remains to show that the sequence \(\sq{f_m^*}_{m\in\N}\) satisfies \(f_m^*(\sfrac{1}{2}) = \mathds{1}_{A}(m)\) for all \(m\in\N\), which we again prove by case distinction.
	Recall that if \(m\in\Aset\) is satisfied, then \(f_m^*\) is an elementary computable function such that \(f_n = f_m^*\) holds true for some \(n\in\N\). By assumption,
	we have \(f_n(\sfrac{1}{2}) = 1\) for all \(n\in\N\). Hence, if \(m\in\Aset\) holds true, we have  
	\begin{align*}
			f_m^*(\sfrac{1}{2}) = f_n(\sfrac{1}{2}) = 1.
	\end{align*}
	On the other hand, if \(m\in\Aset^{\complement}\) is satisfied, then \(f_m^*\) is the trivial elementary computable function, i.e., we have \(f_m^* = 0\).
	Thus, in this case, 
	\begin{align*}	f_m^*(\sfrac{1}{2}) = 0
	\end{align*}
	holds true. We conclude that \(\sq{f_m^*}_{m\in\N}\) satisfies \(f_m^*(\sfrac{1}{2}) = \mathds{1}_{A}(m)\) for all \(m\in\N\),
	which completes the proof.  	
\end{proof}

\begin{proof}[Proof of Theorem \ref{thm:UncompInfty}]
	We prove the theorem by contradiction. Let \(\Aset \subset \N\) be a recursively enumerable, nonrecursive subset of the natural numbers.
	Further, let \(\sq{f_m^*}_{m\in\N}\) be a sequence of elementary computable functions as specified by Lemma \ref{lem:bo2}. Define
	\(\sq{x_m^*}_{m\in\N} := \sq{\SOp{\infty}f_m^*}_{m\in\N}\) and consider a computable double sequence \(\sq{x_{m,k}}_{m,k\in\N}\) of elementary computable sequences
	such that for all \(m\in\N\), the sequence \(\sq{x_{m,k}}_{k\in\N}\) converges effectively towards \(x_m^*\) in \(\ell^\infty\),
	with respect to the recursive modulus of convergence \(\xi' : \N^2 \rightarrow \N,~ (m,K) \mapsto \xi'(m,K)\). 
	
	Since \(\sq{f_m^*}_{m\in\N}\) is a sequence of elementary computable functions, it is also a sequence of functions in \(\CBs_{\pi}^{\infty}\). Moreover,
	the pair 
	\begin{align*}
		\mathfrak{X}^* := \big(\sq{x_{m,k}}_{m,k\in\N}, \xi'\big)
	\end{align*}
	is a computable discrete-time description of the sequence of functions \(\sq{f_m^*}_{m\in\N}\) in the language \(\mathscr{X}^\infty\). 
	
	Assume now that 
	\(\TOp{\infty}\) is computable in the sense of Section \ref{sec:Conv}. Then,
	there exists a Turing machine \(\TMTOp{\infty}\) such that the pair
	\begin{align*}
		\mathfrak{F}^* = \big(\sq{f_{m,k}}_{m,k\in\N},\xi\big) := \TMTOp{\infty}\big(\sq{x_{m,k}}_{m,k\in\N}, \xi'\big)
	\end{align*}
	is a computable continuous-time description of the sequence \(\sq{f_m^*}_{m\in\N}\) in \(\CBs_{\pi}^{\infty}\). In other words,
	\(\sq{f_{m,k}}_{m,k\in\N}\) is a computable double sequence of elementary computable functions and \(\xi: \N^2 \rightarrow \N, (m,K) \mapsto \xi(m,K)\)
	is a (total) recursive function such that
	\begin{align*}
			\lVert f_m^* - f_{m,k} \rVert_{\Bs_{\pi}^{\infty}} < \frac{1}{2^K}
	\end{align*}
	holds true for all \(m,k,K\in\N\) that satisfy \(k \geq \xi(m,K)\). Further, we have 
	\begin{align*}
		| f_m^*(\sfrac{1}{2}) - f_{m,k}(\sfrac{1}{2}) | < \frac{1}{2^K}
	\end{align*}
	for all \(m,k,K\in\N\) that satisfy \(k \geq \xi(m,K)\), as follows from the properties of the \(\CBs_{\pi}^{\infty}\)-norm.
	
	Observe that for all \(m,k\in\N\), the signal \(f_{m,k}\) is a finite linear combination of computable continuous functions and thus,
	as indicated in Section \ref{sec:CBsignals}, a computable continuous function itself. Consequently, the sequence \(\sq{f_{m,k}(\sfrac{1}{2})}_{m,k}\)
	is a computable double sequence of computable complex numbers. Moreover, there exists computable double sequences \(\sq{r_{m,k}}_{m,k\in\N}\),
	\(\sq{s_{m,k}}_{m,k\in\N}\), of rational numbers, such that
	\begin{align*}
		\big| f_m^*(\sfrac{1}{2}) - (r_{m,k} +\iu s_{m,k}) \big| < \frac{1}{2^K}
	\end{align*}
	holds true for all \(m,k,K\in\N\) that satisfy \(k \geq \xi(m,K)\). 
	
	Recall that by construction, we have \(f_m^*(\sfrac{1}{2}) = \mathds{1}_{\Aset}(m)\) for all \(m\in\N\). It follows that
	\begin{align*}
		m \in \Aset 	\quad \Leftrightarrow \quad 	\big(1 - r_{m,\xi(m,1)}\big)^2 + \big(s_{m,\xi(m,1)}\big)^2 < \frac{1}{4} 
	\end{align*}
	holds true for all \(m\in\N\). In particular, we have
	\begin{align*}
		\Aset = \Bigg\{m \in \N : \underbrace{\big(1 - r_{m,\xi(m,1)}\big)^2 + \big(s_{m,\xi(m,1)}\big)^2 < \frac{1}{4}}_{\equiv: P(m)} \Bigg\}.
	\end{align*}
	Since both \(\sq{r_{m,k}}_{m,k\in\N}\) and \(\sq{s_{m,k}}_{m,k\in\N}\) are computable sequences of rational numbers, the predicate \(P\)
	can be evaluated algorithmically, making \(\Aset\) a recursive set. The latter is a direct contradiction to the assumption, which concludes the proof.
\end{proof}

Furthermore, we obtain the following from the proof of Theorem~\ref{thm:UncompInfty}:

\begin{corollary}
	There does \emph{not} exist a Turing machine\linebreak \(\TM : \mathscr{X}^\infty \times \stdR \rightarrow \stdR\) 
	such that 
	\begin{align}
		\big[\TM(\mathfrak{X},\mathfrak{t})\big]_{\stdR} \equiv f\big(t\big)
	\end{align}
	holds true for all \((f,t) \in \CBs_{\pi}^{\infty}\times \Rc\), \((\mathfrak{X},\mathfrak{t}) \in \mathscr{X}^\infty \times \stdR\), that satisfy
	\((\SOp{\infty}f,t)\equiv ([\mathfrak{X}]_{\mathscr{X}}^\infty,[\mathfrak{t}]_{\stdR})\).
\end{corollary}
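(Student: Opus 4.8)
The plan is to recycle the construction behind Lemma~\ref{lem:bo2} and the proof of Theorem~\ref{thm:UncompInfty}: a machine that can evaluate $f$ at the \emph{single} computable point $\sfrac{1}{2}$ from nothing but a discrete-time description would already be strong enough to decide a recursively enumerable nonrecursive set. So I would argue by contradiction. Suppose a Turing machine $\TM : \mathscr{X}^\infty \times \stdR \rightarrow \stdR$ with the stated property exists. Fix a recursively enumerable, nonrecursive $\Aset\subset\N$ and invoke Lemma~\ref{lem:bo2} to obtain the sequence $\sq{f_m^*}_{m\in\N}$ of elementary computable functions with $f_m^*(\sfrac{1}{2}) = \mathds{1}_{\Aset}(m)$, together with the computable double sequence $\sq{x_{m,k}}_{m,k\in\N}$ of elementary computable sequences and the recursive modulus $\xi'(m,K) = K$ witnessing that $\sq{\SOp{\infty}f_m^*}_{m\in\N}$ is a computable sequence in $\Cczero$.

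Next I would \emph{slice} this family: for each $m$, the pair $\mathfrak{X}_m^* := \bigl(\sq{x_{m,k}}_{k\in\N},\, K\mapsto K\bigr)$ is an element of $\mathscr{X}^\infty$ with $[\mathfrak{X}_m^*]_{\mathscr{X}}^\infty \equiv \SOp{\infty}f_m^*$, and since $\sq{x_{m,k}}_{m,k\in\N}$ is a \emph{computable} double sequence, the assignment $m\mapsto\mathfrak{X}_m^*$ is effective. As $\sfrac{1}{2}\in\Rc$ admits the trivial standard description $\mathfrak{t} := \bigl(\sq{\sfrac{1}{2}}_{n\in\N},\, M\mapsto 0\bigr)\in\stdR$, I would feed $(\mathfrak{X}_m^*,\mathfrak{t})$ into $\TM$; by hypothesis $\TM(\mathfrak{X}_m^*,\mathfrak{t})\in\stdR$ is then a standard description of $f_m^*(\sfrac{1}{2}) = \mathds{1}_{\Aset}(m)$, and it is produced uniformly in $m$ by a single Turing machine.

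Finally, from a standard description of a number known a priori to lie in $\{0,1\}$ one can decide which value it is: query the underlying computable rational sequence at the index supplied by its modulus of convergence for precision $M = 2$, obtaining a rational within $\sfrac{1}{4}$ of the value, and compare that rational to $\sfrac{1}{2}$. Doing this uniformly in $m$ produces a recursive function equal to $\mathds{1}_{\Aset}$, so $\Aset$ would be recursive, contradicting its choice; hence no such $\TM$ exists. I expect no genuine obstacle here, since the whole content is carried by Lemma~\ref{lem:bo2}; the only point needing a little care is the uniformity bookkeeping --- that the slices $\mathfrak{X}_m^*$, and thus the outputs $\TM(\mathfrak{X}_m^*,\mathfrak{t})$ and the extracted rational approximations, depend computably on the parameter $m$ --- which is immediate from the double-sequence computability asserted in Lemma~\ref{lem:bo2}. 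One may also view the corollary as a trivial strengthening of Theorem~\ref{thm:UncompInfty}: it is precisely the point-evaluation step used inside that proof, with the hypothetical $\TMTOp{\infty}$ (plus the subsequent evaluation of $f_{m,k}$ at $\sfrac{1}{2}$) replaced by a direct point-evaluator taking $(\mathfrak{X},\mathfrak{t})$ as input.
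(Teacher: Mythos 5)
Your proposal is correct and follows essentially the same route as the paper, which likewise derives the corollary from the machinery of Lemma~\ref{lem:bo2} and the proof of Theorem~\ref{thm:UncompInfty}: a hypothetical point-evaluator at the computable point $\sfrac{1}{2}$ plays exactly the role that $\TMTOp{\infty}$ followed by evaluation of $f_{m,k}(\sfrac{1}{2})$ plays there, and deciding $\mathds{1}_{\Aset}(m)$ from a $\sfrac{1}{4}$-accurate rational approximation yields the same contradiction with $\Aset$ nonrecursive. The uniformity-in-$m$ bookkeeping you flag is indeed the only point requiring care, and it holds for the reason you give.
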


Finally, observe that for a recursively enumerable set \(\Aset\) and a sequence 
$\sq{f_m^*}_{m \in \N}$ as specified by Lemma \ref{lem:bo2},
we have 
\begin{align}
		\lVert f_m^* \rVert_{\Bs_{\pi}^{\infty}} 
		\begin{cases}	\geq 	1 	&\text{if}~m\in\Aset,\\
									= 		0		&\text{otherwise}.
		\end{cases}
\end{align}
Hence, we also obtain the following from the proof of Theorem~\ref{thm:UncompInfty}:

\begin{corollary}
	There does \emph{not} exist a Turing machine\linebreak \(\TM : \mathscr{X}^\infty \rightarrow \stdR\) 
	such that 
	\begin{align}
		\big[\TM(\mathfrak{X})\big]_{\stdR} \equiv \lVert f \rVert_{\Bs_{\pi}^{\infty}} 
	\end{align}
	holds true for all \(f \in \CBs_{\pi}^{\infty}\), \(\mathfrak{X} \in \mathscr{X}^\infty\), that satisfy
	\(\SOp{\infty}f\equiv [\mathfrak{X}]_{\mathscr{X}}^\infty\).
\end{corollary}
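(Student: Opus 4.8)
\emph{Proof proposal.} The plan is to reuse the construction underlying the proof of Theorem~\ref{thm:UncompInfty} almost verbatim, this time exploiting the dichotomy in the \(\Bs_{\pi}^{\infty}\)-norm of the functions produced by Lemma~\ref{lem:bo2} rather than their value at \(\sfrac12\). Fix a recursively enumerable, nonrecursive set \(\Aset\subset\N\), and let \(\sq{f_m^*}_{m\in\N}\) be a sequence of elementary computable functions as in Lemma~\ref{lem:bo2}. As already established in the proof of Theorem~\ref{thm:UncompInfty}, there is a computable double sequence \(\sq{x_{m,k}}_{m,k\in\N}\) of elementary computable sequences and a recursive modulus \(\xi'\) such that, for each fixed \(m\), the pair \(\mathfrak{X}_m^* := \big(\sq{x_{m,k}}_{k\in\N},\,\xi'(m,\spacedot)\big)\) lies in \(\mathscr{X}^\infty\) and satisfies \(\big[\mathfrak{X}_m^*\big]_{\mathscr{X}}^\infty \equiv \SOp{\infty}f_m^*\); moreover the assignment \(m\mapsto\mathfrak{X}_m^*\) is effective, i.e.\ a program for \(\mathfrak{X}_m^*\) can be computed from \(m\).

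Next I would record the relevant norm dichotomy. By Lemma~\ref{lem:bo2} we have \(f_m^* \equiv 0\) whenever \(m\notin\Aset\), whereas \(f_m^*(\sfrac12) = 1\) whenever \(m\in\Aset\); since the \(\Bs_{\pi}^{\infty}\)-norm dominates \(|f_m^*(\sfrac12)|\), it follows that
\[
	\lVert f_m^* \rVert_{\Bs_{\pi}^{\infty}} \geq 1 \ \text{for}\ m\in\Aset, \qquad\text{and}\qquad \lVert f_m^* \rVert_{\Bs_{\pi}^{\infty}} = 0 \ \text{for}\ m\notin\Aset .
\]

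Assume now, towards a contradiction, that a Turing machine \(\TM:\mathscr{X}^\infty\to\stdR\) as in the statement exists. Applying \(\TM\) to the inputs \(\mathfrak{X}_m^*\) and unpacking the resulting standard descriptions would yield, uniformly in \(m\), a computable double sequence \(\sq{\rho_{m,j}}_{m,j\in\N}\) of rationals together with a recursive function \(\eta:\N^2\to\N\) such that \(\big|\,\lVert f_m^*\rVert_{\Bs_{\pi}^{\infty}} - \rho_{m,j}\,\big|\leq 2^{-M}\) for all \(j\geq\eta(m,M)\). Taking \(M=2\) and \(j=\eta(m,2)\) gives a rational within \(\tfrac14\) of \(\lVert f_m^*\rVert_{\Bs_{\pi}^{\infty}}\), so the dichotomy above gives
\[
	m\in\Aset \quad\Longleftrightarrow\quad \rho_{m,\eta(m,2)} > \tfrac12 .
\]
Since \(\sq{\rho_{m,j}}_{m,j}\) is a computable double sequence of rationals and \(\eta\) is recursive, the right-hand predicate is decidable in \(m\), whence \(\Aset\) would be recursive --- contradicting the choice of \(\Aset\). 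Therefore no such \(\TM\) exists.

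The one point that needs care is the uniformity asserted above: one must verify that feeding the single machine \(\TM\) the parametrized inputs \(\mathfrak{X}_m^*\) produces standard descriptions that are computable uniformly in \(m\), with a single recursive modulus \(\eta(m,\spacedot)\). This is exactly the ``double-sequence / parametrized-input'' bookkeeping already carried out in the proof of Theorem~\ref{thm:UncompInfty} (and in the accompanying corollary on pointwise evaluation): it amounts to composing the effective map \(m\mapsto\mathfrak{X}_m^*\) with \(\TM\) and threading the parameter \(m\) through the encodings of \(\mathscr{X}^\infty\) and \(\stdR\). No new idea beyond that proof is required.
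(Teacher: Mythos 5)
Your proposal is correct and follows essentially the same route as the paper: the paper derives this corollary from the proof of Theorem~\ref{thm:UncompInfty} together with the observation that \(\lVert f_m^*\rVert_{\Bs_{\pi}^{\infty}}\geq 1\) for \(m\in\Aset\) and \(=0\) otherwise, which is precisely the dichotomy-plus-threshold argument you give. Your explicit attention to the uniformity of \(m\mapsto\mathfrak{X}_m^*\) is exactly the bookkeeping already performed in that proof, so nothing further is needed.
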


\section{Uncomputability of the Interpolation Operator for \MakeLowercase{\(p = 1\)}} \label{sec:UncompOne}
In section \ref{sec:UncompInfty}, we have shown the uncomputability of \(\TOp{\infty}\). 
In this section, we investigate the limit case of \(p = 1\). We will prove that the 
associated interpolation operator \(\TOp{1} : \img\big(\SOp{1}\big) \rightarrow \Bs_{\pi}^1\) is 
not computable (in the sense of Section \ref{sec:Conv}) either:

\begin{theorem}\label{thm:UncompOne}
  The interpolation operator \(\TOp{1} : \img\big(\SOp{1}\big) \rightarrow \Bs_{\pi}^1\) is uncomputable.
\end{theorem}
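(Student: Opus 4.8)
The plan is to reproduce the architecture of the proof of Theorem~\ref{thm:UncompInfty} with two substitutions: the point evaluation $f\mapsto f(\sfrac{1}{2})$ is replaced by the norm $f\mapsto\lVert f\rVert_{\Bs_{\pi}^{1}}$ as the quantity that ``flags'' membership in a recursively enumerable set, and Lemmas~\ref{lem:bo1} and~\ref{lem:bo2} are replaced by $p=1$ counterparts. As in the $p=\infty$ case, the engine is the discontinuity of the interpolation operator; here one uses (cf.\ the discussion following the Plancherel--P\'olya theorem) that $\img(\SOp{1})$ is a \emph{proper} dense subspace of the hyperplane $\{c\in\ell^1:\sum_k(-1)^kc_k=0\}$, so that $\SOp{1}$ is not bounded below and $\TOp{1}$ is unbounded. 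Recall also the elementary fact that a finite Shannon sampling series $\sum_{\lvert k\rvert\le N}c_k\sinc(\cdot-k)$ lies in $\Bs_{\pi}^{1}$ if and only if $\sum_{\lvert k\rvert\le N}(-1)^kc_k=0$ (the $t^{-1}$-tail of the series has coefficient $\tfrac{1}{\pi}\sum_k(-1)^kc_k$, and a vanishing such coefficient forces $t^{-2}$-decay, hence integrability).

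The first step is a counterpart of Lemma~\ref{lem:bo1}: \emph{there is a computable sequence $\sq{g_n}_{n\in\N}$ of elementary computable functions in $\CBs_{\pi}^{1}$ with $\lVert\SOp{1}g_n\rVert_{\ell^1}=2^{-n}$ and $\lVert g_n\rVert_{\Bs_{\pi}^{1}}>1$ for all $n$.} To build it, observe that the unboundedness of $\TOp{1}$ is already witnessed on the set $F$ of finite Shannon sampling series with \emph{rational} coefficients lying in $\Bs_{\pi}^{1}$ (i.e.\ satisfying the criterion above): were $\TOp{1}$ bounded on $F$, it would extend continuously to the whole hyperplane (density of $F$, completeness of $\Bs_{\pi}^{1}$) and, by continuity and injectivity of $\SOp{1}$, coincide there with $\TOp{1}$ on $\img(\SOp{1})$, a contradiction; and since for each fixed $N$ the map $c\mapsto\lVert\TOp{1}c\rVert_{\Bs_{\pi}^{1}}/\lVert c\rVert_{\ell^1}$ is continuous in the (nonzero) coefficient vector, rational vectors realise ratios $>M$ for every $M$. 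Now $F$ is effectively enumerable, each element of $F$ is elementary computable, $\lVert\TOp{1}c\rVert_{\Bs_{\pi}^{1}}$ is Turing computable from the (trivial) description of $\TOp{1}c$ by Remark~\ref{rem:CompNorm}, and $\lVert c\rVert_{\ell^1}$ is a positive rational; hence, on input $n$, a dovetailed search through $F$ finds some $\gamma_n\in F$ with $\lVert\TOp{1}\gamma_n\rVert_{\Bs_{\pi}^{1}}>2^n\lVert\gamma_n\rVert_{\ell^1}$, and the search terminates by the previous sentence. Setting $g_n:=\bigl(2^{-n}/\lVert\gamma_n\rVert_{\ell^1}\bigr)\,\TOp{1}\gamma_n$ — again a rational finite Shannon sampling series, hence elementary computable, with $n\mapsto g_n$ algorithmic — one gets $\lVert\SOp{1}g_n\rVert_{\ell^1}=2^{-n}$ and $\lVert g_n\rVert_{\Bs_{\pi}^{1}}=2^{-n}\lVert\TOp{1}\gamma_n\rVert_{\Bs_{\pi}^{1}}/\lVert\gamma_n\rVert_{\ell^1}>1$. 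As in~\eqref{eq:DiscT}, this already re-exhibits the unboundedness of $\TOp{1}$ via $\lVert g_n\rVert_{\Bs_{\pi}^{1}}/\lVert\SOp{1}g_n\rVert_{\ell^1}>2^n$.

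The second step transcribes Lemma~\ref{lem:bo2}: for a recursively enumerable $\Aset\subset\N$ with runtime function $g_\Aset$, put $h(m,k):=\sum_{l=0}^{2^{k+2}}(1-g_\Aset(m,l))$, $g_{m,k}:=g_{h(m,k)}$, and $g_m^*:=\lim_{l\to\infty}g_{m,l}$ if $m\in\Aset$, else $g_m^*:=0$. Using $\lVert\SOp{1}g_n\rVert_{\ell^1}=2^{-n}$, the same case distinction as in the proof of Lemma~\ref{lem:bo2} shows that $\sq{\SOp{1}g_{m,k}}_{k\in\N}$ converges effectively to $\SOp{1}g_m^*$ in $\ell^1$ with modulus $(m,K)\mapsto K$; hence $\mathfrak{X}^*:=\bigl(\sq{\SOp{1}g_{m,k}}_{m,k\in\N},\xi'\bigr)$ is a computable discrete-time description in $\mathscr{X}^{1}$ of the family $\sq{g_m^*}_{m\in\N}$, while $\lVert g_m^*\rVert_{\Bs_{\pi}^{1}}>1$ for $m\in\Aset$ and $\lVert g_m^*\rVert_{\Bs_{\pi}^{1}}=0$ otherwise. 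Finally, I would assume for contradiction that $\TOp{1}$ is computable, witnessed by a Turing machine $\TMTOp{1}:\mathscr{X}^{1}\to\mathscr{F}^{1}$. Then $\TMTOp{1}(\mathfrak{X}^*)$ is a computable continuous-time description of $\sq{g_m^*}_{m\in\N}$, and feeding the resulting descriptions into the Turing machine $\TM_{\mathscr{F}}^{1}$ of Remark~\ref{rem:CompNorm} yields, for each $m$, a standard description of $\lVert g_m^*\rVert_{\Bs_{\pi}^{1}}$, hence a rational $q_m$ with $\lvert q_m-\lVert g_m^*\rVert_{\Bs_{\pi}^{1}}\rvert<\sfrac{1}{4}$; then $m\in\Aset\Leftrightarrow q_m>\sfrac{1}{2}$, so a nonrecursive $\Aset$ would be recursive — the contradiction that proves the theorem.

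I expect the $p=1$ analogue of Lemma~\ref{lem:bo1} to be the real obstacle. In the $p=\infty$ case the witnesses had the transparent closed form $\tfrac{1}{C(N)}\sum_{k=1}^N(-1)^k\sinc(\cdot-k)$, whose integer samples are small in $\ell^\infty$ while the function stays uniformly bounded; for $p=1$ the required trade-off is inverted — the integer samples must be small in $\ell^1$ while the $L^1$-norm stays bounded away from $0$, i.e.\ the function must ``hide'' its mass strictly between the sampling points — and I do not expect an equally explicit formula. The route above sidesteps an explicit construction by drawing the witnesses out of the (structural) unboundedness of $\TOp{1}$ and then making the selection effective by an unbounded search; the points that need genuine care are the justification that this search halts (continuity of the ratio on each finite coordinate block together with density of the rationals there) and the bookkeeping that the resulting $g_n$ and $g_m^*$ are bona fide elementary computable objects equipped with an effective modulus of convergence.
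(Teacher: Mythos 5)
Your steps 2 and 3 (the analogue of Lemma~\ref{lem:bo2} via the runtime function $h(m,k)$, and the final reduction of a nonrecursive recursively enumerable set to the computability of $\TOp{1}$ through Remark~\ref{rem:CompNorm}) are sound and essentially identical to the paper's Lemma~\ref{lem:bo5} and proof of Theorem~\ref{thm:UncompOne}; your dovetailed-search idea for step 1 would also work \emph{if} its premise were available. The genuine gap is exactly that premise: the unboundedness of $\TOp{1}$, which you obtain from the claim that $\img(\SOp{1})$ is a \emph{proper} dense subspace of the hyperplane $H=\{c\in\ell^1:\sum_k(-1)^k c_k=0\}$, citing ``the discussion following the Plancherel--P\'olya theorem.'' That discussion only states that the inclusion $\{\sq{f(k)}_{k\in\Z}:f\in\Bs_{\pi}^1\}\subset\ell^1$ is proper, which is immediate from the necessary condition $\sum_k(-1)^kf(k)=0$ (take any $c$ with nonzero alternating sum) and implies nothing about properness \emph{inside} $H$, non-closedness of $\img(\SOp{1})$, or unboundedness of $\TOp{1}$. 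Density of $\img(\SOp{1})$ in $H$ is easy (your criterion that a finite series lies in $\Bs_{\pi}^1$ iff its alternating coefficient sum vanishes is correct), but properness in $H$ is a nontrivial classical fact, essentially equivalent (via the closed-graph/extension argument you sketch) to the unboundedness you want. So your ``engine'' is assumed rather than proved, and it is precisely the analytic content the paper supplies: Lemma~\ref{lem:bo3} (imported from \cite{boche20g}) gives the functions $q_N(t)=\sinc(t)-\tfrac1N\sum_{k=1}^N\sinc(t+2k)$ with $\lVert \SOp{1}q_N\rVert_{\ell^1}=2$ but $\lVert q_N\rVert_{\Bs_{\pi}^1}\gtrsim\log N$, from which Lemma~\ref{lem:bo4} produces, explicitly and computably, the normalized witnesses you hoped to avoid constructing.

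Two further remarks. First, modulo that missing input, your search-based replacement of Lemma~\ref{lem:bo4} is a legitimate alternative: the extension argument (boundedness on finite rational series in $H$ would propagate to $\img(\SOp{1})$ by density, completeness of $\Bs_{\pi}^1$ and continuity of $\SOp{1}$) is correct, the strict inequality $\lVert\TOp{1}\gamma\rVert_{\Bs_{\pi}^1}>2^n\lVert\gamma\rVert_{\ell^1}$ is semi-decidable since the left-hand side is computable and the right-hand side rational, and your exact normalization $\lVert\SOp{1}g_n\rVert_{\ell^1}=2^{-n}$ even simplifies the effective-convergence bookkeeping in step 2. Second, if you prefer not to reprove a quantitative bound, the honest fix is to cite the classical result (e.g.\ \cite[\emph{Lecture}~21]{levin96_book}) that there exist $c\in\ell^1$ with $\sum_k(-1)^kc_k=0$ that are not sample sequences of any $f\in\Bs_{\pi}^1$; together with density and the closedness of $\TOp{1}$ this yields the unboundedness you need --- but as written, attributing it to the paper's remark on properness in $\ell^1$ does not close the hole.
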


Some of the ideas and proof techniques we have already applied in Section~\ref{sec:UncompInfty} can be extended to the case of \(p=1\),
as we will see in the following. In particular, this concerns the discontinuity of the interpolation operator \(\TOp{1}\).
Again, we start the analysis by establishing several preliminary lemmas. Subsequently, we give a proof of Theorem~\ref{thm:UncompOne}.

\begin{lemma}[\hspace{1sp}\cite{boche20g}]\label{lem:bo3}
	For all \(N\in\N\), the function \(q_N\) defined according to 
	\begin{align*}	
		q_N : \R \rightarrow \R, ~t \mapsto \sinc(t) - \frac{1}{N}\sum_{k=1}^{N} \sinc(t+2k)
	\end{align*}
	satisfies
	\begin{align}	\label{eq:fooI}
					\int_{-\infty}^{\infty} \lvert q_N(t) \rvert \di{t}
					<
					4 + \frac{5}{\pi} \ln(2N +1) 
	\end{align}
	as well as
	\begin{align}	\label{eq:fooII}
					\int_{-\infty}^{\infty} \lvert q_N(t) \rvert \di{t}
					>
					\frac{1}{6 \pi} \ln \left( \frac{N}{2} \right) - \frac{1}{\pi} .
	\end{align}
\end{lemma}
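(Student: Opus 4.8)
\emph{Proof proposal for Lemma~\ref{lem:bo3}.}
The starting point is the elementary identity $\sin\big(\pi(t+2k)\big)=\sin(\pi t)$, which gives the common factorization
\[
  q_N(t)=\frac{\sin(\pi t)}{\pi}\Bigg(\frac{1}{t}-\frac{1}{N}\sum_{k=1}^{N}\frac{1}{t+2k}\Bigg)=:\frac{\sin(\pi t)}{\pi}\,P_N(t).
\]
Although neither $\sinc(\spacedot)$ nor $\sinc(\spacedot+2k)$ lies in $L^1(\R)$ individually, the rational function $P_N$ decays like $t^{-2}$ at infinity (the leading $1/t$ terms cancel) and its simple poles at $0,-2,\dots,-2N$ are annihilated by the zeros of $\sin(\pi t)$; hence $q_N$ is integrable on $\R$, so $q_N\in\Bs_{\pi}^{1}$ and both quantities in \eqref{eq:fooI} and \eqref{eq:fooII} are finite. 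The plan is to prove \eqref{eq:fooI} by a term-by-term triangle estimate and \eqref{eq:fooII} by restricting the integral to a range of $t$ on which the summand $1/t$ of $P_N(t)$ is dominant.

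For \eqref{eq:fooI} I would write $q_N=\frac{1}{N}\sum_{k=1}^{N}\big(\sinc(\spacedot)-\sinc(\spacedot+2k)\big)$ and estimate a single difference. The substitution $u=t+k$ together with $\sin\big(\pi(u\pm k)\big)=(-1)^{k}\sin(\pi u)$ converts it into
\[
  \int_{\R}\big|\sinc(t)-\sinc(t+2k)\big|\di{t}=\frac{4k}{\pi}\int_{0}^{\infty}\frac{|\sin(\pi u)|}{|u^{2}-k^{2}|}\di{u},
\]
with an integrand that is now even in $u$. Splitting $\int_{0}^{\infty}=\int_{0}^{k}+\int_{k}^{\infty}$, recentring each piece at its pole, using $|\sin(\pi u)|\le\pi\,|u\mp k|$ to neutralise the singularity and $|\sin(\pi u)|\le1$ far from it, and evaluating the elementary integrals $\int\frac{\di{v}}{v(v+c)}=\frac{1}{c}\ln\frac{v}{v+c}$, one arrives at a bound of the form $\int_{\R}|\sinc(t)-\sinc(t+2k)|\di{t}\le A+\frac{B}{\pi}\ln(2k+1)$. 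Since $\ln(2k+1)\le\ln(2N+1)$ for $1\le k\le N$, averaging over $k$ yields $\int_{\R}|q_N|\di{t}\le A+\frac{B}{\pi}\ln(2N+1)$, and the precise constants $4$ and $5$ then follow once the one-term estimate is done with a little care --- in particular using $\int_{m}^{m+1}|\sin(\pi t)|\di{t}=\tfrac{2}{\pi}$ instead of the trivial bound $1$ wherever $|\sin(\pi t)|$ is integrated over full periods.

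For \eqref{eq:fooII} I would fix $t$ with $2\le t\le N/(1+\ln N)$. Then $\frac{1}{t+2k}\le\frac{1}{2k}$ for every $k$, hence $\frac{1}{N}\sum_{k=1}^{N}\frac{1}{t+2k}\le\frac{1}{2N}\sum_{k=1}^{N}\frac1k\le\frac{1+\ln N}{2N}\le\frac{1}{2t}$, whence $P_N(t)\ge\frac{1}{t}-\frac{1}{2t}=\frac{1}{2t}>0$ and therefore $|q_N(t)|\ge\frac{|\sin(\pi t)|}{2\pi t}$ on all of $[2,\,N/(1+\ln N)]$. Discarding the rest of the real line and using $\int_{m}^{m+1}|\sin(\pi t)|\di{t}=\frac{2}{\pi}$ on each unit interval gives
\[
  \int_{\R}|q_N(t)|\di{t}\ \ge\ \int_{2}^{N/(1+\ln N)}\frac{|\sin(\pi t)|}{2\pi t}\di{t}\ \ge\ \frac{1}{\pi^{2}}\ln\!\Bigg(\frac{N}{3(1+\ln N)}\Bigg),
\]
up to elementary care with floors and the additive constant $\ln 3$. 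Since $\tfrac{1}{\pi^{2}}>\tfrac{1}{6\pi}$ and $\ln(1+\ln N)$ is of lower order, the right-hand side exceeds $\tfrac{1}{6\pi}\ln(N/2)-\tfrac{1}{\pi}$ for all $N\ge 2\e^{6}$ (for which the interval is non-empty), while for $N<2\e^{6}$ the right-hand side of \eqref{eq:fooII} is already non-positive and nothing is to be shown.

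The only genuinely delicate point is matching the stated constants in \eqref{eq:fooI}: the contribution of each pole $u=\pm k$ must be controlled via $|\sin(\pi u)|\le\pi|u\mp k|$, and the logarithmic growth must be extracted from the period average $\int_{m}^{m+1}|\sin(\pi t)|\di{t}=2/\pi$ rather than from $|\sin|\le1$, since replacing either by the cruder bound costs a multiplicative constant and overshoots $4+\tfrac{5}{\pi}\ln(2N+1)$. A more direct estimate that splits $\R$ at the poles of $P_N$ is also possible, but then one has to exploit cancellation among the ``regular'' parts of $P_N$ on the gaps $(-2(j+1),-2j)$ to avoid an $\mathcal{O}(N)$ bound, which makes the term-by-term route the cleaner one to write down.
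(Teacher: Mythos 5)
Your argument is correct in outline, and the constants you leave to ``a little care'' do in fact work out with ample slack; but it is a genuinely different route from the paper. The paper does not prove the estimates at all: the upper bound \eqref{eq:fooI} is taken verbatim from Eq.~18 of the cited reference, and the lower bound \eqref{eq:fooII} is obtained from Eq.~11 there, namely the pairing estimate \(\int_0^\infty q_N(t)a(t)\di{t} > \tfrac{1}{6\pi}\ln(N/2)-\tfrac1\pi\) for a fixed function \(a\) vanishing on \((-\infty,0)\) with \(|a|\le 1\), followed by the one-line observation \(\int_\R|q_N|\ge\big|\int_\R q_N a\big|\). You instead give a self-contained proof from the factorization \(q_N(t)=\tfrac{\sin(\pi t)}{\pi}P_N(t)\): for \eqref{eq:fooI} a term-by-term bound on \(\int_\R|\sinc(t)-\sinc(t+2k)|\di{t}=\tfrac{4k}{\pi}\int_0^\infty\tfrac{|\sin(\pi u)|}{|u^2-k^2|}\di{u}\) (your identity is correct), which with the period average \(2/\pi\) on the logarithmic ranges yields roughly \(\tfrac{4}{\pi^2}\ln(2k)\) from each of the two tails plus a bounded pole contribution, well under the coefficient \(5/\pi\); and for \eqref{eq:fooII} the pointwise bound \(P_N(t)\ge\tfrac{1}{2t}\) on \([2,\,N/(1+\ln N)]\), which gives \(\int_\R|q_N|\gtrsim\tfrac{1}{\pi^2}\ln\!\big(N/(3(1+\ln N))\big)\); since \(1/\pi^2>1/(6\pi)\) this dominates the stated right-hand side for all \(N\) with \(\ln(N/2)>6\), and for smaller \(N\) the right-hand side of \eqref{eq:fooII} is nonpositive, so the case split you make is legitimate. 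What each approach buys: the paper's version is short and inherits the exact constants from prior work (which is why the lemma is stated with the specific, non-optimized constants \(4\), \(5/\pi\), \(1/(6\pi)\)), while yours is elementary and self-contained and in fact proves a stronger asymptotic lower bound (\(\sim\pi^{-2}\ln N\)) than the one stated; the only cost is that matching the stated constants in \eqref{eq:fooI} requires the careful bookkeeping you flag, which does go through since the target coefficient \(5/\pi\) leaves a comfortable margin over the \(\approx 8/\pi^2\) your refined estimate produces.
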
\begin{proof}
	For the proof of \eqref{eq:fooI}, we refer directly to \cite[Eq.~18]{boche20g}, where
	the inequality is explicitly derived. A proof of \eqref{eq:fooII} can be deduced from \cite[Eq.~11]{boche20g}
	with a few additional steps. According to \cite{boche20g}, the function \(q_N\)
	satisfies
	\begin{align*}
		\int_{0}^{\infty} q_N(t)a(t)\di{t} > \frac{1}{6\pi}\ln\left(\frac{N}{2}\right) -\frac{1}{\pi}
	\end{align*}
	for all \(N\in\N\), where \(a : \R \rightarrow \R\) is the function defined by
	\begin{align}
		t \mapsto 	\begin{cases}	0				&\text{if}~ t < 0,\\
									t\sin(\pi t)	&\text{if}~ 0 \leq t < 1, \\
									\sin(\pi t)		&\text{otherwise}.
					\end{cases}
	\end{align}
	Consequently, for all \(N\in\N\), we have
	\begin{align*}	\frac{1}{6\pi}\ln\left(\frac{N}{2}\right) -\frac{1}{\pi} 
						&\overset{\text{(a)}}{<} \int_{-\infty}^{\infty} q_N(t)a(t)\di{t} \\
						&\leq \left|\int_{-\infty}^{\infty} q_N(t)a(t)\di{t}\right| \\
						&\leq \int_{-\infty}^{\infty} |q_N(t)||a(t)|\di{t} \\
						&\overset{\text{(b)}}{\leq} \int_{-\infty}^{\infty} |q_N(t)|\di{t},
	\end{align*}
	where (a) follows from the fact that \(a(t) = 0\) for all \(t < 0\) and 
	(b) follows from the fact that \(|a(t)| \leq 1\) for all \(t \in \R\). 
\end{proof}

\begin{lemma}\label{lem:bo4}
	There exists a computable sequence $\sq{f_n}_{n \in \N}$ of elementary computable functions such that
	\begin{align*}	\lVert f_n \rVert_{\Bs_{\pi}^{1}} = 1 \quad\text{and}\quad \lVert \SOp{1}f_n \rVert_{\ell^{1}} < \sfrac{1}{n} 
	\end{align*}
	are satisfied for all \(n\in\N\).
\end{lemma}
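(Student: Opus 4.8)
The plan is to reuse the functions $q_N$ furnished by Lemma~\ref{lem:bo3}, which are already finite Shannon sampling series: their alternating/averaging structure forces exactly the cancellation that keeps the $\ell^1$-norm of their integer samples constant while the $\Bs_\pi^1$-norm diverges, so all that remains for the present statement is a normalization. First I would record that, since $\sinc(t+2k)=\sinc(t-(-2k))$, the function $q_N$ is a finite sampling series with nodes in $\{0,-2,-4,\ldots,-2N\}$, and evaluating at the integers gives $q_N(0)=1$, $q_N(-2k)=-1/N$ for $1\le k\le N$, and $q_N(m)=0$ otherwise; hence $\lVert \SOp{1}q_N\rVert_{\ell^1}=2$ for \emph{every} $N\in\N$. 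By the lower bound \eqref{eq:fooII} of Lemma~\ref{lem:bo3}, $\lVert q_N\rVert_{\Bs_\pi^1}=\int_{\R}\lvert q_N(t)\rvert\di{t}\to\infty$, so a suitably normalized $q_N$ will have $\Bs_\pi^1$-norm equal to $1$ and sample $\ell^1$-norm of order $1/\ln N$.

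Concretely, I would pick a computable, strictly increasing sequence $\sq{N(n)}_{n\in\N}$ of natural numbers growing fast enough that $\tfrac{1}{6\pi}\ln(N(n)/2)-\tfrac{1}{\pi}>2n$ for all $n$ — for instance $N(n)=2^{100n}$ is comfortably sufficient — and set
\begin{align*}
	f_n := \frac{1}{\lVert q_{N(n)}\rVert_{\Bs_\pi^1}}\,q_{N(n)}.
\end{align*}
By Lemma~\ref{lem:bo3} each $q_{N(n)}$ genuinely lies in $\Bs_\pi^1$ (the cancellation is what makes it integrable) and is elementary computable with explicit rational coefficients, while $1/\lVert q_{N(n)}\rVert_{\Bs_\pi^1}$ is a computable number by Remark~\ref{rem:CompNorm}, applied to the trivial continuous-time description of $q_{N(n)}$, together with $q_{N(n)}\not\equiv 0$; hence $f_n$ is elementary computable. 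With this choice $\lVert f_n\rVert_{\Bs_\pi^1}=1$ holds by construction, and using linearity of $\SOp{1}$ and the lower bound of Lemma~\ref{lem:bo3},
\begin{align*}
	\lVert \SOp{1}f_n\rVert_{\ell^1}
	=\frac{\lVert \SOp{1}q_{N(n)}\rVert_{\ell^1}}{\lVert q_{N(n)}\rVert_{\Bs_\pi^1}}
	=\frac{2}{\lVert q_{N(n)}\rVert_{\Bs_\pi^1}}
	<\frac{2}{2n}=\frac{1}{n},
\end{align*}
which is exactly what the lemma asks for.

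The step I expect to require the most care is not any of these estimates — the analytic content is entirely inherited from Lemma~\ref{lem:bo3} (hence from \cite{boche20g}) — but the verification that $\sq{f_n}_{n\in\N}$ is a \emph{computable} sequence of elementary computable functions, i.e.\ that one Turing machine outputs the coefficient tuple of $f_n$ uniformly in $n$. Here one checks that the symmetric support interval $\{-2N(n),\ldots,2N(n)\}$ and the rational coefficients of $q_{N(n)}$ are computable from $n$; feeding the corresponding constant continuous-time descriptions into the norm Turing machine $\TM_{\mathscr{F}}^1$ of Remark~\ref{rem:CompNorm} makes $n\mapsto\lVert q_{N(n)}\rVert_{\Bs_\pi^1}$ a computable sequence of reals bounded away from $0$, so $n\mapsto 1/\lVert q_{N(n)}\rVert_{\Bs_\pi^1}$ is computable as well, and the coefficient tuples of $f_n$ follow by scaling. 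Assembling these bookkeeping observations, rather than computing anything, is the real content of the argument.
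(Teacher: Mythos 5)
Your proposal is correct and follows essentially the same route as the paper: the paper likewise sets $f_n := q_{N(n)}/\lVert q_{N(n)}\rVert_{\Bs_{\pi}^{1}}$ for a rapidly growing computable $N(n)$ (there $N(n)=2\cdot 2^{96n+96}$), uses $\lVert \SOp{1}q_N\rVert_{\ell^1}=2$ together with the lower bound \eqref{eq:fooII} to get $\lVert \SOp{1}f_n\rVert_{\ell^1}<1/n$, and notes $\lVert f_n\rVert_{\Bs_{\pi}^{1}}=1$ by construction. The only difference is cosmetic: your choice $N(n)=2^{100n}$ versus the paper's, and you spell out the computability of $n\mapsto \lVert q_{N(n)}\rVert_{\Bs_{\pi}^{1}}$ (via Remark~\ref{rem:CompNorm}), which the paper asserts without elaboration.
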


Recall that Lemma \ref{lem:bo1} implies the discontinuity of \(\TOp{\infty}\), as follows from \eqref{eq:DiscT}. 
In the same manner, Lemma \ref{lem:bo4} implies the discontinuity of \(\TOp{1}\).

\begin{proof}[Proof of Lemma \ref{lem:bo4}]
	For all \(N\in\N\), let \(q_N: \R \rightarrow \R\) be a function as specified by Lemma~\ref{lem:bo3}. Then,
	\(q_N\) is an elementary computable function. Moreover, as follows from \eqref{eq:fooI},
	\(q_N\) has a bounded \(L^1(\R)\)-norm. Consequently, we have \(q_N \in \CBs_{\pi}^{1}\) for all \(N\in\N\). 
	
	For all \(n\in\N\), set \(N(n) := 2\cdot 2^{96n + 96}\) and define the sequence \(\sq{f_n}_{n\in\N}\)
	according to
	\begin{align*}
		f_n := \frac{q_{N(n)}}{\lVert q_{N(n)} \rVert_{\Bs_{\pi}^{1}}}.
	\end{align*}
	Consequently, we have \(\lVert f_n\rVert_{\Bs_\pi^1} = 1\) for all \(n\in\N\). Moreover, 
	\(\sq{f_n}_{n\in\N}\) is a computable sequence of elementary computable functions.
	
	It remains to show that \(\lVert \SOp{1}f_n \rVert_{\ell^{1}} < \sfrac{1}{n}\) holds true for all \(n\in\N\). 
	From the definition of \(q_N\), it follows that for all \(N\in\N\), we have
	\begin{align}
		\lVert \SOp{1}q_N \rVert_{\ell^1} = 1 + \frac{1}{N}\sum_{k=1}^{N} 1 = 2.
	\end{align}
	Hence, for the sequence \(\sq{\SOp{1}f_n}_{n\in\N}\), we obtain
	\begin{align*}
		\lVert \SOp{1}f_n  \rVert_{\ell^1} 
			&< \frac{2}{\frac{1}{6 \pi} \ln \left( \frac{2\cdot 2^{96n + 96}}{2} \right) - \frac{1}{\pi}} \\
			&< \frac{2}{\frac{1}{6 \pi}\frac{1}{2} \log_2 \left( \frac{2\cdot 2^{96n + 96}}{2} \right) - \frac{1}{\pi}} \\
			&= \frac{2\pi}{\frac{1}{12} (96n + 96) - 1}	\\
			&= \frac{2\pi}{8n + 7}		\quad < \frac{1}{n},				
	\end{align*}
	which concludes the proof.
\end{proof}

\begin{lemma}\label{lem:bo5}
	Let $\Aset \subset \N$ be a recursively enumerable set. There exists a sequence
	$\sq{f_m^*}_{m \in \N}$ of elementary computable functions that satisfies the following:
	\begin{enumerate}
		\item The sequence \(\sq{x_m^*}_{m\in\N} = \sq{\SOp{1}f_m^*}_{m \in \N}\) is a computable sequence of sequences in $\ell^1$.
		\item We have \(\lVert f_m^* \rVert_{\Bs_{\pi}^{1}} = \mathds{1}_{A}(m)\) for all \(m\in\N\).
	\end{enumerate}
\end{lemma}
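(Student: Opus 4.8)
The plan is to transcribe the proof of Lemma~\ref{lem:bo2} almost verbatim, with the point evaluation $f\mapsto f(\sfrac{1}{2})$ replaced by the norm $f\mapsto\lVert f\rVert_{\Bs_{\pi}^{1}}$, the sampling operator $\SOp{\infty}$ replaced by $\SOp{1}$, the space $\ell^{\infty}$ replaced by $\ell^{1}$, and the building-block sequence of Lemma~\ref{lem:bo1} replaced by the one of Lemma~\ref{lem:bo4}. Concretely, I fix a runtime function $g_{\Aset}:\N^{2}\to\{0,1\}$ for the recursively enumerable set $\Aset$ and set $h(m,k):=\sum_{l=0}^{2^{k+2}}\bigl(1-g_{\Aset}(m,l)\bigr)$, which is a total recursive function, nondecreasing in $k$. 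Let $\sq{f_{n}}_{n\in\N}$ be the computable sequence of elementary computable functions provided by Lemma~\ref{lem:bo4}, so that $\lVert f_{n}\rVert_{\Bs_{\pi}^{1}}=1$ and $\lVert\SOp{1}f_{n}\rVert_{\ell^{1}}<\sfrac{1}{n}$ for all $n\in\N$, and put $f_{m,k}:=f_{h(m,k)}$. Since $h$ is recursive, $\sq{f_{m,k}}_{m,k\in\N}$ is a computable double sequence of elementary computable functions and $\sq{x_{m,k}}_{m,k\in\N}:=\sq{\SOp{1}f_{m,k}}_{m,k\in\N}$ a computable double sequence of elementary computable sequences. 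The exponent $2^{k+2}$ in $h$ is exactly the one used in Lemma~\ref{lem:bo2}, and it still works because Lemma~\ref{lem:bo4} supplies the same clean bound $\sfrac{1}{n}$ as Lemma~\ref{lem:bo1}.

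Next I perform the same case distinction on whether $m\in\Aset$. If $m\in\Aset$, the runtime function eventually stabilises, hence $h(m,k)$ is constant for all sufficiently large $k$ and $f_{m,k}$ stabilises to some $f_{n}$; I set $f_{m}^{*}:=\lim_{l\to\infty}f_{m,l}$, which is then an elementary computable function with $\lVert f_{m}^{*}\rVert_{\Bs_{\pi}^{1}}=\lVert f_{n}\rVert_{\Bs_{\pi}^{1}}=1$. If $m\notin\Aset$, then $g_{\Aset}(m,l)=0$ for all $l$, so $h(m,k)=2^{k+2}+1$ and $\lVert x_{m,k}\rVert_{\ell^{1}}=\lVert\SOp{1}f_{h(m,k)}\rVert_{\ell^{1}}<1/(2^{k+2}+1)$; I set $f_{m}^{*}:=0$, so $\lVert f_{m}^{*}\rVert_{\Bs_{\pi}^{1}}=0$. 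In either case $\lVert f_{m}^{*}\rVert_{\Bs_{\pi}^{1}}=\mathds{1}_{\Aset}(m)$, which is the second claimed property, and $\sq{f_{m}^{*}}_{m\in\N}$ is a sequence of elementary computable functions.

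For the first property I verify, exactly as in the proof of Lemma~\ref{lem:bo2}, that for every $m$ the sequence $\sq{x_{m,k}}_{k\in\N}$ converges effectively to $x_{m}^{*}:=\SOp{1}f_{m}^{*}$ in $\ell^{1}$ with modulus of convergence $(m,K)\mapsto K$. For $m\in\Aset$ and $K$ at least the smallest stabilisation index $k$, the difference $\lVert x_{m}^{*}-x_{m,K}\rVert_{\ell^{1}}$ vanishes; for $K<k$, the triangle inequality in $\ell^{1}$ together with $\lVert\SOp{1}f_{n}\rVert_{\ell^{1}}<\sfrac{1}{n}$ and the monotonicity bound $h(m,k)\ge h(m,K)=2^{K+2}+1$ gives $\lVert x_{m}^{*}-x_{m,K}\rVert_{\ell^{1}}\le 1/h(m,k)+1/(2^{K+2}+1)\le 2^{-(K+1)}$. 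For $m\notin\Aset$ one has $x_{m}^{*}=0$ and $\lVert x_{m}^{*}-x_{m,K}\rVert_{\ell^{1}}=\lVert\SOp{1}f_{h(m,K)}\rVert_{\ell^{1}}<1/(2^{K+2}+1)<2^{-K}$. Hence $\sq{x_{m}^{*}}_{m\in\N}$ is a computable sequence of sequences in $\ell^{1}$, obtained as the effective limit of the computable double sequence $\sq{x_{m,k}}_{m,k\in\N}$, which establishes the first property.

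I do not expect a genuine obstacle here: the construction and all estimates are a direct copy of those in the proof of Lemma~\ref{lem:bo2}, the only new ingredient being Lemma~\ref{lem:bo4} in place of Lemma~\ref{lem:bo1}. The one conceptual point worth keeping in mind is that the role played in Lemma~\ref{lem:bo2} by the persistent sampling-point value $f_{n}(\sfrac{1}{2})=1$ is now played by the normalisation $\lVert f_{n}\rVert_{\Bs_{\pi}^{1}}=1$: it is the $\Bs_{\pi}^{1}$-norm of $f_{m}^{*}$ that must survive the limit and encode membership in $\Aset$, while the $\ell^{1}$-norm of its samples is simultaneously driven to zero. This is precisely the discontinuity of $\TOp{1}$ exhibited by Lemma~\ref{lem:bo4}, repackaged so that it propagates through a computable family indexed by a nonrecursive set.
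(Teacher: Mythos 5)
Your proposal is correct and coincides with the paper's own proof, which likewise constructs $h(m,k)=\sum_{l=0}^{2^{k+2}}(1-g_{\Aset}(m,l))$, sets $f_{m,k}:=f_{h(m,k)}$ with the sequence from Lemma~\ref{lem:bo4}, and argues exactly as in Lemma~\ref{lem:bo2} with the point evaluation replaced by the $\Bs_{\pi}^{1}$-norm. The only difference is presentational: the paper defers the effective-convergence estimates to the proof of Lemma~\ref{lem:bo2}, whereas you spell them out explicitly.
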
\begin{proof}
	The derivation of the statement is analogous to the proof of Lemma \ref{lem:bo2}. 
	In order to avoid unnecessary redundancy, we restrict ourselves to a concise summary of the necessary steps.
	
	Let $\Aset \subset \N$ be a recursively enumerable set with runtime function
	\(g_\Aset : \N^2 \rightarrow \{0,1\}\). Consider the function \(h : \N^2 \rightarrow \N\)
	defined according to 
	\begin{align*}
		h(m,k) := \sum_{l=0}^{2^{k+2}} (1 - g_\Aset(m,l)).
	\end{align*}
	Further, let \(\sq{f_n}_{n\in\N}\)
	be a computable sequence of elementary computable functions as specified by Lemma \ref{lem:bo4}
	and define
	\begin{align*}
			f_{m,k} := f_{h(m,k)}
	\end{align*}
	for all \(m,k\in\N\). Then, \(\sq{f_{m,k}}_{m,k\in\N}\) is a computable double sequence of elementary computable functions.
	Moreover, the sequence \(\sq{x_{m,k}}_{m,k\in\N} := \sq{\SOp{1}f_{m,k}}_{m,k\in\N}\) is a computable double sequence of 
	elementary computable sequences. We define
	\begin{align*}
		f_m^* = 			\begin{cases}		\lim_{l\to\infty} 	f_{m,l} 	&\quad \text{if}~ m \in \Aset, \\
																	0			&\quad \text{otherwise},
							\end{cases}
	\end{align*}
	for all \(m\in\N\). Hence, \(\sq{f_m^*}_{m\in\N}\) is a sequence of elementary computable functions.
	Furthermore, the sequence \(\sq{x_m^*}_{m\in\N} = \sq{\SOp{1}f_m^*}_{m\in\N}\) is a sequence of elementary computable sequences
	and the sequence \(\sq{x_{m,k}}_{k\in\N}\) converges effectively towards \(x_m^*\) in \(\ell^1\),
	with respect to the recursive modulus of convergence \(\xi' : \N^2 \rightarrow \N,~ (m,K) \mapsto K\). As stated above,
	we refer to the proof of Lemma~\ref{lem:bo2} for details.
	
	It remains to show that the sequence \(\sq{f_m^*}_{m\in\N}\) satisfies \(\lVert f_m^*\rVert_{\Bs_{\pi}^{1}} = \mathds{1}_{A}(m)\) for all \(m\in\N\), which we again prove by case distinction.
	Recall that if \(m\in\Aset\) is satisfied, then \(f_m^*\) is an elementary computable function that such that \(f_n = f_m^*\) holds true for some \(n\in\N\). By assumption,
	we have \(\lVert f_n\rVert_{\Bs_{\pi}^{1}}  = 1\) for all \(n\in\N\). Hence, if \(m\in\Aset\) holds true, we have 
	\begin{align*}
			\lVert f_m^*\rVert_{\Bs_{\pi}^{1}} = \lVert f_n \rVert_{\Bs_{\pi}^{1}}  = 1.
	\end{align*}
	On the other hand, if \(m\in\Aset^{\complement}\) is satisfied, then \(f_m^*\) is the trivial elementary computable function, i.e., we have \(f_m^* = 0\).
	Thus, in this case, 
	\begin{align*}	\lVert f_m^*\rVert_{\Bs_{\pi}^{1}}  = 0
	\end{align*}
	holds true. We conclude that \(\sq{f_m^*}_{m\in\N}\) satisfies \(\lVert f_m^*\rVert_{\Bs_{\pi}^{1}} = \mathds{1}_{A}(m)\) for all \(m\in\N\),
	which concludes the proof.  	
\end{proof}

\begin{proof}[Proof of Theorem \ref{thm:UncompOne}]
	We prove the theorem by contradiction. As the line of reasoning is analogous to the proof of Theorem \ref{thm:UncompOne},
	we again restrict ourselves to a concise summary of the necessary steps.
	
	Let \(\Aset \subset \N\) be a recursively enumerable, nonrecursive subset of the natural numbers.
	Further, let \(\sq{f_m^*}_{m\in\N}\) be a sequence of elementary computable functions as specified by Lemma \ref{lem:bo5}. Define
	\(\sq{x_m^*}_{m\in\N} := \sq{\SOp{1}f_m^*}_{m\in\N}\) and consider a computable double sequence \(\sq{x_{m,k}}_{m,k\in\N}\) of elementary computable sequences
	such that for all \(m\in\N\), the sequence \(\sq{x_{m,k}}_{k\in\N}\) converges effectively towards \(x_m^*\) in \(\ell^1\),
	with respect to the recursive modulus of convergence \(\xi' : \N^2 \rightarrow \N,~ (m,K) \mapsto \xi'(m,K)\). 
	
	Assume now that 
	\(\TOp{1}\) is computable in the sense of Section \ref{sec:Conv}. Then,
	there exists a Turing machine \(\TMTOp{1}\) such that the pair
	\begin{align*}
		\mathfrak{F}^* = \big(\sq{f_{m,k}}_{m,k\in\N},\xi\big) := \TMTOp{1}\big(\sq{x_{m,k}}_{m,k\in\N}, \xi'\big)
	\end{align*}
	is a computable continuous-time description of the sequence \(\sq{f_m^*}_{m\in\N}\) in \(\CBs_{\pi}^{1}\). Further, we have 
	\begin{align*}
		\Big| \lVert f_m^*\rVert_{\Bs_\pi^1} - \lVert f_{m,k} \rVert_{\Bs_\pi^1} \Big| < \frac{1}{2^K}
	\end{align*}
	for all \(m,k,K\in\N\) that satisfy \(k \geq \xi(m,K)\), as follows from the triangle inequality. Moreover, the sequence \(\sq{ \lVert f_{m,k} \rVert_{\Bs_\pi^1}}_{m,k\in\N}\)
	is a computable double sequence of computable complex numbers, see Remark~\ref{rem:CompNorm}. Consequently, there exists computable double sequences \(\sq{r_{m,k}}_{m,k\in\N}\),
	\(\sq{s_{m,k}}_{m,k\in\N}\), of rational numbers, such that
	\begin{align*}
		\Big| \lVert f_m^* \rVert_{\Bs_\pi^1}  - (r_{m,k} +\iu s_{m,k}) \Big| < \frac{1}{2^K}
	\end{align*}
	holds true for all \(m,k,K\in\N\) that satisfy \(k \geq \xi(m,K)\). 
	
	Recall that by construction, we have \( \lVert f_m^* \rVert_{\Bs_\pi^1} = \mathds{1}_{\Aset}(m)\) for all \(m\in\N\). It follows that
	\begin{align*}
		m \in \Aset 	\quad \Leftrightarrow \quad 	\big(1 - r_{m,\xi(m,1)}\big)^2 + \big(s_{m,\xi(m,1)}\big)^2 < \frac{1}{4} 
	\end{align*}
	holds true for all \(m\in\N\). In particular, we have
	\begin{align*}
		\Aset = \Bigg\{m \in \N : \underbrace{\big(1 - r_{m,\xi(m,1)}\big)^2 + \big(s_{m,\xi(m,1)}\big)^2 < \frac{1}{4}}_{\equiv: P(m)} \Bigg\}.
	\end{align*}
	Since both \(\sq{r_{m,k}}_{m,k\in\N}\) and \(\sq{s_{m,k}}_{m,k\in\N}\) are computable sequences of rational numbers, the predicate \(P\)
	can be evaluated algorithmically, making \(\Aset\) a recursive set. The latter is a direct contradiction to the assumption, which concludes the proof.
\end{proof}

Furthermore, we obtain the following from the proof of Theorem~\ref{thm:UncompOne}:

\begin{corollary}\label{coro:BIBO}
	There does \emph{not} exist a Turing machine\linebreak \(\TM : \mathscr{X}^1 \rightarrow \stdR\) 
	such that 
	\begin{align}
		\big[\TM(\mathfrak{X})\big]_{\stdR} \equiv \lVert f \rVert_{\Bs_{\pi}^{1}} 
	\end{align}
	holds true for all \(f \in \CBs_{\pi}^{1}\), \(\mathfrak{X} \in \mathscr{X}^1\), that satisfy
	\(\SOp{1}f\equiv [\mathfrak{X}]_{\mathscr{X}}^1\).
\end{corollary}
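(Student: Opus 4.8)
The plan is to argue by contradiction, reusing the construction underlying the proof of Theorem~\ref{thm:UncompOne}. Suppose such a Turing machine \(\TM : \mathscr{X}^1 \rightarrow \stdR\) existed. Fix a recursively enumerable, nonrecursive set \(\Aset \subset \N\) (for instance, the halting set), and let \(\sq{f_m^*}_{m\in\N}\) be the sequence of elementary computable functions provided by Lemma~\ref{lem:bo5}, together with the computable double sequence \(\sq{x_{m,k}}_{m,k\in\N}\) of elementary computable sequences that, for each fixed \(m\), converges effectively towards \(x_m^* = \SOp{1}f_m^*\) in \(\ell^1\) with respect to the modulus of convergence \((m,K)\mapsto K\).

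For each \(m\in\N\), set \(\mathfrak{X}_m^* := \big(\sq{x_{m,k}}_{k\in\N},\, K \mapsto K\big)\). By Lemma~\ref{lem:bo5}, \(\mathfrak{X}_m^* \in \mathscr{X}^1\) and \(\SOp{1}f_m^* \equiv [\mathfrak{X}_m^*]_{\mathscr{X}}^1\); moreover, since \(\sq{x_{m,k}}_{m,k\in\N}\) is a \emph{computable} double sequence, the assignment \(m \mapsto \mathfrak{X}_m^*\) is itself computable. Feeding \(\mathfrak{X}_m^*\) into \(\TM\) therefore yields, uniformly in \(m\), a standard description \(\TM(\mathfrak{X}_m^*) = \big(\sq{r_{m,n}}_{n\in\N}, \eta_m\big) \in \stdR\) of the real number \(\lVert f_m^* \rVert_{\Bs_{\pi}^{1}}\), which by the second property in Lemma~\ref{lem:bo5} equals \(\mathds{1}_{\Aset}(m) \in \{0,1\}\).

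From this standard description we read off the rational number \(r_{m,\eta_m(2)}\), which satisfies \(\big| r_{m,\eta_m(2)} - \mathds{1}_{\Aset}(m)\big| \leq \sfrac{1}{4}\). Since this is carried out by a single algorithm uniformly in \(m\), the predicate ``\(r_{m,\eta_m(2)} > \sfrac{1}{2}\)'' is recursive, and it holds if and only if \(m \in \Aset\). Hence \(\Aset\) would be recursive, contradicting its choice, which proves the corollary.

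The only point requiring care is the uniformity claim: one must verify that composing the fixed machine \(\TM\) with the computable map \(m \mapsto \mathfrak{X}_m^*\) indeed produces an algorithm that, on input \(m\), outputs \(r_{m,\eta_m(2)}\). This is immediate once one notes that each \(\mathfrak{X}_m^*\) is a genuine element of \(\mathscr{X}^1\) — that is, that \(\sq{x_{m,k}}_{k\in\N}\) is a computable sequence of elementary computable sequences converging effectively in \(\ell^1\) — which is precisely the first property asserted in Lemma~\ref{lem:bo5}. Everything else is verbatim the final paragraph of the proof of Theorem~\ref{thm:UncompOne}, with the hypothetical machine \(\TMTOp{1}\) there replaced by the hypothetical norm evaluator \(\TM\) here.
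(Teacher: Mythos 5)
Your proposal is correct and follows essentially the same route the paper intends: it reuses the construction of Lemma~\ref{lem:bo5} (the computable double sequence \(\sq{x_{m,k}}_{m,k\in\N}\) with modulus \((m,K)\mapsto K\) and the indicator property \(\lVert f_m^*\rVert_{\Bs_{\pi}^{1}}=\mathds{1}_{\Aset}(m)\)) and replaces the hypothetical interpolation machine \(\TMTOp{1}\) in the proof of Theorem~\ref{thm:UncompOne} by the hypothetical norm evaluator \(\TM\), extracting a rational approximation of accuracy \(\sfrac{1}{4}\) to decide membership in \(\Aset\), which yields the desired contradiction. The uniformity point you flag is handled exactly as in the paper, so no gap remains.
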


In Section~\ref{sec:NandBsignals}, we introduced the BIBO-norm \(\lVert H \rVert_{\mathrm{BIBO}}\) of
an LTI system \(H\) with impulse response \(h\).
According to Corollary~\ref{coro:BIBO}, the BIBO-norm of \(H\) 
cannot always be computed based on a discrete-time description of \(h\).

\section{Computability of the Interpolation Operator for \MakeLowercase{\(1 < p < \infty\)}}\label{sec:CompOfT}
	So far, we have considered the limit cases of \(p = 1\) and \(p = \infty\) and shown the uncomputability
	of the associated interpolation operators by exploiting their discontinuity. For \(p\in\Rc\), \(1 < p < \infty\), the right hand side of the
	Plancherel--P\'olya inequality ensures the continuity of the interpolation operator \(\TOp{p} : \img\big(\SOp{p}\big) \rightarrow \Bs_{\pi}^{p}\),
	which we will make use of subsequently in order to show its computability.
	
	\begin{theorem}\label{thm:CompTNormal}
		The interpolation operator \(\TOp{p} : \img\big(\SOp{p}\big) \rightarrow \Bs_{\pi}^{p}\) is computable for \(p\in\Rc\), \(1 < p < \infty\).
	\end{theorem}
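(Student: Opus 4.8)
The plan is to use the right-hand side of the Plancherel--P\'olya inequality, which for $1<p<\infty$ bounds the $\Bs_{\pi}^{p}$-norm of a bandlimited signal by the $\ell^p$-norm of its sampling sequence. This inequality turns an $\ell^p$-modulus of convergence into a $\Bs_{\pi}^{p}$-modulus, so that converting a discrete-time description into a continuous-time description becomes essentially free: one simply takes the finite Shannon sampling series with the given coefficients.

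Concretely, fix $p\in\Rc$ with $1<p<\infty$ and let $\mathfrak{X}=(\sq{x_n}_{n\in\N},\xi')$ be a discrete-time description with $[\mathfrak{X}]_{\mathscr{X}}^p\equiv x$ and $\SOp{p}f\equiv x$ for some $f\in\CBs_{\pi}^{p}$; put $f:=\TOp{p}x$. Each $x_n$ is elementary computable, supported on an interval $\mathcal{I}(n)=\{-L(n),\ldots,L(n)\}$ with computable entries $\sq{c_{n,k}}_{k\in\mathcal{I}(n)}$. Mirroring the computation carried out for $\SOp{p}$ in Section~\ref{sec:Conv}, I would set
\[
	f_n(z):=\sum_{k\in\mathcal{I}(n)}c_{n,k}\,\sinc(z-k),\qquad z\in\C,
\]
which is elementary computable in the sense of Definition~\ref{de:elementary-computable}; reading off the coefficient tuples from a description of $\sq{x_n}_{n\in\N}$ produces algorithmically a description of the computable sequence $\sq{f_n}_{n\in\N}$. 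Since $\sinc(j-k)=\delta_{jk}$ we have $\SOp{p}f_n=x_n$, and---this is precisely where $1<p$ is used---$\sinc\in L^p(\R)$ (which fails for $p=1$), so $f_n\in\Bs_{\pi}^{p}$ and hence $f-f_n\in\Bs_{\pi}^{p}$ with $\SOp{p}(f-f_n)=x-x_n\in\ell^p$.

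Applying the Plancherel--P\'olya inequality to the difference $f-f_n$ gives
\[
	\lVert f-f_n\rVert_{\Bs_{\pi}^{p}}^{p}
	\leq C_R\sum_{k=-\infty}^{\infty}\lvert x(k)-x_n(k)\rvert^{p}
	= C_R\,\lVert x-x_n\rVert_{\ell^p}^{p},
\]
hence $\lVert f-f_n\rVert_{\Bs_{\pi}^{p}}\leq 2^{D}\lVert x-x_n\rVert_{\ell^p}$ for any fixed $D\in\N$ with $2^{D}\geq C_R(p)^{1/p}$. Defining the recursive function $\xi:\N\to\N$, $M\mapsto\xi'(M+D)$, one obtains $\lVert f-f_n\rVert_{\Bs_{\pi}^{p}}\leq 2^{D}\cdot 2^{-(M+D)}=2^{-M}$ for all $n\geq\xi(M)$. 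Thus $(\sq{f_n}_{n\in\N},\xi)\in\mathscr{F}^p$ is a computable continuous-time description with $[(\sq{f_n}_{n\in\N},\xi)]_{\mathscr{F}}^p\equiv f=\TOp{p}x$, and the Turing machine $\TMTOp{p}$ that on input $(\sq{x_n}_{n\in\N},\xi')$ extracts the coefficient tuples, emits the description of $\sq{f_n}_{n\in\N}$, and replaces $\xi'$ by $M\mapsto\xi'(M+D)$ witnesses the computability of $\TOp{p}$.

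I do not expect a genuine obstacle: the argument is bookkeeping around the two-sided Plancherel--P\'olya inequality, whose right inequality supplies both the continuity of $\TOp{p}$ and the explicit constant needed to shift the modulus of convergence. The only mild subtlety is that $C_R(p)$ need not be produced effectively from a description of $p$; this is harmless because the notion of computability of $\TOp{p}$ from Section~\ref{sec:Conv} allows $\TMTOp{p}$ to depend on $p$, so a suitable integer $D$ bounding $C_R(p)^{1/p}$ may simply be hard-wired into it. (Should a version uniform in $p$ be wanted, one would additionally invoke an effective upper bound for $C_R(p)$, which is extractable from the standard proof of the Plancherel--P\'olya theorem.)
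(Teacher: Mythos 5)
Your proposal is correct and follows essentially the same route as the paper's proof: interpolate each elementary computable sequence by its finite Shannon sampling series and use the right-hand Plancherel--P\'olya inequality to convert the $\ell^p$-modulus of convergence $\xi'$ into a $\Bs_{\pi}^{p}$-modulus by a constant shift. Your handling of the constant (using $C_R(p)^{1/p}$ and noting it may be hard-wired into $\TMTOp{p}$) is in fact slightly more careful than the paper's bookkeeping, but the argument is the same.
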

	
	The proof of Theorem \ref{thm:CompTNormal} follows the line of reasoning used in Section \ref{sec:Conv} 
	to derive the computability of the sampling operator \(\SOp{p}\) for \(1 < p < \infty\).

	\begin{proof}[Proof of Theorem \ref{thm:CompTNormal}]
		Let \(f\) be a function in \(\Cell_{\pi}^p\), \(p\in\Rc\), \(1 < p < \infty\),
		and assume that \(\mathfrak{X} = (\sq{x_n}_{n\in\N},\xi')\) is a computable discrete-time description thereof. That is, we have
		\begin{align*}
			\lVert \SOp{p}f - x_n\rVert_{\ell^{p}} < \frac{1}{2^M}
		\end{align*}
		for all \(n,M\in\N\) that satisfy \(n\geq \xi'(M)\).
		
		By definition, \(\sq{x_n}_{n\in\N}\) is characterized by a computable sequence \(\sq{\sq{c_{n,k}}_{k \in \mathcal{I}(n)}}_{n\in\N}\) of tuples of 
		computable complex numbers, where \(\mathcal{I}(n) = \{-L(n), \ldots, L(n)\}\) is a computable interval
		of natural numbers, that satisfy
		\begin{equation*}
		  x_n(m)
		  =
		  \sum_{k \in \mathcal{I}(n)} c_{n,k}\cdot \mathds{1}_{k}(m)
		\end{equation*}
		for all \(n\in\N, m\in\Z\). Hence, the sequence \(\sq{f_n}_{n\in\N} := \sq{\TOp{p}x_n}_{n\in\N}\) satisfies
		\begin{align*}
			f_n(z) = \sum_{k \in \mathcal{I}(n)} c_{n,k}\cdot \sinc(z - k)
		\end{align*}
		for all \(n\in\N\) and all \(z\in\C\), and is thus a computable sequence of elementary computable functions.
		
		In view of the Plancherel--P\'olya theorem, consider \(C_R\)
		with \(\Col{const:log-CR} := \lceil \log_2 C_R \rceil\). We have 
		\begin{align*}
		  2^{\Cor{const:log-CR}-M} 
				&\geq	C_R \lVert \SOp{p}f-x_n \rVert_{\ell^p} \\
				&=	C_R \lVert \SOp{p}(f - \TOp{p}x_n) \rVert_{\ell^p} \\
				&\geq \lVert f - \TOp{p}x_n \rVert_{\Bs_\pi^p} \\
				&= \lVert f - f_n \rVert_{\Bs_\pi^p}
		\end{align*}
		for all \(n,M \in \N\) that satisfy \(n \geq \xi'(M)\). 
		Define the function \(\xi :~ \N \rightarrow \N,~ M \mapsto \xi'(\Cor{const:log-CR} + M)\)
		and observe that \(\xi\) is recursive. We have
		\begin{align*}
			\lVert f - f_n \rVert_{\Bs_\pi^p} < 2^{-M}
		\end{align*}
		for all \(n,M\in\N\) that satisfy \(n\geq \xi(M)\). Hence, the sequence \(\sq{f_n}_{n\in\N}\) converges effectively
		towards \(f\in\CBs_\pi^p\), with respect to the recursive modulus of convergence \(\xi\). 
		In other words, \(\mathfrak{F} = (\sq{f_n}_{n\in\N},\xi)\)
		is a computable continuous-time description of \(f\).
	\end{proof}

\section{Interpretation in the Context of Digital Twins}\label{sec:InterpDT}
	In Section~\ref{sec:MotivDigitTwin}, we have already motivated the problem of converting continuous-time descriptions 
	into discrete-time descriptions and vice versa in the context of digital twin technology. Having established our results in 
	a formal manner, we want to come back in this section to the implications of our results for digital twin technology. For this purpose, 
	the case of \(p\in\{1,\infty\}\) is of particular interest, since it constitutes two relevant
	quantities for signal processing:
	\begin{itemize}
		\item In the space \(\Bs_{\pi}^{\infty}\), the mapping \(f \mapsto \lVert f\rVert_{\Bs_{\pi}^{\infty}}\) yields the
			peak value of the signal \(f\).
		\item In the space \(\Bs_{\pi}^{1}\), the mapping \(f \mapsto \lVert f\rVert_{\Bs_{\pi}^{1}}\) yields the
			BIBO-norm of LTI systems with impulse response \(f\).
	\end{itemize}
	In both spaces, we considered two different types of machine-readable signal descriptions:
	\begin{itemize}
		\item A continuous-time description \(\mathfrak{F} = (\sq{f_n}_{n\in\N},\xi) \in \mathscr{F}^p\) of some signal \(f\in\CBs_{\pi}^{p}, p\in\{1,\infty\}\),
			consisting of a computable sequence \(\sq{f_n}_{n\in\N}\) of elementary computable signals in \(\CBs_{\pi}^{p}\) and a recursive modulus of convergence
			\(\xi:\N\rightarrow\N\) such that \(\sq{f_n}_{n\in\N}\) converges effectively towards \(f\) in \(\CBs_{\pi}^{p}\) with respect to \(\xi\).
		\item A discrete-time description \(\mathfrak{X} = (\sq{x_n}_{n\in\N},\xi') \in \mathscr{X}^p\) of some signal \(f\in\CBs_{\pi}^{p}, p\in\{1,\infty\}\),
			consisting of a computable sequence \(\sq{x_n}_{n\in\N}\) of elementary computable sequences in \(\Cell^{p}\) and a recursive modulus of convergence
			\(\xi':\N\rightarrow\N\) such that \(\sq{x_n}_{n\in\N}\) converges effectively towards \(\sq{f(k)}_{k\in\Z}\) in \(\Cell^{p}\) with respect to \(\xi'\).
	\end{itemize}
	Both \(\mathfrak{F}\) and \(\mathfrak{X}\) are digital twins of the signal \(f\). According to a generalization of the Plancherel--P\'olya theorem 
	(c.f. Section~\ref{sec:NandBsignals}), \(\mathfrak{F}\) and \(\mathfrak{X}\) are analytically equivalent, in the sense that they characterize \(f\) uniquely. 
	However, we have seen that \(\mathfrak{F}\) and \(\mathfrak{X}\) are not algorithmically equivalent: given the discrete-time description \(\mathfrak{X}\),
	it is generally not possible to compute a continuous-time description \(\mathfrak{F}\) of the same signal.
	
	In signal processing and information theory, the discrete-time description \(\mathscr{X}^p\) is the standard way of uniquely characterizing signals \(f\in\CBs_{\pi}^p\). For \(f\in\CBs_{\pi}^{1}\) fixed,
	the number \(\lVert f \rVert_{\Bs_{\pi}^{1}}\) is computable. Hence, there always exists an algorithm that computes the number  
	\(\lVert f \rVert_{\Bs_{\pi}^{1}}\) as such. However, there does \emph{not} exist an algorithm that computes \(\lVert f \rVert_{\Bs_{\pi}^{1}}\)
	given a discrete-time description \(\mathfrak{X}\in\mathscr{X}^1\) of \(f\). In other words, \(\mathfrak{X}\in\mathscr{X}^1\) is not a ``feasible input'' for computing 
	\(\lVert f \rVert_{\Bs_{\pi}^{1}}\). The same holds true for the space \(\CBs_{\pi}^{\infty}\) and the associated norm.
	For a continuous-time description \(\mathscr{F}^p\), on the other hand, this is always possible, which shows again that both descriptions are not 
	algorithmically equivalent. This finding is especially important in view of the fact that \(\mathscr{X}^p\) is the standard description of 
	signals in digital signal processing.
	
	In general, digital twins are, by very nature, the only way to encode the physical world into a machine-readable manner. 
	In this context, it is necessary for the digital twin to be able to represent the essential properties of the physical system.
	The digital twin hence describes the suitable description of the analog world as an input for the digital computer. 
	
\section{Conclusion}\label{sec:concl}

Within the scope of this article, we have, in the context of digital twinning, 
considered the task of converting different digital descriptions of analog, bandlimited signals into 
each other. We have shown that the computability of quantities associated to the real world, analog system
based on its digital counterpart depends crucially on the choice of the \emph{proper} language to describe 
the analog world. Furthermore, we have shown that different languages (and hence different digital twins) are not
necessarily algorithmically equivalent, i.e., they cannot be converted into each other in an algorithmic manner.
		
To the best of the authors' knowledge, this work is the first to consider the problem of
identifying ``suitable'' digital twin descriptions of analog systems. Even though two different machine-readable
languages may equally be able to uniquely describe the elements of a set of real-world objects,
they may still not be equally powerful in an algorithmic sense. That is, not all information about the real-world object
may be algorithmically ``accessible'' from a description of that object in a certain language. Therefore, it is necessary to 
select the suitable language for each implementation of a digital twin carefully.
This is particularly crucial for future applications of cyber-physical systems in healthcare and robotics.
		
Modern applications of digital twin technologies require the strict adherence to specifications of technological \emph{trustworthiness},
a collective term for the principles of privacy, secrecy, safety, resilience, availability, accountability, authenticity, device independence, 
reliability and integrity~\cite{FeBo22}. In~\cite{BoBoMo22}, we have already hinted towards the possibility that different types 
of digital descriptions of analog signals may play a role in the context of digital twinning in metaverse applications, a technology 
that is to be standardized in upcoming 5G releases. In this work, we presented theoretical findings on how the algorithmic non-equivalence of such 
descriptions may indeed compromise a digital twin's trustworthiness. In particular, our results affect reliability and integrity. The latter refers to
a technological system's ability to correctly and reliably operate within a specified margin of service, including (in the context of digital twins)
the correct recording of the state of the system's physical components, and the ability to detect faulty modes of operation. We have seen
that a digital twin of a bandlimited system in the language \(\mathscr{F}^p\) can provide this type of integrity, while it is violated
with respect to the language \(\mathscr{X}^p\): A digital twin of a bandlimited system in the language \(\mathscr{X}^p\)  
cannot properly capture the relevant parameters of its analog counterpart in all cases. Consequently, for general applications of digital twin technology,
it is first necessary to prove that the language chosen to implement a digital twin is in theory able to solve the relevant task. 
		
One of the main results of digital computability theory is the proof of the existence of universal machines. 
That is, for every digital computer and every universal machine, there exists a compiler that translates 
every program for the digital computer into an equivalent program for the universal machine. 
Thus, regarding input and output behavior, a universal machine can simulate any other digital computer.
For computing on bit strings, the interpretation of universal machines is imminent, 
since all possible computers operate on the same set. Conversely, for more complex problems, 
the set of admissible machines must first be restricted. In our case, we can interpret
the spaces \(\CBs_{\pi}^{p}\) and \(\Cell_{\pi}^{p}\) for \(p\in\Rc\), \(1 \leq p \leq \infty\), as abstract sets.
Then, a Turing machine that computes the operators \(\SOp{p}\) or \(\TOp{p}\) in the sense of Section~\ref{sec:Conv}
may be regarded as a compiler.  For \(p\in\Rc\), \(1 \leq p \leq \infty\), each sequence of sampling values \(\sq{f(k)}_{k\in\N} \in \Cell^{p}\)
uniquely characterizes a signal \(f\in\CBs_{\pi}^{p}\), but a corresponding continuous-time description
cannot be determined algorithmically in the case of \(p\in\{1,\infty\}\), i.e. compiler problem is not solvable. 
The transformation of descriptions in different languages into one another is therefore generally a ``creative'' process that cannot be automated.  
Yet, as stated above, the ability to automatically detect faulty modes of operation can be an essential part of the integrity
requirements placed upon a technological system. For details, we again refer to~\cite{FeBo22}. In order to meet this requirement, the choice of the proper 
machine-readable description of the physical system components is crucial.

\bibliographystyle{elsarticle-num-names}
\bibliography{noauthorhyphen,IEEEfull,names_long,Compilerproblem,sampling_transit_export}

\end{document}